%%
%% Copyright 2007, 2008, 2009 Elsevier Ltd
%%
%% This file is part of the 'Elsarticle Bundle'.
%% ---------------------------------------------
%%
%% It may be distributed under the conditions of the LaTeX Project Public
%% License, either version 1.2 of this license or (at your option) any
%% later version.  The latest version of this license is in
%%    http://www.latex-project.org/lppl.txt
%% and version 1.2 or later is part of all distributions of LaTeX
%% version 1999/12/01 or later.
%%
%% The list of all files belonging to the 'Elsarticle Bundle' is
%% given in the file `manifest.txt'.
%%

%% Template article for Elsevier's document class `elsarticle'
%% with numbered style bibliographic references
%% SP 2008/03/01

%%\documentclass[pre,12pt]{elsarticle}

%% Use the option review to obtain double line spacing
\documentclass[final,3p, review,11pt]{elsarticle}

%% Use the options 1p,twocolumn; 3p; 3p,twocolumn; 5p; or 5p,twocolumn
%% for a journal layout:
%% \documentclass[final,1p,times]{elsarticle}
%% \documentclass[final,1p,times,twocolumn]{elsarticle}
%% \documentclass[final,3p,times]{elsarticle}
%% \documentclass[final,3p,times,twocolumn]{elsarticle}
%%  \documentclass[final,5p]{elsarticle}
%%\documentclass[final,5p,twocolumn]{elsarticle}

%% For including figures, graphicx.sty has been loaded in
%% elsarticle.cls. If you prefer to use the old commands
%% please give \usepackage{epsfig}

%% The amssymb package provides various useful mathematical symbols

%% The amsthm package provides extended theorem environments
%% \usepackage{amsthm}

%% The lineno packages adds line numbers. Start line numbering with
%% \begin{linenumbers}, end it with \end{linenumbers}. Or switch it on
%% for the whole article with \linenumbers.
%% \usepackage{lineno}

\usepackage{amssymb}
\usepackage{natbib}
\usepackage{graphicx}
\usepackage[usenames,dvipsnames]{color}
\usepackage{amsmath}
\usepackage{bm}
\usepackage{multirow}
\usepackage{hyperref}
\newcommand{\Bezier}{B\'{e}zier\ }
\newcommand{\D}{\displaystyle}

\setlength{\parindent}{0.5pc}
\newlength\figwidth
\setlength\figwidth{0.4\textwidth}%
\setlength{\parindent}{.1 \figwidth}

%% The amsthm package provides extended theorem environments
\usepackage{amsthm}

\renewcommand\figurename{Fig. }
\makeatletter
\renewcommand{\fnum@figure}[1]{\textbf{\figurename\thefigure.} }
\makeatother

\hypersetup{colorlinks=true}

\newtheorem{lemma}{Lemma}[section]

\newtheorem{proposition}[lemma]{Proposition}

\def\pb{\textbf{\emph{P}}}

\usepackage[switch]{lineno}

\journal{CMAME}

\begin{document}

\begin{frontmatter}

%% Title, authors and addresses

%% use the tnoteref command within \title for footnotes;
%% use the tnotetext command for theassociated footnote;
%% use the fnref command within \author or \address for footnotes;
%% use the fntext command for theassociated footnote;
%% use the corref command within \author for corresponding author footnotes;
%% use the cortext command for theassociated footnote;
%% use the ead command for the email address,
%% and the form \ead[url] for the home page:
%% \title{Title\tnoteref{label1}}
%% \tnotetext[label1]{}
%% \author{Name\corref{cor1}\fnref{label2}}
%% \ead{email address}
%% \ead[url]{home page}
%% \fntext[label2]{}
%% \cortext[cor1]{}
%% \address{Address\fnref{label3}}
%% \fntext[label3]{}

\title{Constructing IGA-suitable planar  parameterization from complex
  CAD boundary by domain partition and global/local optimization}

%% use optional labels to link authors explicitly to addresses:
%% \author[label1,label2]{}
%% \address[label1]{}
%% \address[label2]{}

\author[label1,B]{Gang Xu}
\ead{gxu@hdu.edu.cn,xugangzju@gmail.com}
\author[label2]{Ming Li}
\author[label3]{Bernard Mourrain}
\author[label4]{Timon Rabczuk}
\author[label1]{Jinlan Xu}
\author[label5]{St\'ephane P.A. Bordas}
\address[label1]{School of Computer Science and Technology, Hangzhou
  Dianzi University, Hangzhou 310018, China}
 \address[B]{Key Laboratory of Complex Systems Modeling and Simulation,
      Ministry of Education, Hangzhou 310018, China}
\address[label2]{State Key Laboratory of CAD $\&$CG, Zhejiang
  University, Hangzhou 310058, P.R China}
\address[label3]{INRIA Sophia-Antipolis, 2004 Route des Lucioles,
  06902 Cedex, France}
\address[label4]{Institute of Structural Mechanics, Bauhaus-University Weimar, Marienstr. 15, D-99423 Weimar, Germany}
\address[label5]{Research Unit in Engineering, University of Luxembourg,
   Luxembourg}

\begin{abstract}
%% Text of abstract
In this paper, we propose a general framework for constructing IGA-suitable planar B-spline parameterizations from given complex CAD boundaries consisting of a set of B-spline curves. Instead of forming the computational domain by a simple boundary,  planar domains with high genus and more complex boundary curves are considered. Firstly, some pre-processing operations including \Bezier extraction and subdivision are performed on each boundary
curve in order to generate a high-quality planar parameterization; then a robust planar domain partition framework is proposed to construct
high-quality patch-meshing results with few singularities from the discrete boundary formed by connecting the end points of the resulting boundary segments. After the topology information generation of quadrilateral decomposition, the optimal placement of interior \Bezier curves corresponding to the interior edges of the
quadrangulation is constructed by a global optimization method to achieve a patch-partition with high quality. Finally, after the imposition of
$C^1/G^1$-continuity constraints on the interface of neighboring
\Bezier patches with respect to each quad in the quadrangulation, the high-quality \Bezier
patch parameterization is obtained by a $C^1$-constrained local optimization method to
achieve uniform and orthogonal iso-parametric structures while keeping
the continuity conditions between patches. The efficiency and robustness of the
proposed method are demonstrated by several examples which are compared to results obtained by the skeleton-based parameterization approach.
\end{abstract}

\begin{keyword}
isogeometric analysis \sep analysis-suitable planar parameterization;
complex CAD boundary \sep domain partition \sep global/local optimization
\end{keyword}

\end{frontmatter}

%% \linenumbers

%% main text
\section{Introduction}
\label{sec:introduction}

In isogeometric analysis (IGA) \cite{hughes:CMAME 2005}, the parameterization of the computational domain corresponds to the mesh
generation in finite element analysis. Consequently, it has a great effect on
the subsequent analysis accuracy and efficiency \cite{cohen:analysis
  aware,xu:gmp10,Pilger:2013Bounding}. As proposed in
\cite{xu:cmame2011}, an analysis-suitable parameterization of the
computational domain should satisfy three
requirements: 1) it has no
self-intersections, i.e, the mapping from the parametric domain to the
physical domain should be injective; 2) the iso-parametric
elements should be as uniform as possible; 3) the iso-parametric
structure should be as orthogonal as possible. From the given
boundary information with spline representations,  several approaches such
as constrained optimization methods \cite{xu:spm2012}, variational
harmonic methods \cite{xu:jcp2013}, divide-and-conquer techniques
\cite{qian:spm2013}, the Teichm\"{u}ller
mapping method \cite{Nian:cmame2016},
skeleton-based decomposition method \cite{xu:cmame2015},
multi-patch parameterization method \cite{BucheCAD2016} , and
parameterization with non-standard B-splines (i.e, T-splines \cite{Zhang:cm2013},
THB-splines \cite{falini:THBparameter}, PHT/RHT-splines \cite{Timon:cmame2017,Timon:cm2014,Timon:cmame2017a,Timon:cmame2017b}, Powell-Sabin splines \cite{Speleers:CAM2015}
and subdivision surfaces \cite{Pan:JCP2017,Pan:JCP2015}),  have been proposed to
address the parameterization problem of the
computational domain. However, most of the above methods only focus on  the
computational domain with simple boundaries. The
construction of analysis-suitable parameterizations of complex computational
domains with standard B-spline boundaries remains one of the most significant
challenges in IGA.

In this paper, a general framework  for constructing IGA-suitable planar
parameterization from complex CAD boundaries consisting of
standard B-spline curves is proposed.
Instead of the computational domain formed by four
boundary curves, the planar domains with high genus and more complex boundary curves
are considered. Our main contributions can be summarized as follows:

\begin{itemize}
\item We propose a general framework for IGA-suitable planar  parameterizations from complex
CAD boundaries by domain partition and global/local optimization.
\item Given a complex planar domain bounded by B-spline curves, a
  novel method is proposed to construct  four-sided patch partitions
  with high quality by global optimization and  quad-meshing with few singularities.
\item A $C^1/G^1$-constrained local optimization approach is proposed to construct the
  \Bezier patch with respect to each quad in the quadrangulation achieving high-quality iso-parametric structures while
  keeping the continuity constraints between patches.
\end{itemize}

The rest of the paper is structured as follows. Some related work on
parameterization of the computational domain are reviewed in Section
\ref{sec:related}. Preliminary properties of Bernstein
polynomials and  an overview of the proposed framework is given in
Section \ref{sec:preliminary}. Several pre-processing operations, including \Bezier extraction
  and \Bezier subdivision, are presented in Section
  \ref{sec:preprocessing}. Section \ref{sec:partition} describes the
topology information generation and
  interior \Bezier boundary construction for the quadrilateral high-quality patch
  partition by the global optimization approach. A local optimization method for
high-quality \Bezier patch parameterizations with $C^1/G^1$ continuity
constraints is proposed in Section
\ref{sec:local}. Some parameterization examples are presented
in Section \ref{sec:example}. To demsonstrate the effectiveness of the
proposed method, the results are compared to results obtained by the  skeleton-based approach.  Finally, we conclude this paper and
outline future work in Section \ref{sec:conclude}.

\section{Related work}
\label{sec:related}

Currently, the related work on parameterization of the computational domain
in IGA can be classified into four categories: (1)
analysis-aware optimal parameterization; (2) volumetric spline
parameterization from boundary triangulation; (3) analysis-suitable
planar parameterization;  (4) analysis-suitable
volumetric parameterization from spline boundaries .

\noindent \textbf{Analysis-aware optimal parameterization}:
E. Cohen et al. \cite{cohen:analysis aware} proposed the concept of
\emph{ analysis-aware modeling},  in which the parameters of CAD
models  are selected to facilitate isogeometric analysis. Xu et
al.  \cite{xu:gmp10}  showed that the
quality of parameterization has a great impact on the analysis results and the
efficiency. Pilgerstorfer and J\"uttler \cite{Pilger:2013Bounding} showed that the condition number of the stiffness matrix, which is a key
factor for the stability of the linear system, depends strongly on the
quality of the domain parameterization.

\noindent \textbf{Volumetric spline parameterization from boundary
triangulation}: Using volumetric harmonic functions,
Martin et al. \cite{martin:CAGD2009}  proposed a fitting method for triangular meshes by B-spline parametric volumes. In \cite{Escobara:CMAME2011},  a
method is proposed to construct
trivariate T-spline volumetric parameterizations for genus-zero solids
based on an adaptive tetrahedral meshing and mesh untangling
technique. Zhang et al. proposed a robust and efficient approach to
construct injective solid T-splines for genus-zero geometries
from a boundary triangulation \cite{Zhang:cmame2012}. Chan et al. proposed a volumetric
parameterization method with PHT-splines from the level-set boundary
representation \cite{chan:CAD2017}. For meshes with
arbitrary topology,  volumetric parameterization methods are proposed
from the Morse theory \cite{Wang:spm12} and Boolean operations
\cite{zhang:mr2013}.

\noindent \textbf{Analysis-suitable planar parameterization}:
The boundary in CAD is usually provided in spline form.  Xu et al. proposed a
constrained optimization method to construct injective
planar parameterizations \cite{xu:cmame2011}. Gravessen et
al. \cite{gravessen:parameterization} investigated the planar
parameterization  problem within a general non-linear
optimization framework, in which several objective functions related
to the parameterization quality are introduced.
Xu et al. proposed a skeleton-guided
domain decomposition method for planar parameterization with
$C^0$-continuity between patches \cite{xu:cmame2015}. Speleers and Manni
proposed a parameterization method with $C^1$ Powell-Sabin splines defined
on triangulation \cite{Speleers:CAM2015};  truncated hierarchical
B-splines are used for the planar parameterization problems in
\cite{falini:THBparameter}.   Nian and Chen \cite{Nian:cmame2016}
proposed an approach for planar domain parameterization based on
Teichm\"{u}ller mapping, which can generate a bijective high-quality
parameterization from four specified boundary curves. For the multi-patch
parameterization problem, Buchegger and J\"uttler \cite{BucheCAD2016}
proposed a systematic method  to explore the different possible
parameterizations of a planar domain by collections of quadrilateral
patches. For the geometrically continuous spline space over multi-patch domains, Kapl et al. proposed the construction of geometrically
continuous isogeometric functions which are defined on two-patch domains \cite{KaplCMA2015}. Bases for bicubic and biquartic geometrically
continuous isogeometric functions on bilinearly parameterized
multi-patch domains are constructed in \cite{KaplCMAME2016} ; Mourrain
et al. analyzed the space of geometrically continuous piecewise
polynomial splines for rectangular and triangular patches with arbitrary topology and general
rational transition maps \cite{Mourrain:CAGD2016};  Buchegger et al. constructed
the  truncated hierarchical B-spline basis
for the space of adaptively refined spline functions on multi-patch
domains with enhanced smoothness across interfaces
\cite{BucheAMC2016}.
Overall, it is still
an open problem how to construct
analysis-suitable planar parameterization from complex CAD boundary.

\noindent \textbf{Analysis-suitable volumetric parameterizations from
  spline boundary}:  A variational approach for constructing  NURBS parameterizations of
swept volumes is proposed by M. Aigner et al \cite{Aigner:swept}.  Xu et al. proposed a constrained optimization
framework to construct analysis-suitable volume
parameterizations \cite{xu:spm2012}.
Spline volume faring is proposed by Pettersen and Skytt
to obtain high-quality volume parameterization  \cite{Pettersen:2012splinefaring}.
The construction of conformal solid T-splines from boundary T-spline
representations is studied by using octree structure and boundary
offset \cite{Zhang:cm2013}. In \cite{xu:jcp2013}, a variational harmonic
method is proposed to construct analysis-suitable parameterizations of  computational domains
from given CAD boundary information. Wang and Qian proposed an
efficient method by combining divide-and-conquer, constraint aggregation and the hierarchical
optimization technique to obtain valid trivariate B-spline solids from
six boundary B-spline surfaces \cite{qian:spm2013}.
Analysis-suitable trivariate NURBS
representations of composite panels  is constructed with a new
curve/surface offset algorithm \cite{Nguyen:CAD2014}.
Xu et al. proposed  a two-stage
scheme to construct the analysis-suitable NURBS volumetric parameterization
by a uniformity-improved boundary reparameterization method \cite{xu:cm2014}.
Recently, given a template domain, B-spline based consistent volumetric parameterization is proposed  for a set of models with similar semantic features \cite{xu:cad17}.

\section{Preliminary properties of Bernstein polynomials
and framework overview}

\label{sec:preliminary}

\subsection{Preliminary properties of  Bernstein polynomials}

\label{subsec:prebernstein}

Since some  properties of Bernstein
polynomials \cite{Farin:CAGD} will
be applied in our framework, they are reviewed subsequently.
\begin{lemma}\label{lemma:product1} Product of Bernstein polynomials
\begin{equation}
B_{i}^{m}(t) B_{j}^{n}(t) = \frac{{m \choose i}{n \choose j}}{{m+n \choose i+j}} B_{{i+j}}^{m+n}(t)
\end{equation}
\end{lemma}
\begin{lemma} Integration of Bernstein polynomials
\begin{equation}\label{eq:integration}
\int_{0}^{1}B_{i}^{m}(t) dt= \frac{1}{m+1}
\end{equation}
\end{lemma}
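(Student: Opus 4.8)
The plan is to reduce the identity to a standard Beta-integral evaluation. Substituting the explicit form $B_i^m(t) = \binom{m}{i}\,t^i(1-t)^{m-i}$, the claim becomes
\[
\binom{m}{i}\int_0^1 t^i(1-t)^{m-i}\,dt = \frac{1}{m+1},
\]
so it suffices to evaluate $J(i) := \int_0^1 t^i(1-t)^{m-i}\,dt$ and verify that multiplying by $\binom{m}{i}$ cancels the factorials.

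First I would set up a recursion for $J(i)$ by integration by parts, taking $u = t^i$ and $dv = (1-t)^{m-i}\,dt$. The boundary term vanishes at both endpoints --- at $t=0$ because $t^i=0$ for $i\ge 1$, and at $t=1$ because $(1-t)^{m-i+1}=0$ for $i\le m$ --- which leaves
\[
J(i) = \frac{i}{\,m-i+1\,}\,J(i-1).
\]
With the base case $J(0) = \int_0^1 (1-t)^m\,dt = \frac{1}{m+1}$, unwinding the recursion telescopes to $J(i) = \frac{i!\,(m-i)!}{(m+1)!}$. Substituting back, $\binom{m}{i}J(i) = \frac{m!}{i!\,(m-i)!}\cdot\frac{i!\,(m-i)!}{(m+1)!} = \frac{1}{m+1}$, as claimed.

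A conceptually slicker route avoids the telescoping altogether. Integrating the partition-of-unity identity $\sum_{i=0}^m B_i^m(t)=1$ over $[0,1]$ gives $\sum_{i=0}^m I_i = 1$, where $I_i := \int_0^1 B_i^m(t)\,dt$; it is then enough to show the $m+1$ numbers $I_i$ all coincide, forcing each to equal $\frac{1}{m+1}$. This equality follows from the derivative identity $\frac{d}{dt}B_i^{m+1}(t) = (m+1)\big(B_{i-1}^m(t)-B_i^m(t)\big)$: integrating it over $[0,1]$ and using $B_i^{m+1}(0)=B_i^{m+1}(1)=0$ for $0<i<m+1$ yields $I_{i-1}=I_i$ for every $1\le i\le m$.

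I do not expect a serious obstacle for such an elementary statement; the only points that genuinely need care are the vanishing of the boundary terms in the integration by parts (which must be checked separately in the edge cases $i=0$ and $i=m$) and the factorial bookkeeping when collapsing the telescoping product. I would present the first argument as the proof proper, since it is fully self-contained, and record the partition-of-unity argument as a remark.
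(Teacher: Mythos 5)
Your proof is correct, but there is nothing in the paper to compare it against: the lemma is stated as a known preliminary property of Bernstein polynomials with a citation to Farin's book, and no proof is given there. Both of your arguments are complete and standard. The integration-by-parts recursion $J(i)=\tfrac{i}{m-i+1}J(i-1)$ with base case $J(0)=\tfrac{1}{m+1}$ is set up correctly, the boundary terms do vanish for $1\le i\le m$ as you note (and the case $i=0$ is covered by the base case itself, so no separate edge-case check is actually needed there), and the factorial bookkeeping $\binom{m}{i}\,\tfrac{i!\,(m-i)!}{(m+1)!}=\tfrac{1}{m+1}$ is right. The second route via $\sum_i B_i^m\equiv 1$ together with $\tfrac{d}{dt}B_i^{m+1}=(m+1)\bigl(B_{i-1}^m-B_i^m\bigr)$ is also valid and arguably the more illuminating one, since it explains \emph{why} the answer is independent of $i$; either would serve as a self-contained justification of the cited identity.
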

\begin{lemma} Degree elevation of Bernstein polynomials
\begin{equation}
B_{i}^{n-1}(t) = \frac{n-i}{n} B_{i}^{n}(t) + \frac{i+1}{n}
B_{i+1}^{n}(t)
\end{equation}
\end{lemma}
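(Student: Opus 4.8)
The statement is a polynomial identity in $t$, so the plan is to verify it by a short algebraic computation. There are two natural routes: a direct expansion from the definition $B_i^n(t)=\binom{n}{i}t^i(1-t)^{n-i}$, or a derivation that reuses the product formula of Lemma~\ref{lemma:product1}. I would present the latter first, since it is the most economical and keeps the lemma in the same family of manipulations. The starting observation is that the constant $1$ is the degree-one partition of unity,
\begin{equation}
1 = (1-t) + t = B_0^1(t) + B_1^1(t),
\end{equation}
so multiplying the left-hand side by it gives
\begin{equation}
B_i^{n-1}(t) = B_i^{n-1}(t)\,B_0^1(t) + B_i^{n-1}(t)\,B_1^1(t).
\end{equation}

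Next I would apply Lemma~\ref{lemma:product1} to each product on the right. The first term produces the coefficient $\binom{n-1}{i}\binom{1}{0}/\binom{n}{i}$ in front of $B_i^n(t)$, and the second term the coefficient $\binom{n-1}{i}\binom{1}{1}/\binom{n}{i+1}$ in front of $B_{i+1}^n(t)$. The only genuine work is the bookkeeping: expanding each ratio of binomial coefficients into factorials and cancelling reduces them to $\tfrac{n-i}{n}$ and $\tfrac{i+1}{n}$ respectively, which is exactly the claimed right-hand side.

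As a self-contained cross-check that avoids Lemma~\ref{lemma:product1}, I would instead expand the right-hand side directly. Using the two elementary identities $\tfrac{n-i}{n}\binom{n}{i}=\binom{n-1}{i}$ and $\tfrac{i+1}{n}\binom{n}{i+1}=\binom{n-1}{i}$, both summands acquire the common coefficient $\binom{n-1}{i}$; factoring out $\binom{n-1}{i}t^i(1-t)^{n-i-1}$ leaves the bracket $[(1-t)+t]=1$, which collapses the sum back to $B_i^{n-1}(t)$. Either way there is no real obstacle here, as the identity is elementary; the emphasis is simply on carrying out the factorial cancellations cleanly and noting that $(n-1)-i=n-i-1$ so that the exponents match.
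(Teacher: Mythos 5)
Your proof is correct. The paper itself gives no proof of this lemma---it is stated as a standard property reviewed from Farin's book \cite{Farin:CAGD}---so there is no argument of the authors' to compare against. Both of your routes check out: the product-formula derivation via $1=B_0^1(t)+B_1^1(t)$ and Lemma~\ref{lemma:product1} yields exactly the coefficients $\tfrac{n-i}{n}$ and $\tfrac{i+1}{n}$, and the direct expansion using $\tfrac{n-i}{n}\binom{n}{i}=\tfrac{i+1}{n}\binom{n}{i+1}=\binom{n-1}{i}$ collapses the right-hand side to $B_i^{n-1}(t)$ as claimed. Either version would serve as a complete, self-contained justification of the lemma.
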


From Lemma \ref{lemma:product1}, we have the following proposition
\cite{Farin:CAGD}

\begin{proposition}  \label{prop:prop}
Let $ R(t)$ and $S(t)$ be a parametric function defined by
\[
R(t)=\sum_{i=0}^{{\ell}_1}  a_{i} B_i^{{\ell}_1}(t) ,  \qquad S(t)=\sum_{i=0}^{{\ell}_2}  b_{i} B_i^{{\ell}_2}(t),
\]
where $r_{i}$ and $s_{i}$ are scale values. Then the product of  $
R(t)$ and $S(t)$ can be defined as
\begin{equation}\label{eq:product}
R(t)  S(t)=\sum_{i=0}^{{\ell}_1+{\ell}_2} c_{i} B_i^{{\ell}_1+{\ell}_2}(t),
\end{equation}
where
\[
c_{i}=\sum_{r=\text{max}(0,i-{\ell}_1)}^{\text{min}(i,{\ell}_2)} \frac{{{\ell}_1 \choose
    r}{{\ell}_2 \choose i-r}}{{{\ell}_1+{\ell}_2 \choose i}}a_i b_{i-r}
\]
\end{proposition}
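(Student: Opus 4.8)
The plan is to reduce everything to Lemma \ref{lemma:product1} by expanding the product into a double sum over the two coefficient index sets and then applying the basis-product identity term by term. First I would write
\[
R(t)\,S(t) = \sum_{p=0}^{\ell_1}\sum_{r=0}^{\ell_2} a_p\, b_r\, B_p^{\ell_1}(t)\, B_r^{\ell_2}(t),
\]
which is just bilinearity of the product in the two coefficient vectors. Applying Lemma \ref{lemma:product1} to each factor $B_p^{\ell_1}(t)\,B_r^{\ell_2}(t)$ replaces it by $\binom{\ell_1}{p}\binom{\ell_2}{r}/\binom{\ell_1+\ell_2}{p+r}\, B_{p+r}^{\ell_1+\ell_2}(t)$, a single Bernstein polynomial of degree $\ell_1+\ell_2$. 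This alone already settles the structural part of the claim: the product lies in the span of $\{B_k^{\ell_1+\ell_2}\}_{k=0}^{\ell_1+\ell_2}$, so it admits a representation of the asserted form.

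Next I would regroup the double sum by the output degree index. Setting $i=p+r$ and collecting every pair contributing to a fixed $B_i^{\ell_1+\ell_2}(t)$ gives
\[
c_i = \frac{1}{\binom{\ell_1+\ell_2}{i}}\sum_{r} \binom{\ell_1}{i-r}\binom{\ell_2}{r}\, a_{i-r}\, b_r,
\]
where the inner sum runs over the admissible values of $r$; substituting $p=i-r$ into the original ranges $0\le p\le\ell_1$ and $0\le r\le\ell_2$ is exactly what fixes those admissible values. The common factor $1/\binom{\ell_1+\ell_2}{i}$ can then be left inside each summand to produce the stated closed form.

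The only real, and quite minor, obstacle is pinning down the summation limits so that they agree with the formula in the statement. The two inequalities $0\le i-r\le\ell_1$ and $0\le r\le\ell_2$ are together equivalent to $\max(0,i-\ell_1)\le r\le\min(i,\ell_2)$, which is precisely the range appearing in the proposition; outside this window at least one of the binomial coefficients vanishes, so the bounds may be imposed either tightly or loosely without changing the value of $c_i$. With the limits identified, the derivation is complete, and I would only add the remark that the construction is purely algebraic in the coefficients $a_p$ and $b_r$, so the same argument applies verbatim when the coefficients are vector-valued (e.g.\ control points), which is the form in which the identity will later be used.
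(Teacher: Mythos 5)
Your proof is correct and follows exactly the route the paper intends: the proposition is presented as a direct consequence of Lemma \ref{lemma:product1} (with a citation to Farin in place of a written proof), and your expansion of the product by bilinearity, termwise application of the lemma, and reindexing by $i=p+r$ with the range $\max(0,i-\ell_1)\le r\le\min(i,\ell_2)$ is precisely that argument. One remark: the coefficient you derive, $c_i=\sum_{r=\max(0,i-\ell_1)}^{\min(i,\ell_2)}\binom{\ell_1}{i-r}\binom{\ell_2}{r}a_{i-r}b_r\big/\binom{\ell_1+\ell_2}{i}$, is the correct one, whereas the formula printed in the proposition pairs the binomial coefficients inconsistently with its own summation range and writes $a_i$ where $a_{i-r}$ (or $a_r$, after relabelling) is meant, so your version should be read as correcting a typo in the statement rather than as a gap in your argument.
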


\begin{figure*}[!htb]
\centering
\begin{minipage}[t]{2.1in}
\centering
\includegraphics[width=2.1in,height=2.1in]{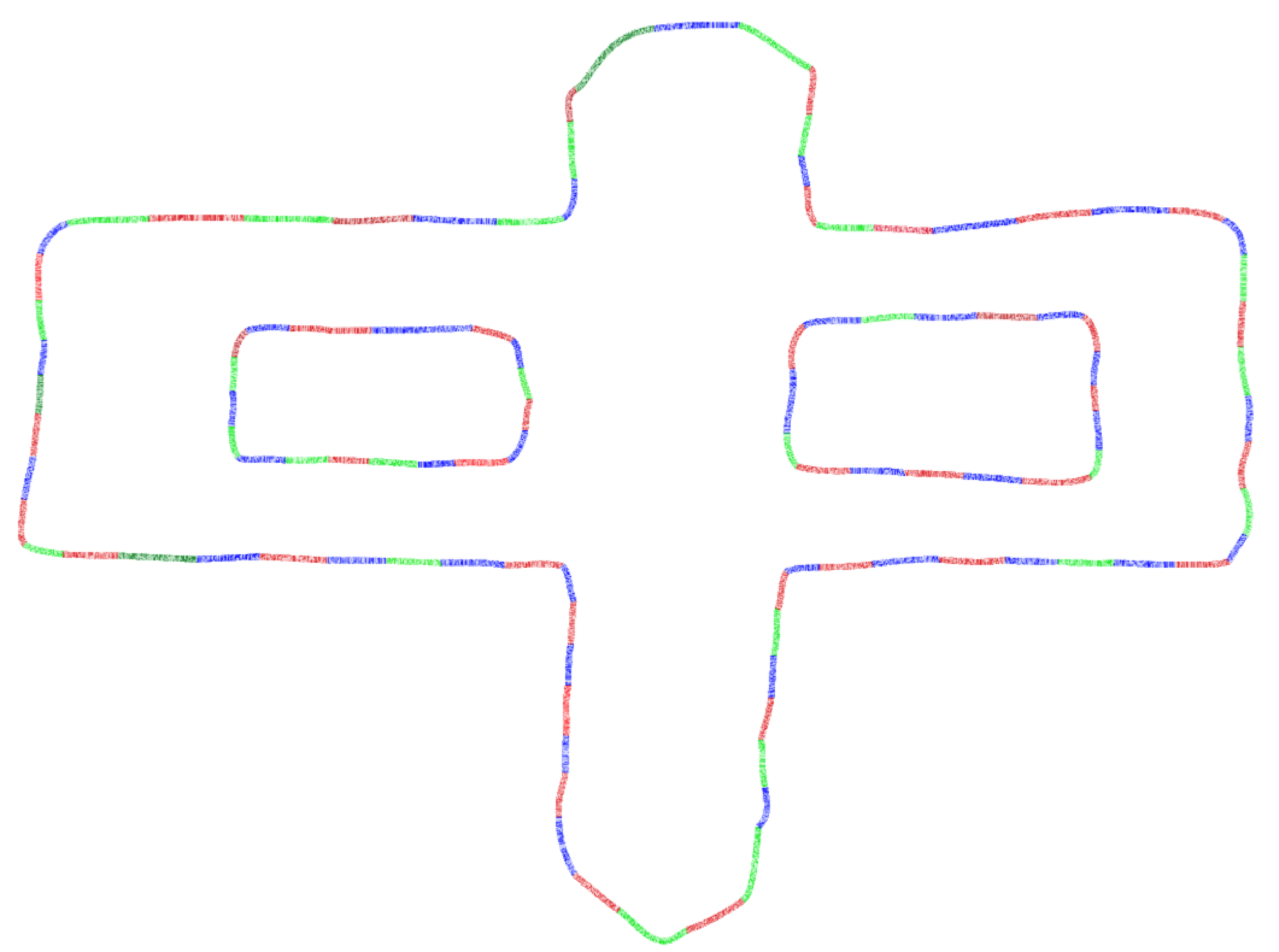}
\\ (a) boundary \Bezier curves
\end{minipage}
\begin{minipage}[t]{2.1in}
\centering
\includegraphics[width=2.25in]{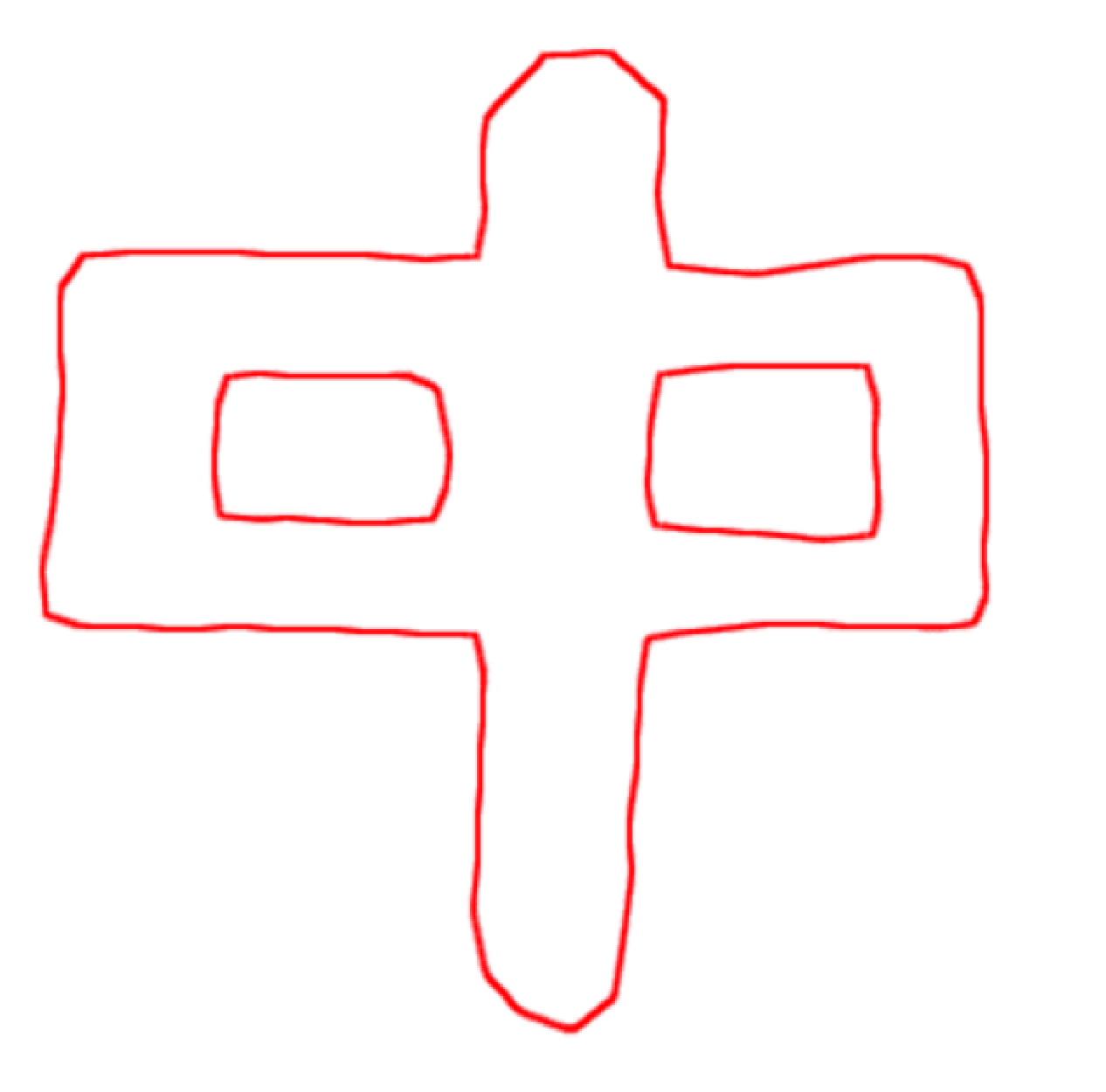}
\\ (b) discrete boundary
\end{minipage}
\begin{minipage}[t]{2.1in}
\centering
\includegraphics[width=2.25in]{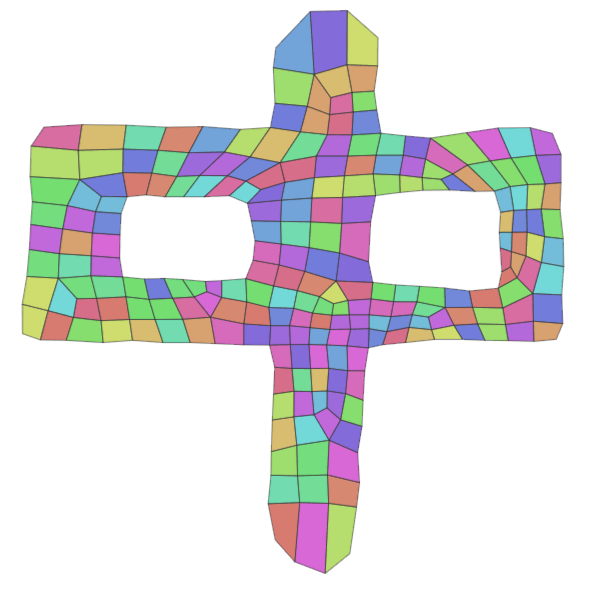}
\\ (c) quad meshing result
\end{minipage}\\
\begin{minipage}[t]{2.1in}
\centering
\includegraphics[width=2.08in]{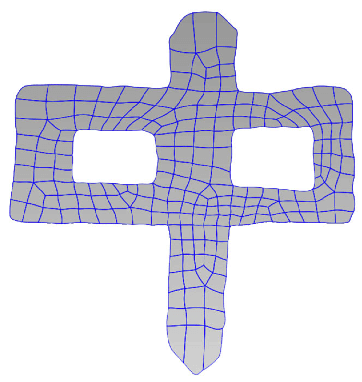}
\\ (d) segmentation curves
\end{minipage}
\begin{minipage}[t]{2.1in}
\centering
\includegraphics[width=2.1in]{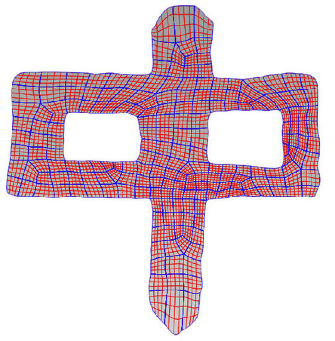}
\\ (e) parameterization result
\end{minipage}
\begin{minipage}[t]{2.1in}
\centering
\includegraphics[width=2.75in]{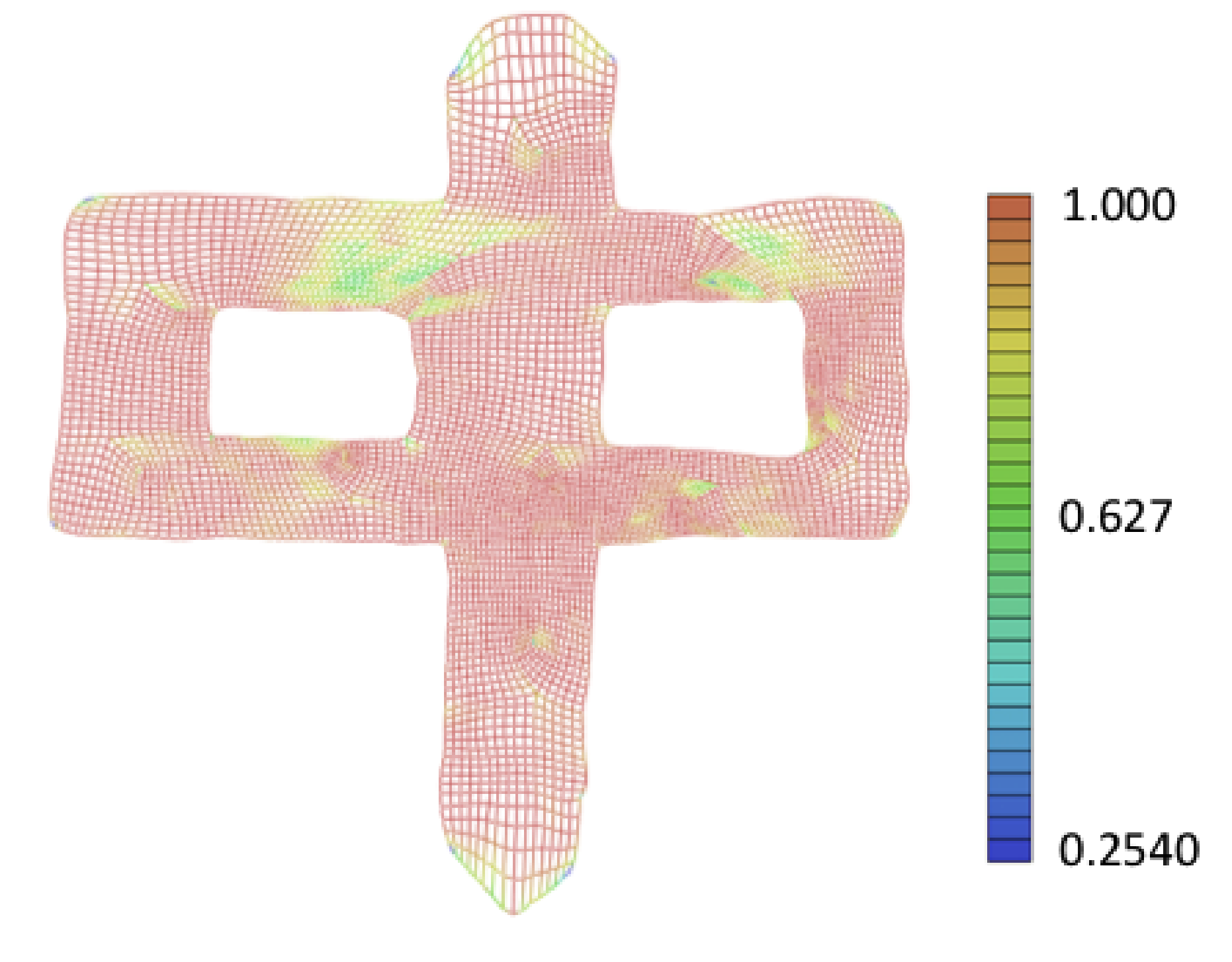}
\\ (f) Jacobian colormap
\end{minipage}
\caption{Illustration example I of the proposed parameterization
  framework.   (a) shows the boundary \Bezier curves after
the pre-processing of input boundary B-spline curves;
  (b) presents the discrete boundary obtained by
connecting the ending control points of each \Bezier curve, and the
corresponding quad meshing result is shown in
  (c).   (d) shows the
domain partition results with the construction of segmentation curves
by global optimization method. After local optimization process for
each sub-patch,  the final planar parameterization result is illustrated in
  (e) with the iso-parametric curves; (f)
presents the scaled Jacobian colormap of the parameterization to
illustrate the quality of planar parameterization.}
\label{fig:example1}
\end{figure*}

\subsection{Framework overview}

The problem investigated in this paper can be stated as follows:
Given a planar region bounded by a set of B-spline curves, construct
an analysis-suitable planar B-spline parameterization. In other words: How can several B-spline
patches be constructued to fill the computational domain bounded by B-spline curves? In order to address the above problem,  a general framework is
proposed  which consists of the following four steps
illustrated in Fig. \ref{fig:example1}:

%({\color{red} will be added later}):

\begin{enumerate}
\item \emph{Pre-processing for high-quality parameterization} (Section
  \ref{sec:preprocessing}):  In order to generate a high-quality planar
  parameterization, several pre-processing operations, including
  \Bezier extraction and \Bezier subdivision are performed on
  boundary curves (see Fig. \ref{fig:example1}(a)) ;

\item \emph{Topology information generation of quadrilateral decomposition} (Section \ref{sec:topology}):
From the discrete boundary obtained by connecting the ending control points of each \Bezier
curve as shown in Fig. \ref{fig:example1}(b), approximate convex decomposition and pattern based quad mesh generation are employed to generate the topology information of the
quadrilateral decomposition of the computational
domain;

\item \emph{Construction of the quadrilateral patch partition by global
    optimization } (Section \ref{sec:topology1} and Section
  \ref{sec:topology2}): After the topology partition is obtained, the Laplacian smoothing method
is used to improve the quad mesh quality ( Fig. \ref{fig:example1}(c)); then the optimal geometry of the interior B-spline curves corresponding to the interior edges of the
quadrangulation are obtained by a global optimization method to achieve a high-quality
patch-partition  ( Fig. \ref{fig:example1}(d)).

\item \emph{High-quality patch parameterization by local optimization} (Section \ref{sec:local}):  The \Bezier  patch with respect to each quad in the
quadrangulation is obtained by local optimization   to
achieve uniform and orthogonal iso-parametric structures while keeping
the continuity conditions between patches; an approach for detecting and recovering of
invalid \Bezier patches is also proposed to guarantee the injectivity
of the resulting parameterization ( Fig. \ref{fig:example1}(e) and Fig. \ref{fig:example1}(f)).
\end{enumerate}

\section{Pre-processing of input boundary curves for high-quality parameterization}
\label{sec:preprocessing}

In order to achieve a high-quality parameterization of the computational
domain, special treatment of input boundary curves are needed.

In IGA, the standard computational element is the
sub-patch corresponding to the knot interval in the definition of the
B-spline planar surface. In our parameterization framework,
the input boundary B-spline curves will be segmented
firstly with the \Bezier extraction technique \cite{Borden,Farin:CAGD}, in
which  the
piecewise B-spline representation is converted into \Bezier form. The B-spline basis defined on a knot vector
can be written as a linear combination of the
Bernstein polynomials, that is,
\begin{equation}\label{eqn:extraction1}
\mathbf {N}(\mathbf{t}) = \mathbf{C} \mathbf{B}(\mathbf{t})
\end{equation}
where $\mathbf{C}$ denotes the \Bezier extraction operator and $ \mathbf{B}(\bf
t)$ are the Bernstein polynomials which are defined on $[0, 1]$.
The conversion matrix $\mathbf{C}$ is sparse and its entries can be
obtained by multiple knot insertion, which can be performed by Boehm's
algorithm. Details on the \Bezier extraction can be found in \cite{Borden,Farin:CAGD}.

With the conversion matrix $\mathbf{C}$, the \Bezier extraction of the B-spline
curves can be represented by
\begin{equation}
\mathbf{P} = \mathbf{C} \mathbf{Q}
\end{equation}
$\mathbf{Q}$ denoting the control points of the B-spline curve, and
$\mathbf{P}$ are the control points of the extracted \Bezier curve.

After \Bezier extraction, most of the resulting \Bezier
curves have an ideal shape for the parameterization. However, in some
cases, there might be some \Bezier curves with complex shapes,
which require further division. In order to obtain
satisfactory curve segments, we will
perform the \Bezier subdivision process based on the distance
estimation between \Bezier curves and its control
polygon \cite{Wang84,PeterCAGD1999}. Let  $L_{ave}$ be the average
length of the lines connecting the starting control points and ending control
points of all the extracted \Bezier curves from the input boundary.
The corresponding termination criterion for this \Bezier subdivision process can be described as follows:

\noindent \textbf{Termination criterion.} The maximal distance $D^{k}_{max}$  between the \Bezier curve
$S_k(t)$ and the straight line connecting its starting
control point $(s_{0,k}^x,s_{0,k}^y) $ and
ending control point $(s_{n,k}^x,s_{n,k}^y) $ is smaller than $L_{ave}$ .

It should be mentioned that $L_{ave}$ is determined by the input
boundary  and is kept unchanged during the subdivision process. From
the above termination criterion, the \Bezier curve will be subdivided into two \Bezier segments at $t=0.5$ if
 $D^{k}_{max}$  is larger than $L_{ave}$. If $(s_{i,k}^x,s_{i,k}^y) $, $i = 0,
 ..., n$ and
\[
\eta =\mathop{\max}\limits_{0\leq i \leq n-2} \{|s_{i,k}^x-2
s_{i+1,k}^x+s_{i+2,k}^x|, |s_{i,k}^y-2 s_{i+1,k}^y+s_{i+2,k}^y|\},
\]
then
\begin{equation}
\Gamma \geq \log_4\frac{\sqrt{3}n(n-1) \eta }{8 L_{ave} },
\end{equation}
where $\Gamma$ is the number of times the \Bezier curve must be subdivided in order to satisfy the termination
criterion  \cite{Wang84,PeterCAGD1999}. Fig. \ref{fig:curvesubdivision} shows an example of the \Bezier
subdivision, in which a \Bezier curve with concave shape is
subdivided into two \Bezier segments for the subsequent high-quality parameterization
process.

\begin{figure}
\centering
\begin{minipage}[t]{2.5in}
\centering
\includegraphics[width=2.2in]{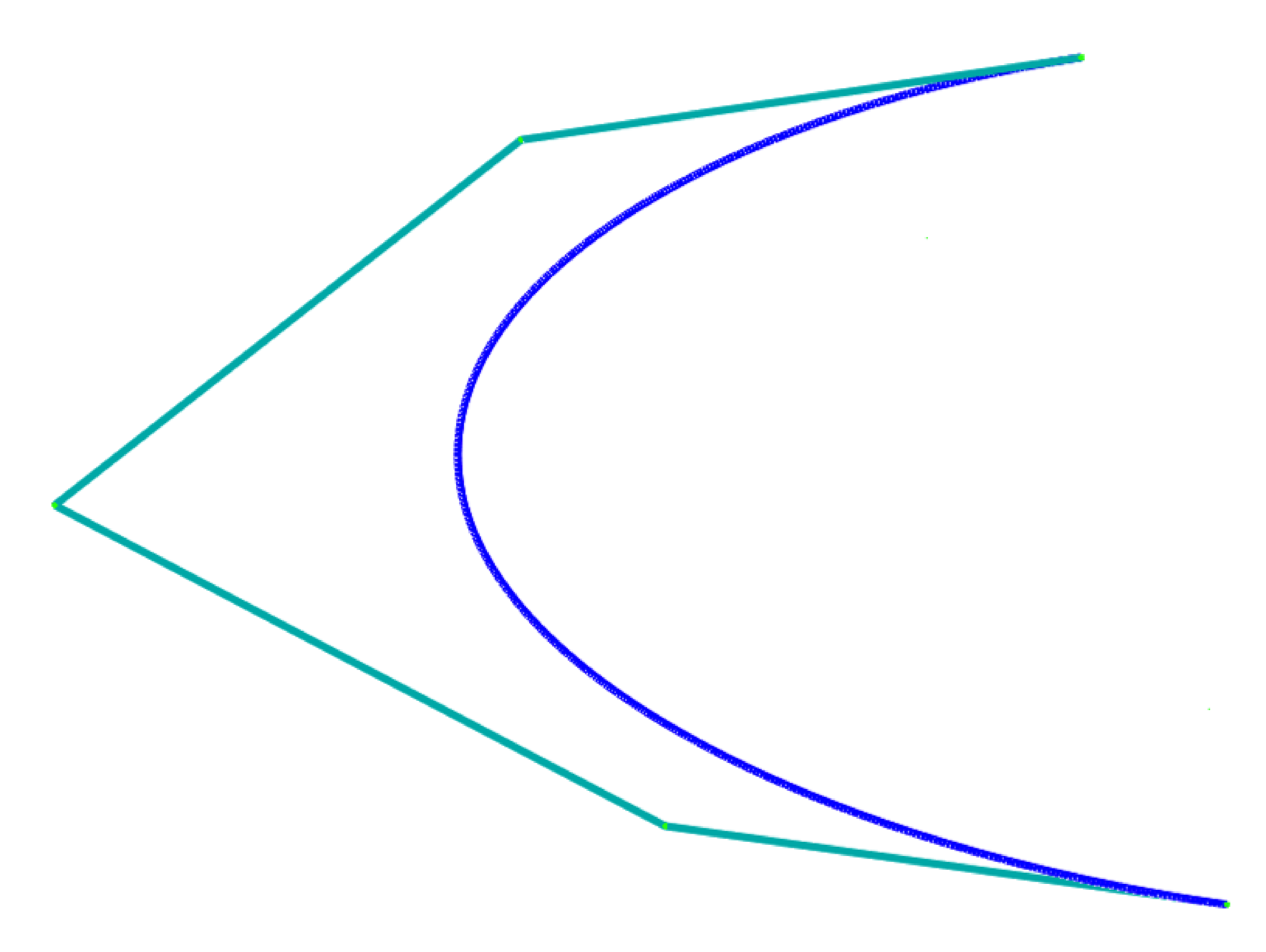}
\\ (a) original \Bezier curve
\end{minipage}
\begin{minipage}[t]{2.5in}
\centering
\includegraphics[width=2.3in]{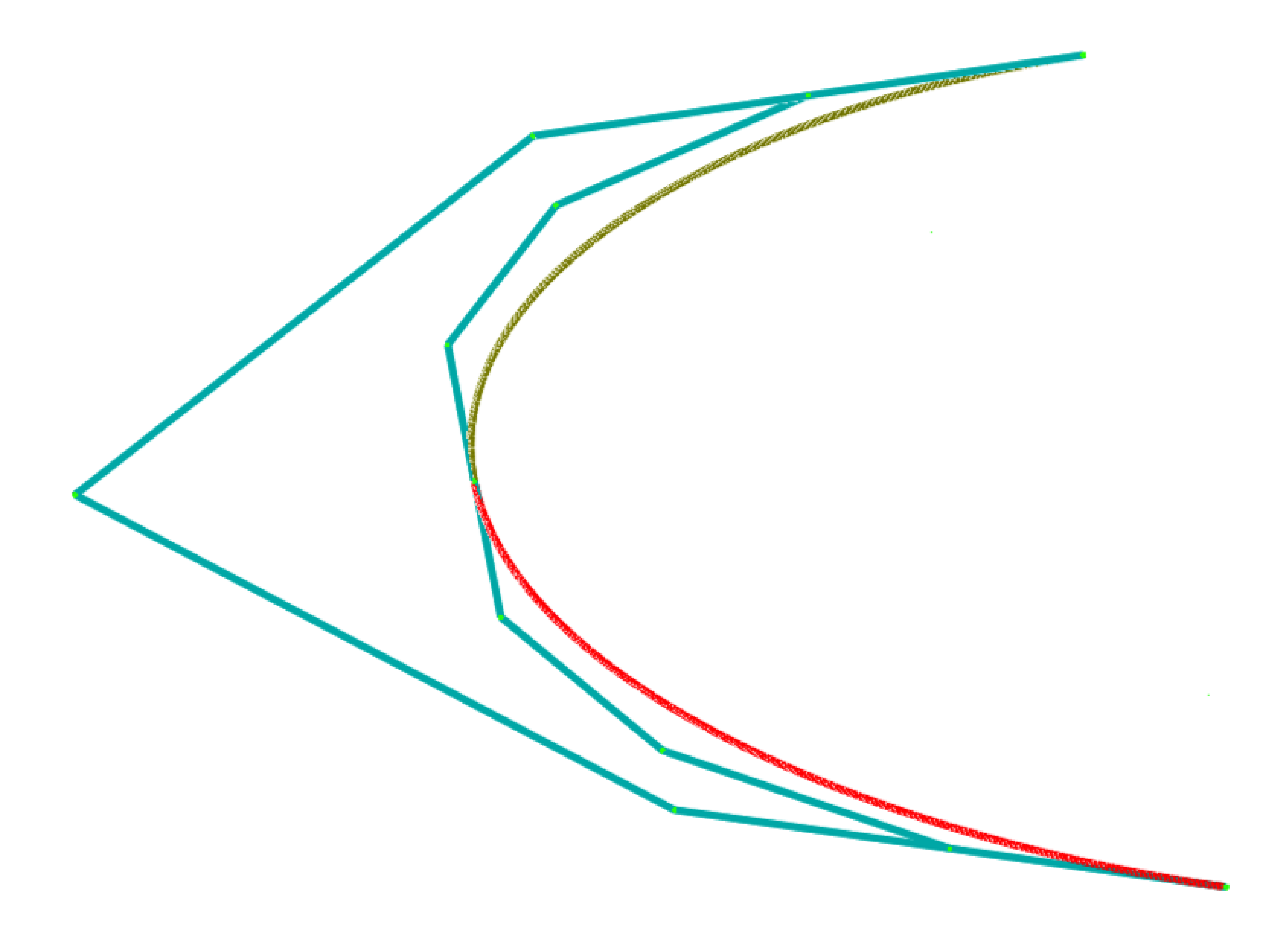}
\\ (b) \Bezier subdivision
\end{minipage}\\
\caption{Subdivision of a \Bezier curve with concave shape. }
\label{fig:curvesubdivision}
\end{figure}

\section{Four-sided partition of the computational domain by global
  optimization}

\label{sec:partition}

After the pre-processing operation for the given boundary curves,  we will now propose a global optimization method to construct the four-sided curved partition of the computational
domain. Three steps are required to address this
problem: Firstly, we generate the topology information of the quadrilateral
decompositions. Afterwards, the Laplacian mesh smoothing method is employed to
improve the quad mesh quality. Finally, the optimal shape of the interior
B-spline curves corresponding to the interior edges of the quadrangulation are obtained by a global optimization method.

\subsection{Topology generation of quadrilateral decomposition}
\label{sec:topology}

\begin{figure*}[!htb]
\centering
\begin{minipage}[t]{3.2in}
\centering
\includegraphics[width=2.3in]{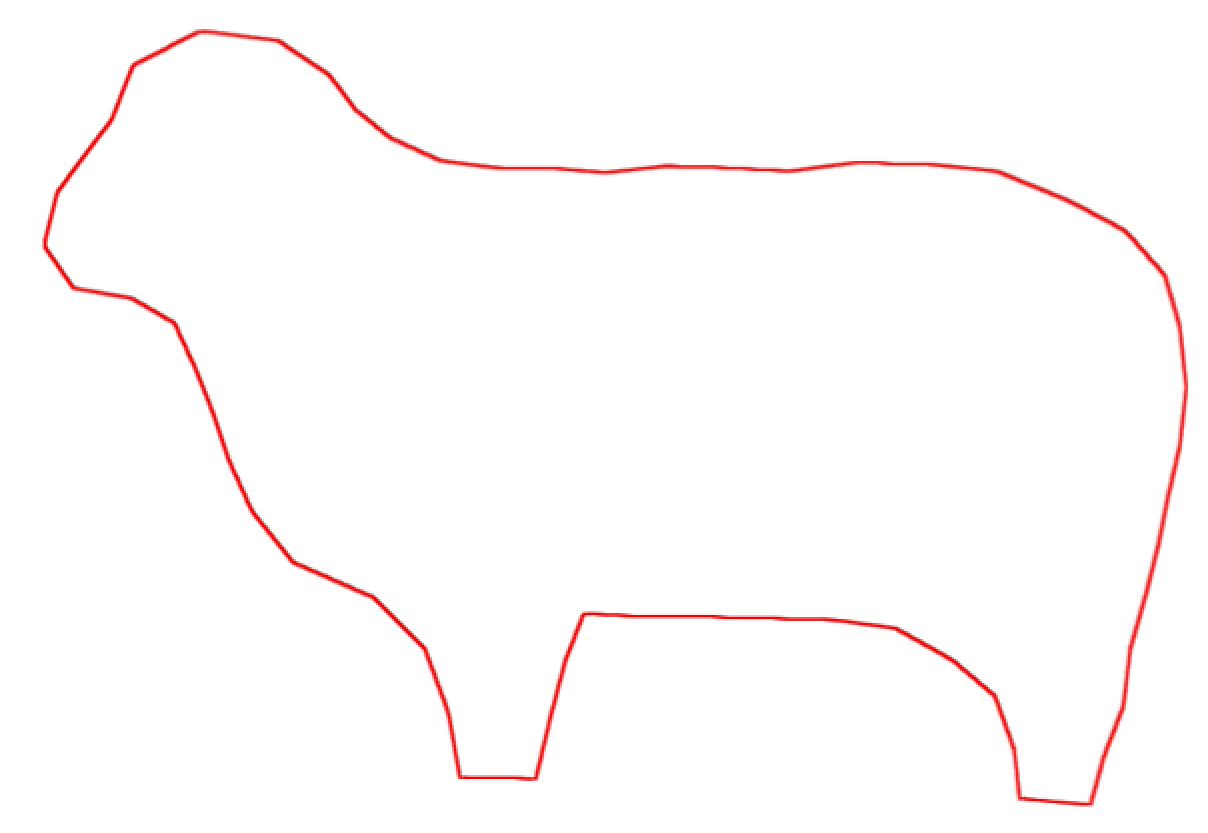}
\\ (a) input discrete boundary
\end{minipage}
\begin{minipage}[t]{3.2in}
\centering
\includegraphics[width=2.3in]{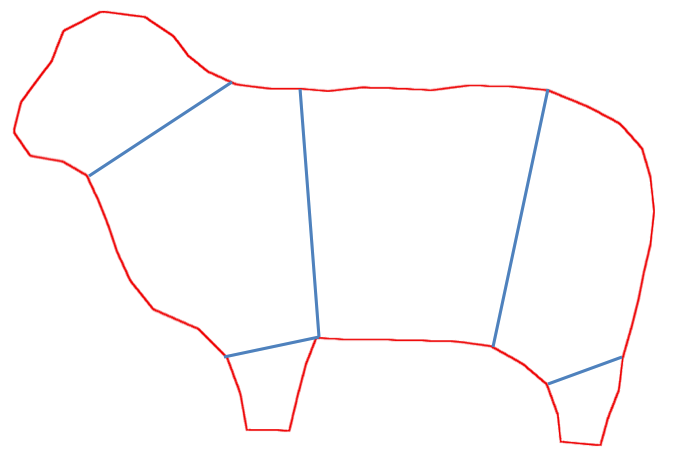}
\\ (b) quasi-convex polygon decomposition
\end{minipage}\\
\begin{minipage}[t]{3.2in}
\centering
\includegraphics[width=2.3in]{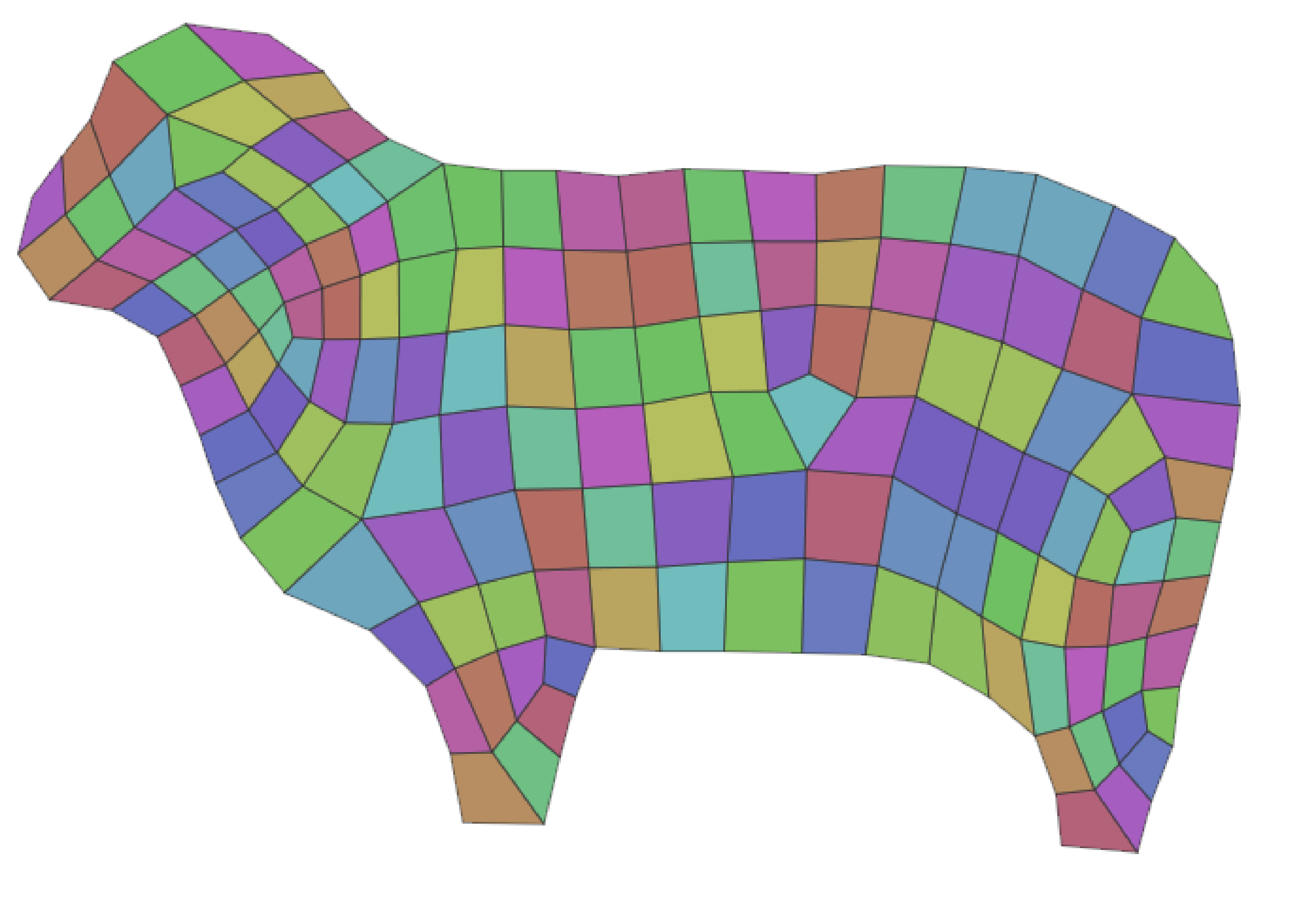}
\\ (c) quad-meshing result by our method with $147$ elements and $14$ irregular
vertices
\end{minipage}
\begin{minipage}[t]{3.2in}
\centering
\includegraphics[width=2.27in]{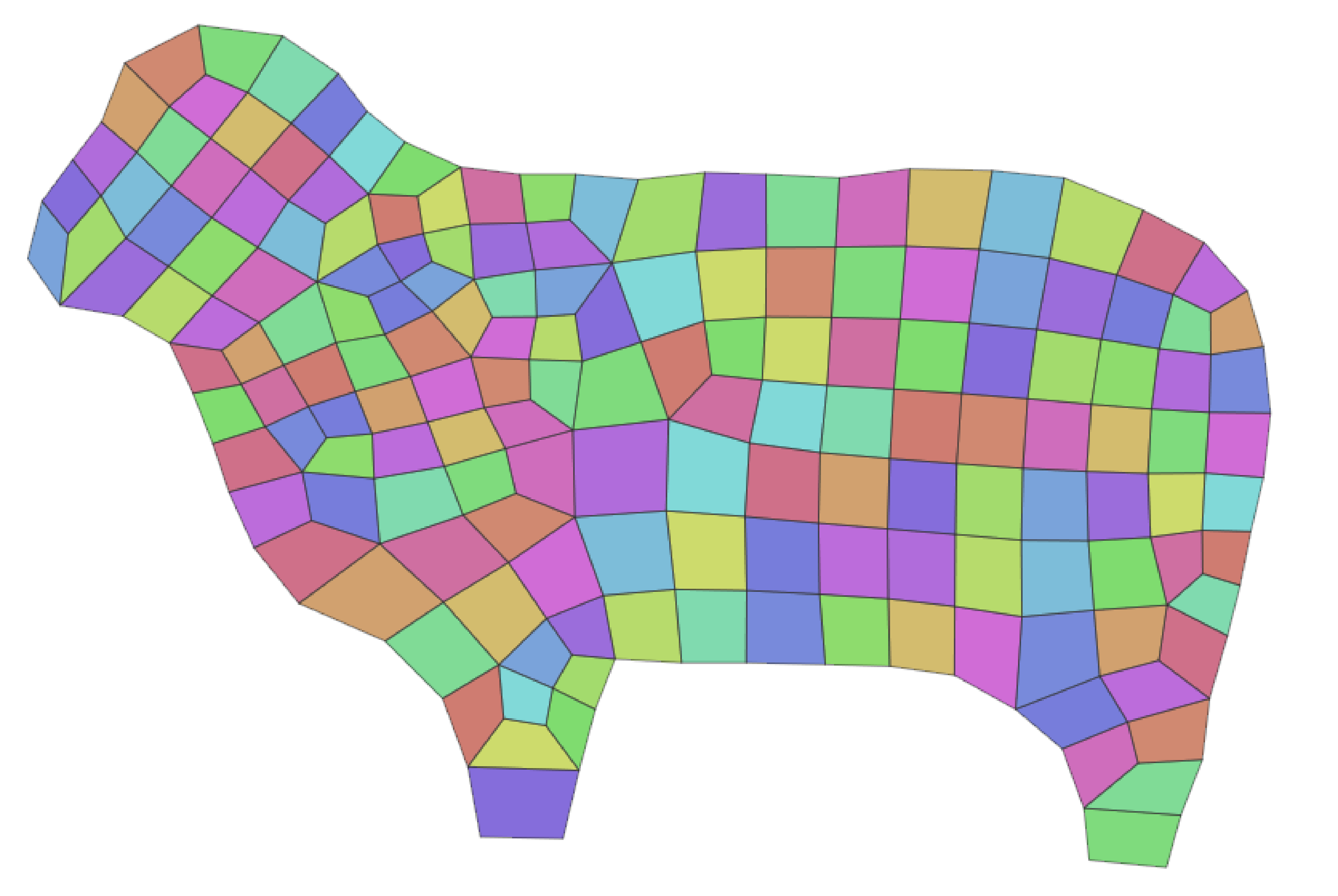}
\\ (d) meshing result by the method in \cite{ma:automeshSDU} with $160$ elements and $45$ irregular
vertices
\end{minipage}
\caption{Quad-meshing example. }
\label{fig:quadgeneration}
\end{figure*}

In this subsection, we will describe how to generate the topology
information of quadrilateral decompositions from the boundary curves of the
computational domain. The main steps include:
\begin{enumerate}[{Step}~1:]
\item Construct the  discrete boundary by connecting the endpoints of the
 extracted \Bezier curves obtained in Section
 \ref{sec:preprocessing} (see Fig. \ref{fig:quadgeneration}(a));
\item Convert the multiply-connected regions into simply-connected
  regions as presented in Appendix I (see Fig. \ref{fig:multiplyConnect});
\item Approximate the convex decomposition of the simply-connected regions
  by using the approach proposed in \cite{acd} (see Fig. \ref{fig:quadgeneration}(b));
\item For each quasi-convex polygon obtained in Step. 3,  generate the
  quadrangulation topology information by the patterns proposed in
  \cite{pattern} (see Fig. \ref{fig:quadgeneration}(c)).
\end{enumerate}

The proposed method in \cite{pattern} can produce patterns with minimal number of
irregular vertices, which are constructed  by solving a set of small
integer linear programs. As shown in
Fig. \ref{fig:quadgeneration},  our framework can generate
high-quality quad-meshing results with fewer irregular vertices
compared to the method in \cite{ma:automeshSDU}. Furthermore, the patterns proposed in
\cite{pattern} only introduce irregular vertices with valence $3$ or
$5$, which guarantees the solution existence for $G^1$
planar parameterization around the irregular vertex as shown in Section \ref{sec:g1}.

\begin{figure}
\centering
\begin{minipage}[t]{2.4in}
\centering
\includegraphics[width=2.in]{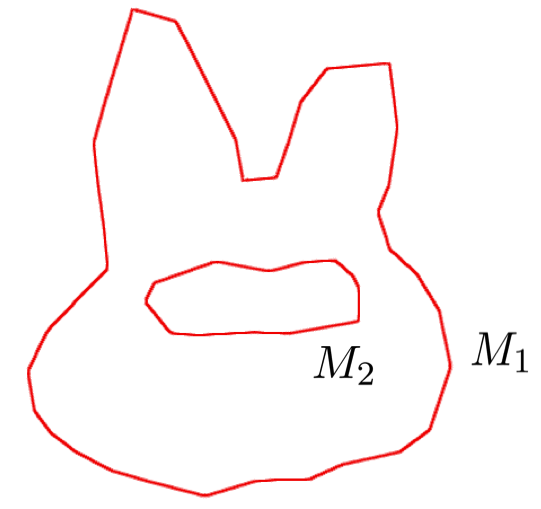}
\\ (a) multiply-connected domain
\end{minipage}
\begin{minipage}[t]{2.4in}
\centering
\includegraphics[width=2.in]{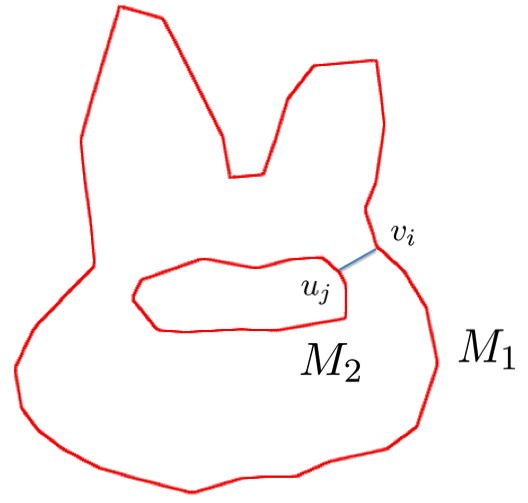}
\\ (b) simply-connected domain
\end{minipage}\\
\caption{Convert a multiply-connected domain into simply-connected
  domain. }
\label{fig:multiplyConnect}
\end{figure}

\subsection{Laplacian smoothing method}

\label{sec:topology1}

Since the quad-meshing quality significantly affects the
final parameterization results,  we adapt an iterative
Laplacian smoothing method to improve the quality of the quad mesh.
In every iteration, each internal mesh vertex is moved to the centroid of its
neighbor vertices, i.e.
\begin{equation}
x_i^k=\frac{\displaystyle{\sum_{j=1}^{N_i}}x_{j}^{k-1}}{N_i}, \qquad  y_i^k=\frac{\displaystyle{\sum_{j=1}^{N_i}}y_{j}^{k-1}}{N_i}
\end{equation}
in which $N_i$ is the number of neighbor vertices of the internal mesh
vertex with location $(x,y)$, and the superscript $k$ is the iteration
counter.  This iteration is terminated according to the following
termination rules:
\begin{equation}
\frac{[\displaystyle{\sum_{i=1}^m}[(x_i^k-x_i^{k-1})^2+(y_i^k-y_i^{k-1})^2]]^{1/2}}
{[\displaystyle{\sum_{i=1}^m}[(x_i^{k-1})^2+(y_i^{k-1})^2]]^{1/2}} <
\delta,
\end{equation}
where $m$ is the total number of mesh vertices, and $\delta$ is the
specified tolerance value (0.001).

By the Laplacian smoothing method, we can improve the quad-mesh
quality towards equilibrating the element size
globally as shown in Fig. \ref{fig:quadgeneration}(c).

\subsection{Construction of segmentation curves between patches }

\label{sec:topology2}

After constructing the quad mesh $Q(V,E)$ of the discrete computational domain,
we construct the segmentation curves between patches
corresponding to each quad. The segmentation curves interpolate
two vertices on the quad mesh $Q(V,E)$ as shown in Fig. \ref{fig:segmentation}(a). Furthermore, for a planar computational domain, the shape of the segmentation
curves has a great effect on the uniformity of the patch-size. Hence, we propose a global optimization method
to construct the optimal shape of the segmentation curves.

\begin{figure}
\centering
\begin{minipage}[t]{2.5in}
\centering
\includegraphics[width=2.3in]{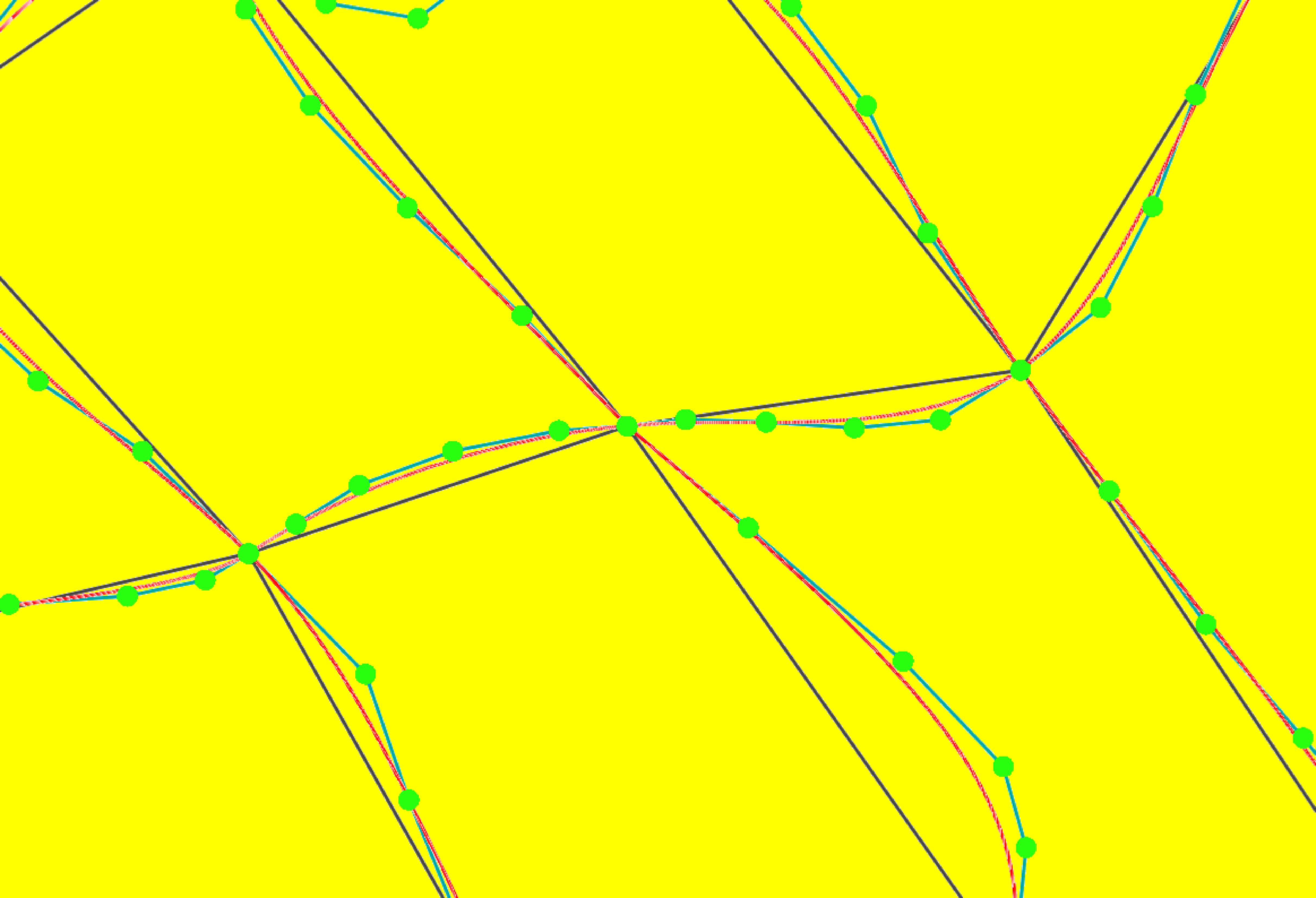}
\\ (a) segmentation curves(red) w.r.t quad edges( black)
\end{minipage}
\begin{minipage}[t]{2.5in}
\centering
\includegraphics[width=2.2in]{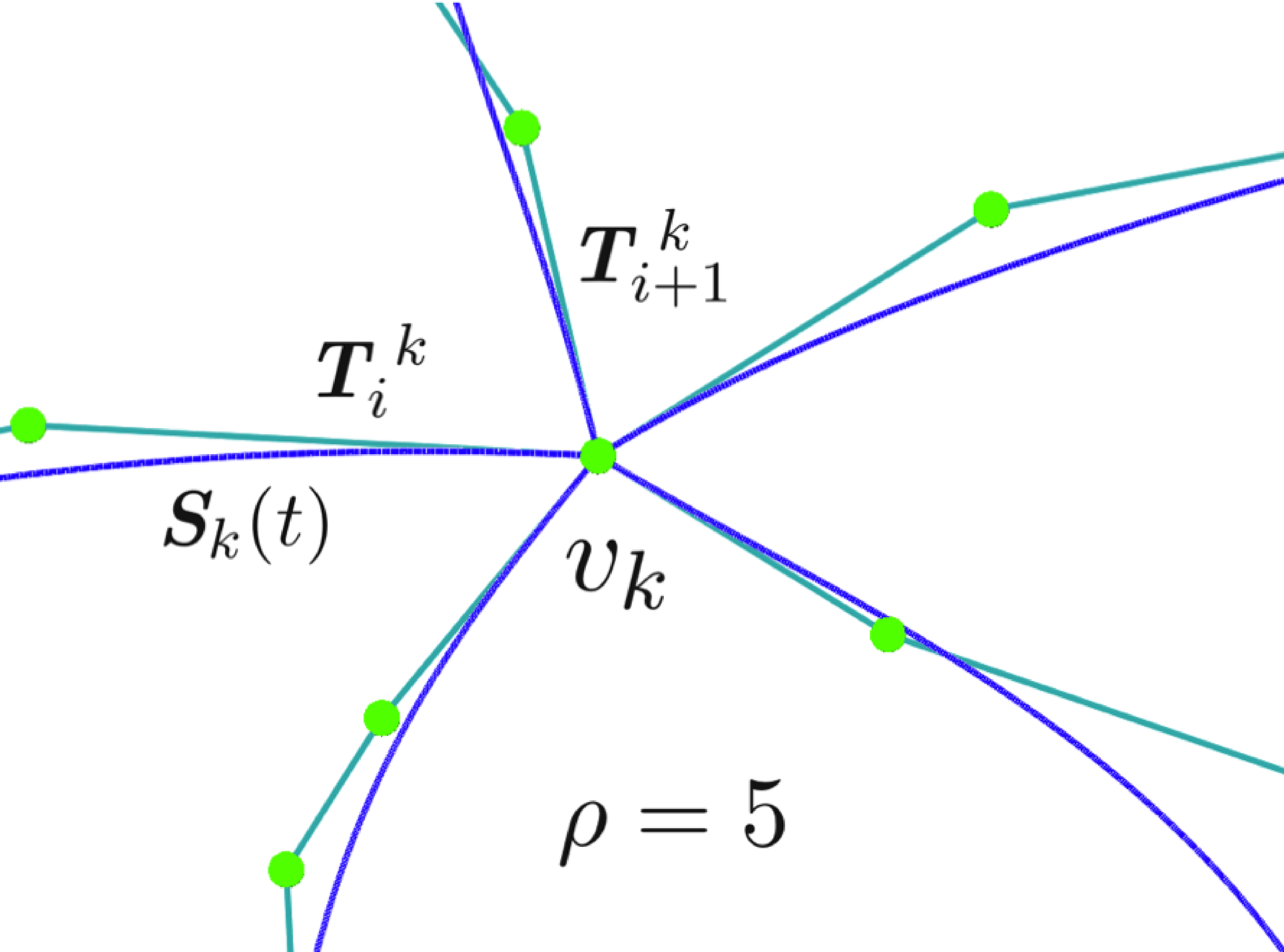}
\\ (b) segmentation curve at a singular vertex with $\rho=5$
\end{minipage}\\
\caption{Construction of segmentation curves. }
\label{fig:segmentation}
\end{figure}

There are three objective functions: The first term is related to the uniformity of the patch-size; the second term is related to the shape quality of the
segmentation curves; the third term is related to the tangent
constraints of the segmentation curves at quad-mesh vertices.

For the uniformity of the patch-size, we
propose a new metric related to the patch-area. It is obvious that the
area of a planar region only depends on the shape of the boundary
curves. In fact, for computing the area of the planar region bounded by \Bezier curves, we have the following
proposition.

\begin{proposition}
For the planar region $\Omega$ bounded by $N$ \Bezier curves
\[
\textit{\textbf{S}}_k(t)=(S_k^x(t),S_k^y(t))=\displaystyle{\sum_{i=1}^n} (s_{i,k}^x,s_{i,k}^y)
B_i^n(t),
\]
in which $(s_{i,k}^x,s_{i,k}^y) $ is the control point, $t\in
[0,1]$, $k=1,\cdots,N$, the area $A(\Omega)$ of the planar region $\Omega$ is the sum of
integrals as follows:
\begin{equation}\label{eqn:Ai}
A(\Omega)=  \frac{1}{4n} ~ \displaystyle{\sum_{k=1}^N}\displaystyle{\sum_{j=0}^{2n-1}} (c_j^k-d_j^k),
\end{equation}
in which
\begin{equation}\label{eqn:cj}
c_j^k= \sum_{r=\text{max}(0,j-n)}^{\text{min}(j,n-1)} \frac{{n \choose
    r}{n-1 \choose j-r}}{{2n-1 \choose j}}s_{r,k}^x \dot (s_{j-r+1,k}^y-s_{j-r,k}^y)
\end{equation}
\begin{equation}\label{eqn:dj}
d_j^k= \sum_{r=\text{max}(0,j-n)}^{\text{min}(j,n-1)} \frac{{n \choose
    r}{n-1 \choose j-r}}{{2n-1 \choose j}}s_{r,k}^y \dot (s_{j-r+1,k}^x-s_{j-r,k}^x)
\end{equation}\label{prop:area}
\end{proposition}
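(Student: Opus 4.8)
The plan is to compute the signed area via the Green's theorem line-integral formula and then reduce each boundary integral to Bernstein form using the product and integration lemmas stated earlier. Recall that for a region $\Omega$ whose boundary $\partial\Omega$ is traversed positively, Green's theorem gives
\[
A(\Omega)=\frac{1}{2}\oint_{\partial\Omega} \bigl(x\,dy - y\,dx\bigr)
=\frac{1}{2}\sum_{k=1}^N \int_0^1 \bigl(S_k^x(t)\,\dot S_k^y(t) - S_k^y(t)\,\dot S_k^x(t)\bigr)\,dt,
\]
where the boundary is decomposed into the $N$ \Bezier arcs $\textit{\textbf{S}}_k$. So the first step is simply to record this formula and note that, by linearity, it suffices to evaluate a single arc's contribution and sum over $k$.

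\emph{Second}, I would compute the derivative of a degree-$n$ \Bezier component. The standard identity $\dot S_k^y(t)=n\sum_{i=0}^{n-1}(s_{i+1,k}^y-s_{i,k}^y)B_i^{n-1}(t)$ expresses $\dot S_k^y$ as a degree-$(n-1)$ Bernstein polynomial with control differences as coefficients. Substituting this and the degree-$n$ expansion of $S_k^x$ into the integrand produces a product of a $B_r^n$ and a $B_{j-r}^{n-1}$; here Proposition \ref{prop:prop} (equivalently Lemma \ref{lemma:product1}) converts that product into a single degree-$(2n-1)$ Bernstein basis function $B_j^{2n-1}(t)$ with the mixed binomial coefficient $\binom{n}{r}\binom{n-1}{j-r}/\binom{2n-1}{j}$. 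Collecting the coefficient of each $B_j^{2n-1}(t)$ over the admissible range $\max(0,j-n)\le r\le\min(j,n-1)$ yields exactly the quantity $c_j^k$ defined in \eqref{eqn:cj}, and the symmetric term $S_k^y\,\dot S_k^x$ yields $d_j^k$ from \eqref{eqn:dj}.

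\emph{Third}, I would integrate term by term. The factor $n$ from the derivative combines with the $\tfrac12$ from Green's theorem to give $n/2$. Applying Lemma \eqref{eq:integration}, each $\int_0^1 B_j^{2n-1}(t)\,dt = 1/(2n)$, so every coefficient is scaled by $\tfrac{n}{2}\cdot\tfrac{1}{2n}=\tfrac{1}{4n}$. Summing the integrated contribution $c_j^k-d_j^k$ over $j=0,\dots,2n-1$ and over $k=1,\dots,N$ reproduces \eqref{eqn:Ai} precisely.

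\emph{The main obstacle} is bookkeeping rather than conceptual: matching the product-expansion index ranges against the summation limits in \eqref{eqn:cj}–\eqref{eqn:dj} so that the shift $j-r+1$ versus $j-r$ in the control-point differences lines up correctly after the derivative introduces the index offset, and verifying that boundary terms from adjacent arcs cancel so that the closed-curve assumption is legitimately used. One should also confirm orientation consistency (the formula gives a signed area, and the stated result implicitly assumes the boundary curves are oriented so that $\Omega$ lies to the left); a brief remark that reversing orientation flips the sign would make the statement unambiguous.
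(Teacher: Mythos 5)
Your route is precisely the one the paper takes: its entire proof of Proposition \ref{prop:area} is the single remark that the result ``can be proved by Green's formula and the properties of Bernstein polynomials,'' and your three steps (Green's line integral summed over the $N$ arcs, the derivative formula for a degree-$n$ B\'ezier component, Lemma \ref{lemma:product1} to merge $B_r^n B_{j-r}^{n-1}$ into $B_j^{2n-1}$, then the integration formula \eqref{eq:integration}) are exactly the intended expansion of that remark. The identification of the coefficient of $B_j^{2n-1}(t)$ with $c_j^k$ and $d_j^k$ is correct, and your closing caveat about orientation is sensible; note that no cancellation of ``boundary terms between adjacent arcs'' is needed, since Green's theorem applies directly to the piecewise-smooth closed boundary.

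There is, however, a genuine error in your final step, and it is the step that produces the constant: $\frac{n}{2}\cdot\frac{1}{2n}=\frac{1}{4}$, not $\frac{1}{4n}$. Carried out correctly, your own computation gives
\[
\int_0^1 S_k^x(t)\,\dot S_k^y(t)\,dt = n\sum_{j=0}^{2n-1} c_j^k \int_0^1 B_j^{2n-1}(t)\,dt = \frac{n}{2n}\sum_{j=0}^{2n-1} c_j^k = \frac{1}{2}\sum_{j=0}^{2n-1} c_j^k,
\]
and hence $A(\Omega)=\frac{1}{4}\sum_{k}\sum_{j}(c_j^k-d_j^k)$, with no $n$ in the denominator. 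A sanity check with the unit right triangle whose three edges are written as degenerate degree-two B\'ezier segments yields $\sum_k\sum_j(c_j^k-d_j^k)=2$, giving the correct area $\frac12$ with the constant $\frac14$ but only $\frac14$ with the constant $\frac{1}{4n}=\frac18$; so the prefactor in \eqref{eqn:Ai} as printed appears to be off by exactly the factor $n$ that your arithmetic slip happens to reproduce. A smaller bookkeeping point of the same kind: requiring $0\le r\le n$ and $0\le j-r\le n-1$ forces $\max(0,\,j-n+1)\le r\le\min(j,\,n)$, which does not match the limits printed in \eqref{eqn:cj}; your proof should state the range it actually derives rather than defer to the printed one.
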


Proposition \ref{prop:area} can be proved by Green's formula \cite{Carmo:1976} and the
properties of Bernstein polynomials, which are described in subsection
\ref{subsec:prebernstein}.

From Proposition \ref{prop:area},  we define the uniformity
metric for the \Bezier patches constructed from the quad mesh $Q$.
The uniform patch structure  over the quad mesh $Q$ means that each
patch defined on each quad element in $Q$ has the same area. In probability theory
and statistics, the \emph{variance} measures how far a set of numbers is
spread out. A variance of zero indicates that all the values are
identical. Hence, the uniformity of the
patch structure requires the variance between each patch to be as small as possible. Suppose that $A_i$ is the area of the $i$-th
patch bounded by four \Bezier curves,  and $A_{ave}$ is the average patch area in the patch structure $\Omega$ over
the quad mesh $Q$, i.e.
\[
A_{ave}=\frac{A(\Omega)}{L},
\]
then the uniformity metric $F_{uniform}$ is defined as the variance of
$A_i$,
\begin{equation}\label{eq:disvar}
F_{\text{uniform}} = \frac{1}{L}\sum_{i=0}^L {(A_i-A_{ave})^2}
\end{equation}
in which $L$ is the number of \Bezier patches in the patch structure $\Omega$,
$A(\Omega)$ is the area of the patch structure $\Omega$ bounded by the
given set of B-spline curves. $A_i$ and $A(\Omega)$ can be computed
according to the area formula presented in Proposition
\ref{prop:area}.

For the term related to the shape quality of the segmentation curves,
the corresponding objective function is defined as a combination
of the stretch energy and the strain energy,
\begin{equation}\label{eq:shape}
F_{\text{shape}}= \sum_{k=0}^{N}\int_0^1 \sigma_1
\|\textit{\textbf{S}}_k^{~'}(t)\|^2 + \sigma_2
\|\textit{\textbf{S}}_k^{~''}(t)\|^2  dt,
\end{equation}
in which $\sigma_1$ and $\sigma_2$ are positive
weights.   If $\sigma_1 >
 \sigma_2$, then the resulting segmentation curves have smaller
 stretch energy, which measures the length of a curve.
 If $\sigma_2 > \sigma_1$, we obtain
segmentation curves with smaller strain energy, which is a measures of  the curve's bending.

For the term related to the tangent vector requirements of segmentation curves
at the quad-mesh vertex $v_k$ with valence $\rho$, we define the
following objective function,
\begin{equation}\label{eq:tangent}
F_{\text{tangent}}= \sum_{k=0}^{N} \sum_{i=1}^{\rho}
(\frac{\textit{\textbf{T}}_i^{~k} \cdot
  \textit{\textbf{T}}_{i+1}^{~k}}{\| \textit{\textbf{T}}_i^{~k}\|
  \|\textit{\textbf{T}}_{i+1}^ {~k}\|}-\cos {\frac{2\pi}{\rho}})^2
\end{equation}
where $\textit{\textbf{T}}_{i}^ {~k}$ is the tangent vector of the
segmentation curve $\textit{\textbf{S}}_k (t)$ at the quad-mesh vertex
$v_k$, and   $\textit{\textbf{T}}_{\rho+1}^ {~k} =
\textit{\textbf{T}}_{1}^ {~k}$ as shown in Fig. \ref{fig:segmentation}(b). At regular
vertices with valence $\rho = 4$, the minimization of
$F_{\text{tangent}}$ will achieve quasi-$C^1$ and quasi-orthogonal segmentation curves,
which is a basic requirement for analysis-suitable parameterization.

Combining the optimization terms defined in (\ref{eq:disvar}), (\ref{eq:shape}) and
(\ref{eq:tangent}), we construct an objective function $F$ as follows,
\begin{equation}\label{eq:curveobj}
F =\omega_1 F_{\text{uniform}}+ \omega_2
F_{\text{shape}}+ \omega_3 F_{\text{tangent}},
\end{equation}
in which $\omega_1$, $\omega_2$ and $\omega_3$ are positive
weights for the balance between the shape quality metric,
uniformity metric and the metric for tangent constraints. In order to
achieve quasi-$C^1$ and quasi-orthogonal segmentation curves,
$\omega_3$ is usually much larger than  $\omega_1$ and
$\omega_2$ .

The segmentation curves are obtained by solving the non-linear optimization problem
\begin{equation}\label{eq:curveopt}
\underset{\textit{\textbf{s}}_{i,k}}{\operatorname{arg\,min}}
\quad  F \quad,
\end{equation}
in which the design variables are the control points of the
segmentation curves $\textit{\textbf{S}}_k (t)$.

The scale of the optimization problem in \eqref{eq:curveopt} depends
on the number and the degree of the \Bezier segmentation curves.
The L-BFGS method is adopted to obtain the optimal solution,
which is a quasi-Newton method to solve unconstrained nonlinear minimization
problems. In the L-BFGS method, we approximate the inverse Hessian
matrix of the objective function in \eqref{eq:curveopt} by a sequence of gradient
vectors from previous iterations. For more details, the reader can refer to
\cite{LBFGS}.

\begin{figure}
\centering
\begin{minipage}[t]{3.in}
\centering
\includegraphics[width=3.in]{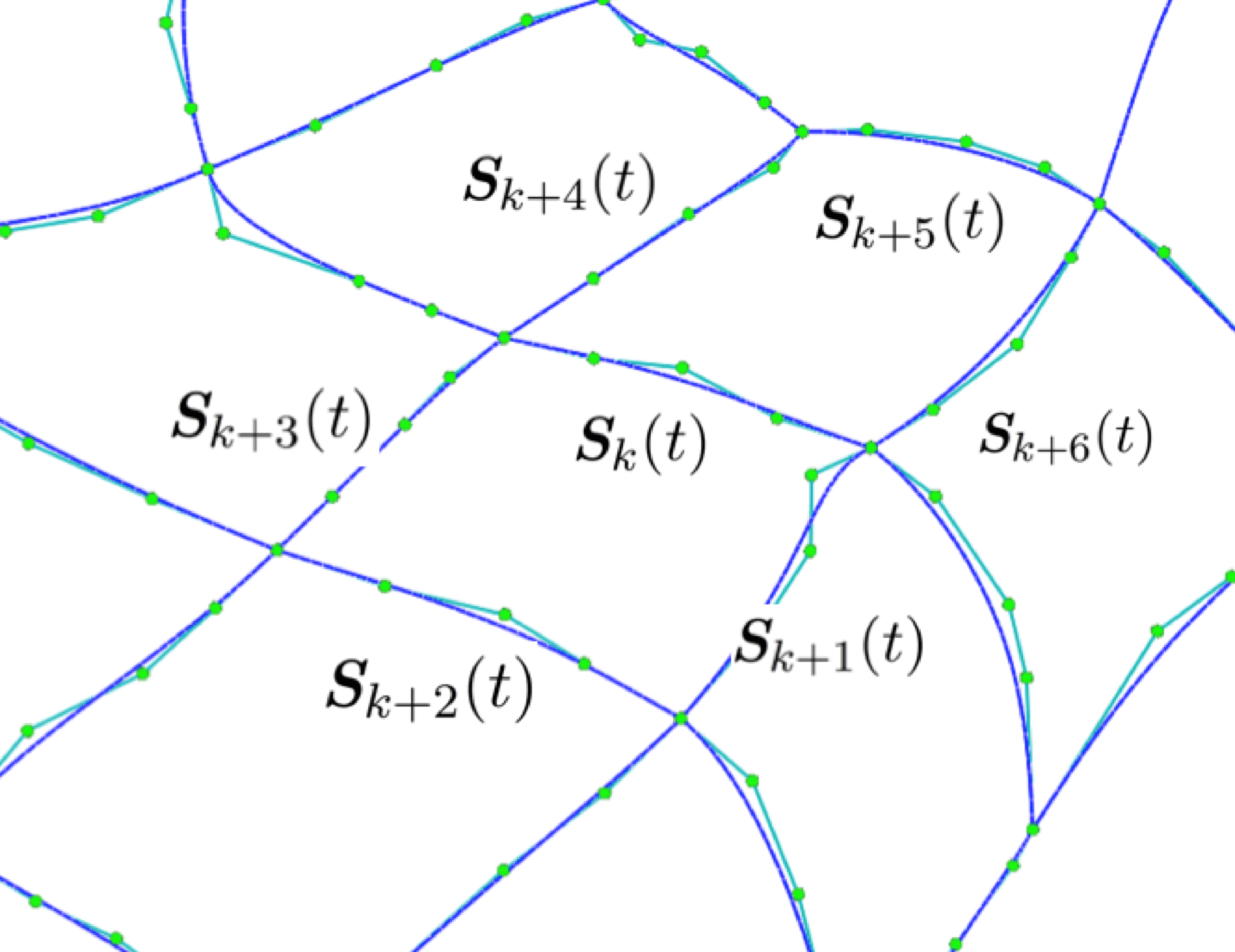}
\end{minipage}\\
\caption{Interior segmentation curve $\textit{\textbf{S}}_k(t)$ and its neighboring segmentation curves. }
\label{fig:constraintSeg}
\end{figure}

\newpage

\noindent \textbf{Remark 5.1. }  The interior segmentation curves are called \emph{valid} if there are no intersections between them except the connecting control points. That is, for each interior segmentation curve $\textit{\textbf{S}}_k(t)$, $k=1,\cdots,N$, it has no intersections with its neighboring segmentation curves $\textit{\textbf{S}}_{k+1}(t)$, $\textit{\textbf{S}}_{k+2}(t)$, $\textit{\textbf{S}}_{k+3}(t)$, $\textit{\textbf{S}}_{k+4}(t),\textit{\textbf{S}}_{k+5}(t)$ and $\textit{\textbf{S}}_{k+6}(t)$ as shown in Fig. \ref{fig:constraintSeg}. In particular, we can derive the following sufficient conditions such that $\textit{\textbf{S}}_k(t)$ has no intersections with $\textit{\textbf{S}}_{k+j}(t)$:
$$
\triangle \textit{\textbf{s}}_{i-1}^k\textit{\textbf{s}}_{i}^k\textit{\textbf{s}}_{i+1}^k \cap \triangle \textit{\textbf{s}}_{l-1}^{k+j}\textit{\textbf{s}}_{l}^{k+j}\textit{\textbf{s}}_{l+1}^{k+j} =\emptyset, \quad i=1,\cdots,n, \quad l=1,\cdots,n, \quad j=1,\cdots,6.
$$
in which $\textit{\textbf{s}}_{i}^k$ is the control point of segmentation curve $\textit{\textbf{S}}_k(t)$. That is, if the triangles $\triangle \textit{\textbf{s}}_{i-1}^k\textit{\textbf{s}}_{i}^k\textit{\textbf{s}}_{i+1}^k $ formed by the control points of $\textit{\textbf{S}}_k(t)$ do not intersect with the other triangles  $\triangle \textit{\textbf{s}}_{l-1}^{k+j}\textit{\textbf{s}}_{l}^{k+j}\textit{\textbf{s}}_{l+1}^{k+j} $ formed by the control points $\textit{\textbf{S}}_{k+j}(t)$, then
$\textit{\textbf{S}}_k(t)$ has no intersections with $\textit{\textbf{S}}_{k+j}(t)$. These sufficient conditions can be integrated with the optimization problem (\ref{eq:curveopt}) to generate valid interior segmentation curves.

\section{High-quality patch parameterization by local constrained optimization }{}

\label{sec:local}

After the four-sided patch partition is generated, we next
construct the inner control points for each \Bezier patch. One basic
requirement is that the resulting domain parameterization
satisfies the continuity constraints.  In this paper, we propose an
efficient patch-wise local optimization method to construct
high-quality patch parameterizations with continuity constraints. The main framework for each local patch construction is summarized as follows:
\begin{enumerate}[{Step}~1:]
\item Construct the boundary control points on the second layer of the
  control mesh for each patch by
  orthogonality optimization and impose continuity constraints to
  achieve a near-orthogonal iso-parametric structure at common
  segmentation curves as described in subsection \ref{sec:twolayer};
\item Construct interior $(n-3)\times(n-3)$ control points for each patch by solving a
  linear system related to efficient  $C^1$ energy-minimizing scheme as presented in subsection \ref{sec:energy};
\item Find the invalid patches on the parameterization and
  recover the patch validity by repositioning  the control points
by a local optimization method as described in subsection
\ref{sec:valid}.
\end{enumerate}

\subsection{Construction of boundary second-layer control points with
  orthogonality optimization and continuity constraints}
\label{sec:twolayer}

In this section, we will construct the boundary control
points on the second layer of the control mesh, i.e ,
$\textit{\textbf{P}}_{n-1,j}$, $\textit{\textbf{P}}_{1,j}$,
$\textit{\textbf{P}}_{i,1}$, and $\textit{\textbf{P}}_{i,n-1}$,  to
satisfy the orthogonality and continuity requirements.  The construction procedures involves:
\begin{enumerate}[{Step}~1:]
\item \emph{Initial construction}: Construct the initial
  control points on the second layer of the
  control mesh for each patch by orthogonality optimization as
  described in subsection \ref{subsection:orthoptimize};
\item \emph{Imposition of $C^1$-continuity}: In the regular region, adjust the boundary control points on the
  second layer of the control mesh for each patch to achieve
  $C^1$-continuity. Therefore, the Lagrange-multiplier method  is employed as proposed in subsection \ref{subsection:c1optimize} ;
\item \emph{Imposition of $G^1$-continuity}: In the irregular region, adjust
the 1-neighbor control points around the irregular vertex to satisfy the general $G^1$-continuity
constraints by solving linear system as described in subsection \ref{sec:g1} .
\end{enumerate}

In the following subsections, some details will be described for each
step.

\subsubsection{Initial construction by orthogonality optimization  }
\label{subsection:orthoptimize}

Firstly, we will describe the initial construction by orthogonality
optimization. Without loss of generality, we firstly construct the initial position
$\textit{\textbf{P}}_{n-1,j}^{0}$ of $\textit{\textbf{P}}_{n-1,j}$ for
the \Bezier patch $\emph{\textbf{r}}(u,v)$:
\[
\textit{\textbf{P}}_{n-1,j}^{0}= \textit{\textbf{P}}_{n,j} + \frac{(\textit{\textbf{P}}_{0,j}-\textit{\textbf{P}}_{n,j})}{n},
\]
In order to achieve a near-orthogonal isoparametric structure on the
boundary $\emph{\textbf{r}}(1,v)$,  the final position of
$\textit{\textbf{P}}_{n-1,j}$ is obtained by solving the following optimization problem
\begin{equation}\label{eqn:orthopt}
\underset{\textit{\textbf{P}}_{n-1,j}}{\operatorname{arg\,min}}  \int_0^1 (< \emph{\textbf{r}}_{1,u}(1,v),\emph{\textbf{r}}_{1,v}(1,v)>)^2 dv
\end{equation}
in which
\begin{equation}
\emph{\textbf{r}}_{1,u}(1,v)=n\sum\limits_{j=0}^{n}{B_{l}^{n}(v)\Delta^{1,0}\textit{\textbf{P}}_{n-1,l}},
\end{equation}
\begin{equation}
\emph{\textbf{r}}_{1,v}(1,v)=n\sum\limits_{j=0}^{n-1}{B_{l}^{n-1}(v)\Delta^{0,1}\textit{\textbf{P}}_{n,l}},\label{eq:rv}
\end{equation}

Solving the similar optimization problems as
(\ref{eqn:orthopt}) yields the position of
$\textit{\textbf{P}}_{1,j}$,
$\textit{\textbf{P}}_{i,1}$ and $\textit{\textbf{P}}_{i,n-1}$ for each
patch, which requires adjustment to achieve $C^1$ continuity.

\subsubsection{$C^1$ construction in the regular region}

\label{subsection:c1optimize}

In order to ensure that the joint two \Bezier patches $\textit{\textbf{r}}^{k}_1
(u,v)$ and $\textit{\textbf{r}}^{k}_2 (u,v)$ in the regular region  satisfy $C^1$ continuity,
the control mesh near the common segmentation curve $\textit{\textbf{S}}_{k}(t)$   with control
points $\textit{\textbf{s}}_{j}^k$ should satisfy the following $C^1$ conditions:
\begin{equation}\label{c1}
\textit{\textbf{s}}_{j}^k-\textit{\textbf{P}}_{j}^k=\textit{\textbf{Q}}_{j}^k-\textit{\textbf{s}}_{j}^k,
\qquad k=0, \cdots, N, \quad  j= 0,\cdots, n
\end{equation}
in which $N$ is the number of segmentation curves,
$\textit{\textbf{P}}_j^k$ and $\textit{\textbf{Q}}_j^k$ are the
one-neighbor control points of the joint \Bezier patches along the
segmentation curve $\textit{\textbf{S}}_k(t)$ (see
Fig. \ref{fig:localoptimize}).

The problem to impose $C^1$-continuity can be
formulated as follows: Minimize the change of related control points
along the segmentation curves (except the one-neighbor control points
around the irregular vertex)  such that they satisfy the
$C^1$-constraints. In the following, we will solve this constrained
optimization problem using the Lagrange multiplier method.

In order to minimize the change of all the related control points along the
segmentation curves,
the optimization term is defined as
\begin{equation}\label{eq:optterm1}
\emph{\emph{Min}} \sum_{k=1}^{N}\sum_{j=0}^{n} (\|
\emph{\textbf{P}}_j^k -\bar{\emph{\textbf{P}}}_j^k\|^2+  \|
\emph{\textbf{Q}}_j^k -\bar{\emph{\textbf{Q}}}_j^k\|^2)
\end{equation}
in which $N$ is the number of segmentation curves,
$\bar{\emph{\textbf{P}}}_j^k$ and $\bar{\emph{\textbf{Q}}}_j^k$ are
the initial control points constructed by the approach in subsection
\ref{subsection:orthoptimize}.

Combining (\ref{c1}) with (\ref{eq:optterm1}), the
Lagrange function is defined as
\begin{equation}\label{eq:lag1}
L=  \sum_{i=0}^{N}\sum_{j=0}^{n}  (\|
\emph{\textbf{P}}_j^k -\bar{\emph{\textbf{P}}}_j^k\|^2+  \|
\emph{\textbf{Q}}_j^k -\bar{\emph{\textbf{Q}}}_j^k\|^2)+ \sum_{i=0}^{N}\sum_{j=0}^{n}
\lambda_{k,j}(2\textit{\textbf{s}}_{j}^k- \emph{\textbf{P}}_j^k-\textit{\textbf{Q}}_{j}^k )
\end{equation}
where
$\lambda_{k,j}=[\lambda_{k,j}^x,\lambda_{k,j}^y]^\emph{\emph{T}}$
are Lagrange multipliers.

The unknown variables in the constrained optimization problem
(\ref{eq:lag1}) are the control points
$\emph{\textbf{P}}_j^k=[p_{j}^{k,x},p_{j}^{k,y}]$
and
$\emph{\textbf{Q}}_j^k=[q_{j}^{k,x},q_{j}^{k,y}]$. A necessary condition for
 $\emph{\textbf{P}}_j^k$,   $\emph{\textbf{Q}}_j^k$ and
$\lambda_{k,j}$ to be a solution of (\ref{eq:lag1}) is that the
corresponding partial derivatives vanish, that is,
\begin{equation}
\label{c1linear}
\left\{\begin{array}{ll}
\frac{\D\partial L}{\D\partial
\lambda_{k,j}}=
2\textit{\textbf{s}}_{j}^k-\textit{\textbf{P}}_{j}^k-\textit{\textbf{Q}}_{j}^k=0,
\quad   k=0, \cdots, N, \quad  j= 0,\cdots, n;   \\
\frac{\D
\partial L}{\D
\partial p_{j}^{k,r}}= 0, \quad  r=x,y;\\
\frac{\D\partial L}{\D\partial q_{j}^{k,r}}= 0, \quad r=x,y
\end{array}
\right.
\end{equation}

Methods such as
Gauss elimination can be employed to solve the linear system (\ref{c1linear}).

The proposed least-square scheme leads to the boundary
control points on the second layer of control mesh for each \Bezier
patch, which satisfies the $C^1$-continuity
requirements.

 \begin{figure*}[t]
\centering
\begin{minipage}[t]{3.5in}
\centering
\includegraphics[width=3.5in,height=2.4in]{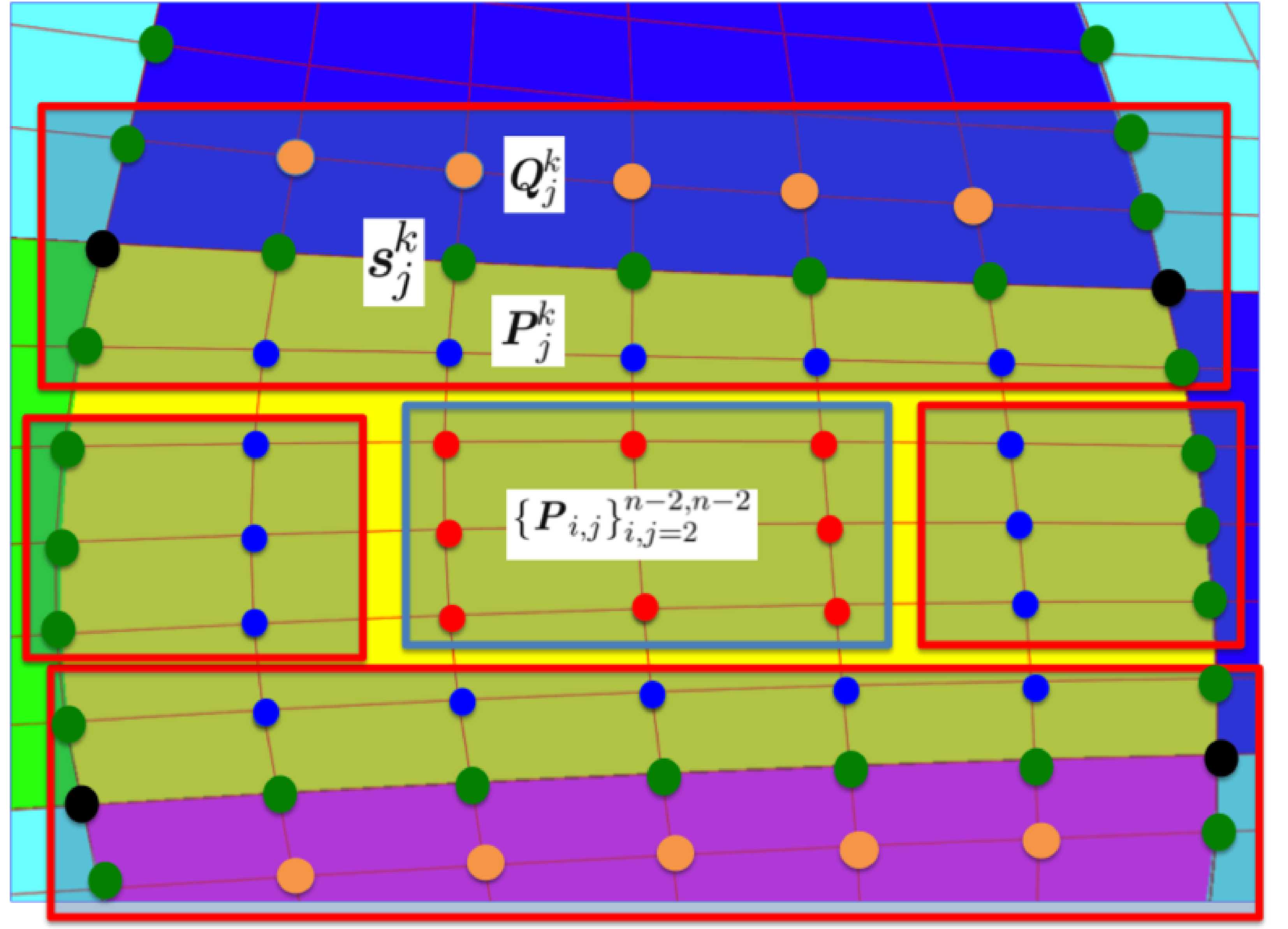}
\end{minipage}
 \\
\caption{Control points involved in the local optimization
  process: The blue  control points $\textit{\textbf{P}}_{j}^k$
  and the orange control points $\textit{\textbf{Q}}_{j}^k$ are
  determined by the $C^1$-continuity constraints with the green
  control points $\textit{\textbf{s}}_{j}^k$ of segmentation curves $\textit{\textbf{S}}_{k}(t)$ as
  proposed in subsection \ref{subsection:c1optimize} ;
   the red control points $\{\textbf{\emph{P}}_{i,j}\}_{i,j=2}^{n-2,n-2}$ are determined by the local
  linear-energy-minimizing method as described in subsection \ref{sec:energy}.  }
\label{fig:localoptimize}
\end{figure*}

 \begin{figure*}[t]
\centering
\begin{minipage}[t]{3.2in}
\centering
\includegraphics[width=3.2in ]{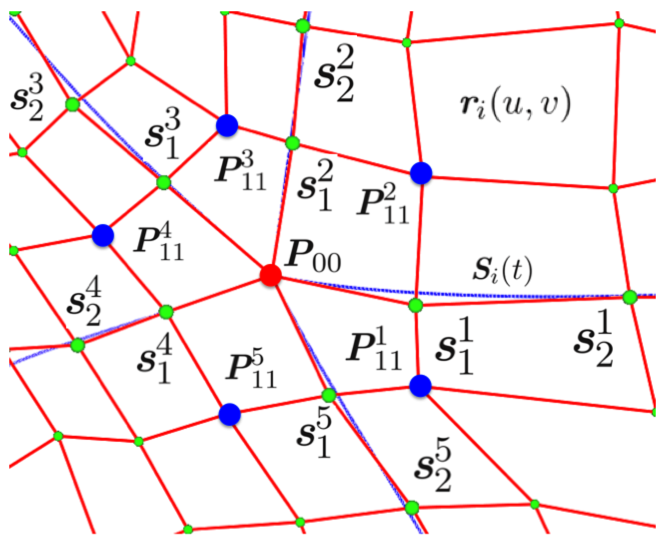}
\end{minipage}\\
\caption {Given the control points $\textit{\textbf{s}}^i_{j}$ of
  segmentation curves $\textit{\textbf{S}}_i(t)$,   construct 1-neighbor control points
  $\textit{\textbf{P}}_{11}^i$,
  around irregular vertex $\textit{\textbf{P}}_{00}$  to achieve
  $G^1$-continuity on planar parameterization, $i=1,2,\cdots,5$,\quad $j=0,1,\cdots, n $. }
\label{fig:localoptimize2}
\end{figure*}

\subsubsection{$G^1$ construction  around irregular vertex}

\label{sec:g1}

For the irregular vertices $\textit{\textbf{P}}_{00}$ with non-four-valence
on the quad mesh, some special treatments should be done in
order to achieve $G^1$ continuity at $\textit{\textbf{P}}_{00}$;  $G^1$
continuity means $C^1$-continuity of the parameterization maps through
the composition with transition maps across the common edges of the
patches. It differs from the $G^1$ continuity of the (planar) surfaces,
which are the image of these parameterizations.

According to the pattern-based construction method of topology information
described in Section \ref{sec:topology} , there are only irregular vertices of valence $3$ or
$5$. Without loss of
generality as shown in Fig. \ref{fig:localoptimize2} ,
suppose that there are $M$ \Bezier patches $\textit{\textbf{r}}_i(u,v)$ of degree $n \times n$
meeting at the common control points $\textit{\textbf{P}}_{00}$,
$i=1,2,\cdots,M$ ($M=3$ or $M=5$).  With the \Bezier segmentation curves
$\textit{\textbf{S}}_i(t)$   constructed
in Section \ref{sec:topology2}, we construct the
control points $\textit{\textbf{P}}_{11}^i$, which are nearest to the irregular vertex
$\textit{\textbf{P}}_{00}$ among the inner control points of
$\textit{\textbf{r}}_i(u,v)$, to satisfy the general $G^1$ continuity
constraints.
Specifying the transition map in the $G^1$ conditions proposed in
\cite{Mourrain:CAGD2016},   the $G^1$
continuity constraints around the irregular vertex
$\textit{\textbf{P}}_{00}$ can be obtained as follows,
\begin{eqnarray}\label{eqn:Ri}
 && (\textit{\textbf{s}}^i_{1}-\textit{\textbf{P}}_{00})=\alpha_i
(\textit{\textbf{s}}^{i+1}_1-\textit{\textbf{P}}_{00}) + \beta_i
(\textit{\textbf{s}}^{i-1}_1-\textit{\textbf{P}}_{00}),
\label{eqn:angle}\\
&&\mathbf{0}= n\alpha_i (\textit{\textbf{P}}_{11}^i-\textit{\textbf{s}}^i_{1}) +n \beta_i
(\textit{\textbf{P}}_{11}^{i-1}-\textit{\textbf{s}}^{i}_{1}) -  (n-1)
(\textit{\textbf{s}}^i_{2}-\textit{\textbf{s}}^{i}_1)+
(\textit{\textbf{s}}^i_{1}-\textit{\textbf{P}}_{00})  ,
  i=1,2, \dots, M  \label{eqn:systemG1}
 \end{eqnarray}
Eq.\eqref{eqn:Ri} and Eq. \eqref{eqn:systemG1} can be deduced from Eq. (13) and the system after Eq. (22) in \cite{Mourrain:CAGD2016}, where the derivatives for the functions are expressed in terms of the control points of the \Bezier patches. The similar $G^1$-continuity constraints can be found in \cite{Jorg:CAGD2017,Jorg:CAD2016}.

The control points
$\textit{\textbf{s}}^i_{j}$ of the segmentation curves
are  fixed, $i=1,2,\cdots,M$, $j=0,1,\cdots, n $. In order
to satisfy the first condition \eqref{eqn:angle} of $G^1$ continuity,
we firstly determine the values of $\alpha_i$ and $\beta_i$ by
solving a $2\times2$ linear system  \eqref{eqn:angle} for a specified $i$. Then, the unknown
control points $\textit{\textbf{P}}_{11}^i$ are determined according to
the second condition \eqref{eqn:systemG1}, which can be rewritten as a
linear system,
\begin{equation}\label{eqn:g1system}
\left( {\begin{array}{*{20}{c}}
\alpha_1&0&  \ldots &0&\beta_1\\
\beta_2&\alpha_2 & \ldots &0&0\\
0&\beta_3&   \ldots &0&0\\
 \vdots & \vdots   & \ddots & \vdots & \vdots  \\
0&0 &   \ldots &\alpha_{M-1}&0\\
0&0 &  \ldots &\beta_M&\alpha_M
\end{array}} \right) \left( {\begin{array}{*{20}{c}}
\textit{\textbf{P}}_{11}^1\\
\textit{\textbf{P}}_{11}^2\\
\textit{\textbf{P}}_{11}^3\\
\vdots\\
\textit{\textbf{P}}_{11}^{M-1}\\
\textit{\textbf{P}}_{11}^M \end{array}} \right) =\left( {\begin{array}{*{20}{c}}
\textit{\textbf{H}}_1 \\
\textit{\textbf{H}}_2\\
\textit{\textbf{H}}_3\\
\vdots\\
\textit{\textbf{H}}_{M-1}\\
\textit{\textbf{H}}_{M}  \end{array}} \right)
\end{equation}
in which $\textit{\textbf{H}}_{i}=(\alpha_i+\beta_i-1)
\textit{\textbf{s}}^{i}_1  +(1-\frac{\D 1}{\D
  n} )  \textit{\textbf{s}}^{i}_2  +\frac{\D 1}{\D
  n}\textit{\textbf{P}}_{00}$, \quad $i=1,2,\cdots,M$.

\begin{proposition}\label{prop:invertible}
The coefficient matrix involved in the linear system
\eqref{eqn:g1system} is
invertible for $M=3$ and $M=5$ , i.e, there exists unique solution of the linear system
\eqref{eqn:g1system}.
\end{proposition}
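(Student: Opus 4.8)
The plan is to compute the determinant of the coefficient matrix directly and show it is nonzero. Write the matrix as $A=(a_{ij})$ with $a_{ii}=\alpha_i$, $a_{i,i-1}=\beta_i$ and the corner entry $a_{1,M}=\beta_1$ (all indices cyclic), every other entry being zero; this is a cyclic lower-bidiagonal matrix. Expanding $\det A$ along the first row, the diagonal entry $\alpha_1$ multiplies a lower-triangular minor equal to $\prod_{i=2}^{M}\alpha_i$, while the corner entry $\beta_1$ multiplies an upper-triangular minor equal to $\prod_{i=2}^{M}\beta_i$ and carries the cofactor sign $(-1)^{1+M}$. Hence
\[
\det A=\prod_{i=1}^{M}\alpha_i+(-1)^{M+1}\prod_{i=1}^{M}\beta_i .
\]
Since $M=3$ and $M=5$ are both odd, $(-1)^{M+1}=1$, so it suffices to prove $\prod_{i=1}^{M}\alpha_i+\prod_{i=1}^{M}\beta_i\neq 0$.

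Next I would make $\alpha_i,\beta_i$ explicit from the $2\times2$ system \eqref{eqn:angle}. Put $\mathbf{e}_i=\textit{\textbf{s}}^i_{1}-\textit{\textbf{P}}_{00}$ and write $w_{pq}$ for the scalar $2$D cross product $\mathbf{e}_p\times\mathbf{e}_q$ (so $w_{pq}=-w_{qp}$). Applying Cramer's rule to $\mathbf{e}_i=\alpha_i\mathbf{e}_{i+1}+\beta_i\mathbf{e}_{i-1}$ gives
\[
\alpha_i=\frac{w_{i,i-1}}{w_{i+1,i-1}},\qquad \beta_i=\frac{w_{i+1,i}}{w_{i+1,i-1}},
\]
with indices modulo $M$. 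Note that solvability of \eqref{eqn:angle} is exactly the condition $w_{i+1,i-1}\neq 0$ for each $i$, so these quantities are well defined.

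The key step is a cyclic cancellation of the numerators. As $i$ runs over $1,\dots,M$, the index pairs $(i,i-1)$ and the pairs $(i+1,i)$ both range over the same family of consecutive pairs around the cycle, so
\[
\prod_{i=1}^{M}w_{i,i-1}=\prod_{i=1}^{M}w_{i+1,i}=:P,\qquad \prod_{i=1}^{M}w_{i+1,i-1}=:D .
\]
Consequently $\prod_{i=1}^{M}\alpha_i=\prod_{i=1}^{M}\beta_i=P/D$, and the determinant formula collapses to $\det A=2P/D$. It then remains only to show $P\neq 0$: since $P=\prod_{i=1}^{M}(\mathbf{e}_{i+1}\times\mathbf{e}_i)$, this is precisely the requirement that every pair of consecutive edge vectors $\mathbf{e}_i,\mathbf{e}_{i+1}$ emanating from the irregular vertex be linearly independent, together with $D\neq 0$ coming from solvability. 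Both hold for any valid (non-degenerate) quadrangulation, and in particular the quasi-uniform angular spacing near $2\pi/\rho$ enforced in Section \ref{sec:topology2} guarantees them for $\rho=3,5$.

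The main obstacle I expect is not the linear algebra but stating the geometric non-degeneracy hypothesis precisely and justifying it for $M=3$ and $M=5$: one must ensure that consecutive edge vectors are independent (so $P\neq 0$) and that the skip-one pairs $\mathbf{e}_{i-1},\mathbf{e}_{i+1}$ are independent (so $D\neq 0$), since near-collinear tangents would break the argument. For the valence-$3$ and valence-$5$ configurations produced by the pattern-based topology generation these conditions are automatic, so I would record the independence of consecutive edge vectors explicitly as the hypothesis under which the proposition holds and conclude $\det A=2P/D\neq 0$.
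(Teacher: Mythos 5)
Your proof is correct and follows the same overall skeleton as the paper's --- both reduce the claim to the determinant formula $\det A=\prod_{i=1}^{M}\alpha_i+\prod_{i=1}^{M}\beta_i$ (your cofactor expansion, with the sign $(-1)^{M+1}=+1$ for odd $M$, reproduces exactly the paper's Eq.~\eqref{eqn:deter}) and then to the identity $\prod_{i=1}^{M}\alpha_i=\prod_{i=1}^{M}\beta_i$. Where you diverge is in how that identity is justified: the paper simply cites the vertex-enclosure analysis of \cite{Jorg:CAD2011}, whereas you derive it from scratch by solving \eqref{eqn:angle} with Cramer's rule in terms of $2$D cross products $w_{pq}$ and observing the cyclic cancellation $\prod_i w_{i,i-1}=\prod_i w_{i+1,i}$. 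This buys two things. First, a self-contained, elementary proof of the key lemma. Second --- and more importantly --- your closed form $\det A=2P/D$ exposes a point the paper's proof silently skips: from $\prod\alpha_i=\prod\beta_i$ alone one only gets $\det A=2\prod\alpha_i$, which could still vanish if some $\alpha_i=0$, i.e.\ if some pair of consecutive edge vectors $\textit{\textbf{s}}^i_1-\textit{\textbf{P}}_{00}$, $\textit{\textbf{s}}^{i+1}_1-\textit{\textbf{P}}_{00}$ were collinear. You correctly isolate the geometric non-degeneracy hypotheses ($P\neq 0$ and $D\neq 0$) under which the conclusion actually holds, and note that they are satisfied by the quasi-uniform tangent configurations enforced at valence-$3$ and valence-$5$ vertices in Section~\ref{sec:topology2}. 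So your argument is not only valid but strictly more careful than the one in the paper; the only thing the paper's citation-based route buys is brevity.
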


\begin{proof}
The determinant of   $M \times M$  matrix in the linear system
\eqref{eqn:g1system} for $M=3$ and $M=5$ is
\begin{equation}\label{eqn:deter}
\prod\limits_{i=1}^{M} \alpha_i +
\prod\limits_{i=1}^{M} \beta_i.
\end{equation}
in which $\alpha_i$ and $\beta_i$ are determined from (\ref{eqn:angle}).

From the vertex enclosure formulation with $G^1$-continuity \cite{Jorg:CAD2011}, we have
$$
\prod\limits_{i=1}^{M} \alpha_i =
\prod\limits_{i=1}^{M} \beta_i.
$$
From (\ref{eqn:deter}),
the $M \times M$  matrix  in the linear system
\eqref{eqn:g1system}  is invertible, i.e, there exists unique solution of the linear system\eqref{eqn:g1system}.
\end{proof}

\subsection{Local $C^1$ linear-energy-minimizing method for constructing inner
  control points }\label{sec:energy}

In this section, we will propose a local $C^1$ linear-energy-minimizing method for
constructing interior $(n-3)\times(m-3)$  control points of each patch with prescribed
boundary two-layer control points as shown in Fig.\ref{fig:localoptimize}. This
problem can be stated as follows: Given the boundary control points on
the first two layers of the control mesh, i.e. $\textit{\textbf{P}}_{0,j}$,$\textit{\textbf{P}}_{1,j}$,$\textit{\textbf{P}}_{n-1,j}$
$\textit{\textbf{P}}_{n,j}$, $\textit{\textbf{P}}_{i,0}$,
$\textit{\textbf{P}}_{i,1}$, $\textit{\textbf{P}}_{i,n-1}$   and
$\textit{\textbf{P}}_{i,n}$ of a tensor \Bezier patch
$\textbf{\emph{r}}(u,v)$, find the remaining interior control points
$\{\textbf{\emph{P}}_{i,j}\}_{i,j=2}^{n-2,n-2}$, such that the
following energy function $E(\textbf{\emph{r}})$ is minimal,
\begin{equation}\label{eq:quasifunc}
E(\emph{\textbf{r}}) = \int_\Omega \tau_1 (\|\textit{\textbf{r}}_{u}\|^2 +\|\textit{\textbf{r}}_{v} \|^2)
+  \tau_2 (\|\textit{\textbf{r}}_{uu}\|^2 + 2\|\textit{\textbf{r}}_{uv}\|^2
+ \|
\textit{\textbf{r}}_{vv} \|^2)dudv.
\end{equation}
The energy function $E(\textbf{\emph{r}})$ is related to the
orthogonality and uniformity of the iso-parametric structure on
the \Bezier surface \cite{xu:jcp2013}. $\tau_1$ and $\tau_2$ are
positive weights to control the parameterization results: if $\tau_1$ is big, then the
resulting iso-parametric structure has better orthogonality; if $\tau_2$
has a small value, then we can obtain an iso-parametric grid with
better uniformity.

Different from the non-linear optimization method
\cite{xu:jcp2013}, with prescribed boundary two-layer control points, we will give the sufficient and necessary
condition for the interior control points of the \Bezier patch with minimal energy
$E(\textbf{\emph{r}})$ in the following proposition.
\begin{proposition}
\label{th:sncondition1} Given the boundary
control points $\textit{\textbf{P}}_{0,j}$,$\textit{\textbf{P}}_{1,j}$,$\textit{\textbf{P}}_{n-1,j}$
$\textit{\textbf{P}}_{n,j}$, $\textit{\textbf{P}}_{i,0}$,
$\textit{\textbf{P}}_{i,1}$, $\textit{\textbf{P}}_{i,n-1}$   and
$\textit{\textbf{P}}_{i,n}$  of a tensor product B\'{e}zier surface
$\textbf{r}(u,v)$, then
$\textbf{r}(u,v)$ has minimal energy $E(\textbf{r})$
if and only if the remaining inner control points
$\{\textbf{\emph{\emph{P}}}_{i,j}\}_{i,j=2}^{n-2,n-2}$ satisfy
\begin{eqnarray}\label{eqn:result}
\textbf{0}&=&\frac{\tau_1}{4(n-1)}\left(\sum\limits_{k=0}^{n-1}\sum\limits_{l=0}^{n}{\frac{ {n \choose
 l}}{  {2n \choose
 l+j}}C_{n,i}^{k}\Delta^{1,0}\textit{\textbf{P}}_{kl}}+\sum\limits_{k=0}^{n}\sum\limits_{l=0}^{n-1}{\frac{ {n \choose
 k}}{  {2n \choose
 i+k}}C_{n,j}^{l}\Delta^{0,1}\textit{\textbf{P}}_{kl}}\right)
\nonumber \\
& &+\frac{2 \tau_2}{(2n-1)^2}\sum\limits_{k=0}^{n-1}\sum\limits_{l=0}^{n-1}{\frac{ {n-1 \choose
 k}{n-1 \choose
 l}}{ {2n-2 \choose
 i+k-1}{2n-2 \choose
 l+j-1}}B_{n,i}^{k}B_{n,j}^{l}\Delta^{1,1}\textit{\textbf{P}}_{kl}}\\
&&+\frac{\tau_2}{(2n-3)(2n+1)}\left(\sum\limits_{k=0}^{n-2}\sum\limits_{l=0}^{n}{\frac{ {n-2 \choose
 k}{n \choose
 l}}{ {2n-4 \choose
 i+k-2}{2n \choose
 l+j}}A_{n,i}^{k}\Delta^{2,0}\textit{\textbf{P}}_{kl}} +\sum\limits_{k=0}^{n}\sum\limits_{l=0}^{n-2}{\frac{ {n \choose
 k}{n-2 \choose
 l}}{ {2n \choose
 i+k}{2n-4 \choose
 l+j-2}}A_{n,j}^{l}\Delta^{0,2}\textit{\textbf{P}}_{kl}}\right) \nonumber
\end{eqnarray}
where
\begin{eqnarray*}
A_{n,i}^{k}&=&\frac{nk(n-1)(k-1)+i^2(n-2)(n-3)-i(n-3)(2kn+n-2k-2)}{ (2n-i-k-2)(2n-i-k-3)}\\
B_{n,i}^{k}&=&\frac{[n(k-i)-k][n(k-i)-n+2i]-ni(2n-i-k-1)}{ (2n-i-k)(2n-i-k-1)}\\
C_{n,i}^{k}&=&\frac{ni-nk-i}{ 2n-i-k-1}\frac{ {n-1 \choose
 k}}{  {2n-2 \choose
 i+k-1}}
\end{eqnarray*}
\end{proposition}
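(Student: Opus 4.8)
The plan is to observe that the energy \eqref{eq:quasifunc} is a quadratic functional of the control points of $\textbf{\emph{r}}(u,v)$, so that, with the two outer layers held fixed, minimizing it over the free interior points $\{\textbf{\emph{P}}_{i,j}\}_{i,j=2}^{n-2,n-2}$ is an unconstrained convex quadratic program. Since $E(\textbf{\emph{r}})$ is an integral of sums of squared norms, its Hessian with respect to the control points is positive semidefinite, so $E$ is convex as a function of the free points; hence a configuration is a global minimizer \emph{if and only if} the gradient of $E$ with respect to each free control point vanishes. The statement therefore reduces to showing that the stationarity conditions $\partial E/\partial\textbf{\emph{P}}_{i,j}=\mathbf{0}$, for $2\le i,j\le n-2$, are exactly \eqref{eqn:result}.

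First I would write the partial derivatives of the Bézier surface in forward-difference form,
\[
\textbf{\emph{r}}_u = n\sum_{k=0}^{n-1}\sum_{l=0}^{n}\Delta^{1,0}\textbf{\emph{P}}_{kl}\,B_k^{n-1}(u)B_l^{n}(v),
\]
and analogously for $\textbf{\emph{r}}_v,\textbf{\emph{r}}_{uu},\textbf{\emph{r}}_{uv},\textbf{\emph{r}}_{vv}$, using the difference operators $\Delta^{2,0},\Delta^{1,1},\Delta^{0,2}$ and the lower-degree Bernstein bases, together with the prefactors $n$, $n(n-1)$ and $n^2$. Substituting into \eqref{eq:quasifunc}, each squared-norm term becomes a double sum of inner products of control-point differences weighted by a separable integral $\int_0^1 B_a^{p}(u)B_c^{p}(u)\,du\cdot\int_0^1 B_b^{q}(v)B_d^{q}(v)\,dv$. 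Combining Lemma \ref{lemma:product1} with \eqref{eq:integration} evaluates these in closed form, $\int_0^1 B_a^{p}B_c^{p}\,du=\binom{p}{a}\binom{p}{c}/\big[(2p+1)\binom{2p}{a+c}\big]$, turning $E$ into an explicit quadratic form in the $\textbf{\emph{P}}_{kl}$ whose coefficients are ratios of binomial coefficients.

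Next I would differentiate this quadratic form with respect to a single free control point $\textbf{\emph{P}}_{i,j}$ and set the result to zero. The differentiation must pass through the difference operators, so each term $\Delta^{\cdot,\cdot}\textbf{\emph{P}}_{kl}$ contributes whenever one of the indices it couples equals $(i,j)$; gathering these contributions is a discrete summation-by-parts that shifts the summation indices (explaining the $i+k-1$ and $i+k-2$ shifts visible in the denominators of $C_{n,i}^{k}$, $B_{n,i}^{k}$ and $A_{n,i}^{k}$). The $\tau_1$ terms $\|\textbf{\emph{r}}_u\|^2+\|\textbf{\emph{r}}_v\|^2$ yield the two single-difference sums with weight $C_{n,i}^{k}$; the mixed term $2\|\textbf{\emph{r}}_{uv}\|^2$ yields the $\Delta^{1,1}$ sum with weight $B_{n,i}^{k}B_{n,j}^{l}$; and $\|\textbf{\emph{r}}_{uu}\|^2,\|\textbf{\emph{r}}_{vv}\|^2$ yield the $\Delta^{2,0},\Delta^{0,2}$ sums with weight $A_{n,i}^{k}$. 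The outer constants $\tfrac{1}{4(n-1)}$, $\tfrac{2}{(2n-1)^2}$, $\tfrac{1}{(2n-3)(2n+1)}$, together with the factors carried inside $A,B,C$, collect the prefactors $n,n(n-1),n^2$ and the $\tfrac{1}{2p+1}$ integral normalizations.

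The hard part will be this final bookkeeping. After differentiating through the difference operators, each weight appears as a sum of several shifted binomial-ratio terms, and the real work is to verify the binomial identities that collapse them into the compact closed forms $A_{n,i}^{k}$, $B_{n,i}^{k}$, $C_{n,i}^{k}$; each collapse is an elementary but delicate identity among products and ratios of binomial coefficients with shifted arguments, and one must also keep the index ranges consistent along the boundary of the free region. Once these simplifications are confirmed the stationarity condition is precisely \eqref{eqn:result}, and the convexity established in the first step promotes ``stationary'' to ``globally minimal'', which gives the claimed equivalence.
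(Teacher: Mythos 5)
Your proposal is correct and follows essentially the same route as the paper's proof in Appendix~II: write the derivatives $\textbf{\emph{r}}_u,\dots,\textbf{\emph{r}}_{vv}$ in forward-difference Bernstein form, differentiate the quadratic energy with respect to each free control point, and evaluate the resulting integrals via the Bernstein product and integration formulas to obtain \eqref{eqn:result}. Your explicit convexity argument for upgrading ``stationary'' to ``globally minimal'' is a slightly more careful justification of the ``if and only if'' than the paper's bare assertion that an extremum occurs exactly where the gradient vanishes, but it is a refinement of the same approach rather than a different one.
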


A proof of Proposition \ref{th:sncondition1} is given in Appendix II.

Since the energy functional $E(\textbf{r})$ has a lower bound, the
corresponding energy-minimizing \Bezier surface always exists. From
Proposition  \ref{th:sncondition1}, for each unknown interior  control point, an
equation is obtained. Hence, we can get a linear system with
$(n-3)\times(n-3)$ equations and $(n-3)\times(m-3)$ variables as follows,
\[
\textbf{M} \textbf{P} = \textbf{B}
\]
in which $\textbf{M}$ is the coefficient matrix determined by
Eq.(\ref{eqn:result}),
$\textbf{P}=\{\textbf{\emph{P}}_{i,j}\}_{i,j=2}^{n-2,n-2}$
is the set of unknown interior control points, $\textbf{B}$ is
the right-hand side related to the specified control points on the
first two layers of control mesh.

Solving this linear system, the unknown interior control points
 $\{\textbf{\emph{P}}_{i,j}\}_{i,j=2}^{n-2,n-2}$ can be
represented as a linear combination of the specified control
points on the first two layers of the control mesh. Furthermore, as each
\Bezier patch in the computational domain has  the same
coefficient matrix  $\textbf{M}$, we only compute the
inverse of $\textbf{M}$ once and reuse it for the remaining \Bezier patches in
the computational domain.

\subsection{Local optimization approach for  injective parameterization}
\label{sec:valid}

A parameterization is valid if it has no self-intersections, meaning the mapping from the
parametric domain to the computational domain is injective.
After the local energy-minimizing patch construction described in
Section \ref{sec:energy} , usually most of the \Bezier patches are valid. However, for a few patches the resulting
parameterization may still have self-intersections. In this section,
we will firstly identify these invalid patches and subsequently reposition their control points to recover validity.

A parameterization is injective if its Jacobian is
positive everywhere \cite{gravessen:parameterization, xu:cmame2011}. For a planar \Bezier surface $\textit{\textbf{r}}(u,v)=
\sum\limits_{i=0}^{n}\sum\limits_{j=0}^{n} \pb_{i,j}
B_i^{n}(u) B_j^{n}(v)$, its Jacobian
can be represented as a high-order Bernstein polynomial
\cite{gravessen:parameterization, xu:cmame2011},
\begin{eqnarray}
J (u,v) &=&
\sum_{i=0}^{2n-1}\sum_{j=0}^{2n-1}  \alpha_{ij}
B_i^{2n-1}(u) B_j^{2n-1}(v).
\label{eq:J}
\end{eqnarray}
Hence, the Jacobian $J(u,v)$ is bounded by
\begin{equation}
 \min_{ 0 \leq i,j \leq 2n-1
 } \alpha_{ij} \leq J(u,v) \leq
 \max_{ 0 \leq i,j \leq 2n-1 } \alpha_{ij}
\end{equation}

Our invalid-patch-finding method can be described as follows: For each
\Bezier patch $\textit{\textbf{r}}(u,v)$ on the computational domain
$\Omega$, we compute its Jacobian coefficients  $\alpha_{ij}$
according to Eq. (\ref{eq:J}); if  $\displaystyle{\min_{ 0 \leq i,j \leq 2n-1
}  \alpha_{ij}}  > 0$, then
$\textit{\textbf{r}}(u,v)$  is called \emph{valid patch}; otherwise,
$\textit{\textbf{r}}(u,v)$ is called  \emph{invalid patch}.

After finding all the invalid patches on the planar parameterization,
we will repair the invalid patch locally by repositioning its
internal control points such that all the Jacobian coefficients
$\sigma_{ij}$ are positive. Gravessen et
al. \cite{gravessen:parameterization}
and Xu et al. \cite{xu:cmame2011} investigated this problem with
 a general non-linear constrained optimization framework,
\begin{eqnarray}\label{eq:alpha}
    && \text{min} \quad E(\emph{\textbf{r}}(u,v))  \quad \text{s.t.} \quad  \alpha_{ij}> 0
\end{eqnarray}
in which $E(\emph{\textbf{r}}(u,v))$ is defined as (\ref{eq:quasifunc}).

In contrast to the proposed optimization method in
\cite{gravessen:parameterization, xu:cmame2011},
we solve this constrained optimization problem using the
classical logarithmic-barrier method \cite{logbarrier, falini:Fiacco,untangling}:
\begin{equation}\label{eqn:optimized}
\underset{\textit{\textbf{P}}_{i,j}}{\operatorname{arg\,min}}
\quad E(\emph{\textbf{r}}(u,v)) - \mu \sum_{i=0}^{2n-1} \sum_{j=0}^{2n-1} \ln(\alpha_{ij})
\end{equation}
where $\mu$  is a positive penalty parameter.

The results from our experiments indicate that the injective
parameterization can be achieved by the log-barrier method, although
without theoretical guarantee of injectivity. Since the corresponding non-linear optimization (\ref{eqn:optimized}) is only performed on very few local invalid \Bezier patches with
limited number of design variables, we can obtain the final optimized
parameterization efficiently.

\noindent \textbf{Remark 6.1. }  For a general framework to obtain an injective parameterization, the optimization problem (\ref{eq:alpha}) can be generalized as follows :
\begin{eqnarray*}
    && \text{min} \quad E(\emph{\textbf{r}}(u,v))  \quad \text{s.t.} \quad  J (u,v)> 0
\end{eqnarray*}
in which $J(u,v)$ is defined in (\ref{eq:J}).

\noindent \textbf{Remark 6.2. }
  Teichm\"{u}ller mapping method proposed in \cite{Nian:cmame2016}, which has theoretical guarantees of injectivity,  can be used to improve the  proposed local optimization scheme .

\noindent \textbf{Remark 6.3.} Due to the $C^1/G^1$-constraints and the local optimization
method in our framework, the minimum degree of the \Bezier patches to
guarantee a solution is 4. If the degree of the input boundary is
smaller than 4, degree-elevation operation should be performed.

\section{Examples and comparison}

\label{sec:example}

Starting from a set of B-spline curves as input, the proposed
framework for analysis-suitable planar parameterization
has been implemented as a plugin in the
AXEL \footnote{http://axel.inria.fr/} platform.  In this section,  five
parameterization examples are presented to show the effectiveness
of the proposed method. The comparison with the skeleton-based
parameterization method \cite{xu:cmame2015} is also performed for three
parameterization examples.

\subsection{Metrics for quality evaluation}

In order to evaluate the quality of the planar parameterization
results,  we use the scaled Jacobians and the condition number of the Jacobian matrices as two
important criteria. The \emph{scaled Jacobian} at $\textit{\textbf{r}}(u,v)$
can be computed as follows,
\begin{equation}
J_{s}(u,v)= \frac{J(u,v)}{\|\textit{\textbf{r}}_u \| \|\textit{\textbf{r}}_v \|}
\end{equation}

A parameterization  is said to be inverted if its Jacobians at some
place is less or equal to zero. The minimal requirement for a planar parameterization to be suitable for
isogeometric analysis is that all scaled Jacobians are as close as
possible to 1.0.

The condition number $\kappa(\mathbf J)$ of a Jacobian matrix $\mathbf J$ is another important evaluation criteria, given by
\begin{equation}
\kappa(\mathbf J) = |\mathbf J| |\mathbf J^{-1}|
\end{equation}
in which
\begin{equation}
\mathbf J=\left({\begin{array}{cc}
       x_u & y_u  \\
       x_v & y_v
 \end{array}} \right),
\end{equation}
The Frobenius norm of $\mathbf J$ is defined as $|\mathbf J| =
(tr(\mathbf J^{T}\mathbf J))^{1/2}$.
It can be easily proven that the minimum of $\kappa(\mathbf J)$
is 2.0.  For the quality evaluation of planar parameterization,
the condition number $\kappa(\mathbf J)$ is smaller, the
optimization is better.

The parameterization quality with respect to the scaled Jacobians and condition number for the
examples presented in this paper are summarized in
Table. \ref{table:rrefine1} and Table. \ref{table:rrefine}.

\subsection{Effect of quad meshing}

In the proposed framework, the quad-meshing step has a great
effect on the final parameterization results.
Furthermore, the quad-meshing results depend on the approximate convex decomposition of the input computational domain. For the approximate convex decomposition method in \cite{acd},
if  different non-concavity tolerances $\epsilon$ are specified,  different number of approximate convex sub-domain can be obtained. Hence, from the same input boundary Fig. \ref{fig:exam5} (a), different quad-meshing results can be generated from different selection of non-concavity tolerance $\epsilon$ as shown in Fig. \ref{fig:exam5}(b) with $\epsilon=0.6$ and Fig. \ref{fig:exam5}(e) with $\epsilon=0.4$.
For a smaller non-concavity tolerances $\epsilon$, we can obtain a quad-meshing result with more quad elements and more irregular vertices. Moreover, the corresponding
scaled Jacobian colormap of the parameterization in Fig. \ref{fig:exam5}
(c) and Fig. \ref{fig:exam5} (g) are depicted in Fig. \ref{fig:exam5}
(d) and Fig. \ref{fig:exam5} (h).

\begin{figure}
\centering
\begin{minipage}[t]{2.in}
\centering
\includegraphics[width=2.in ]{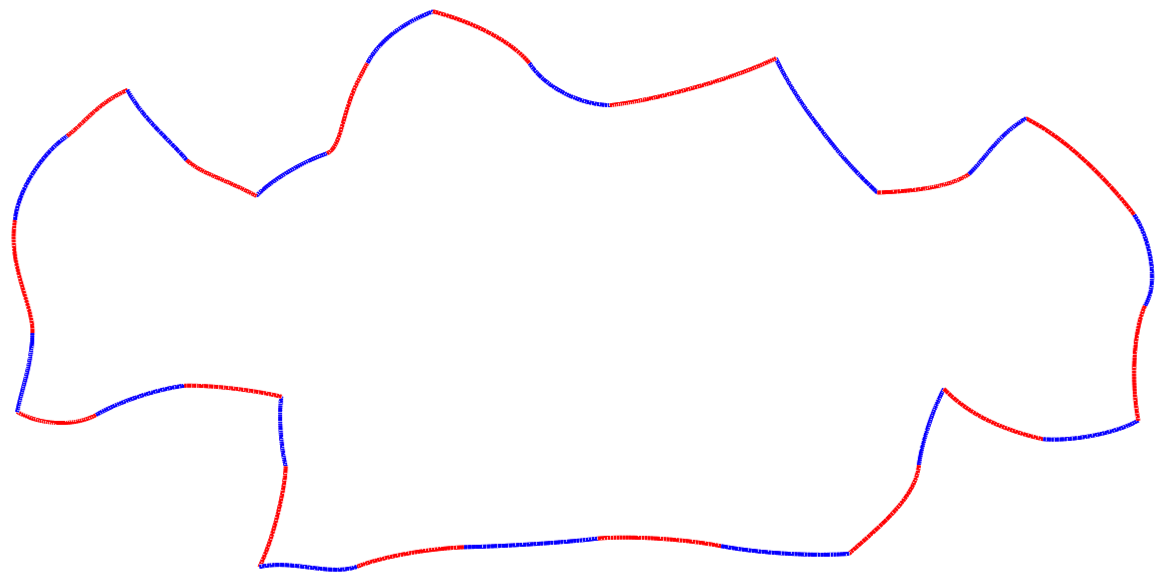}
\\ (a) boundary \Bezier curves
\end{minipage} \qquad
\begin{minipage}[t]{2.in}
\centering
\includegraphics[width=2.in ]{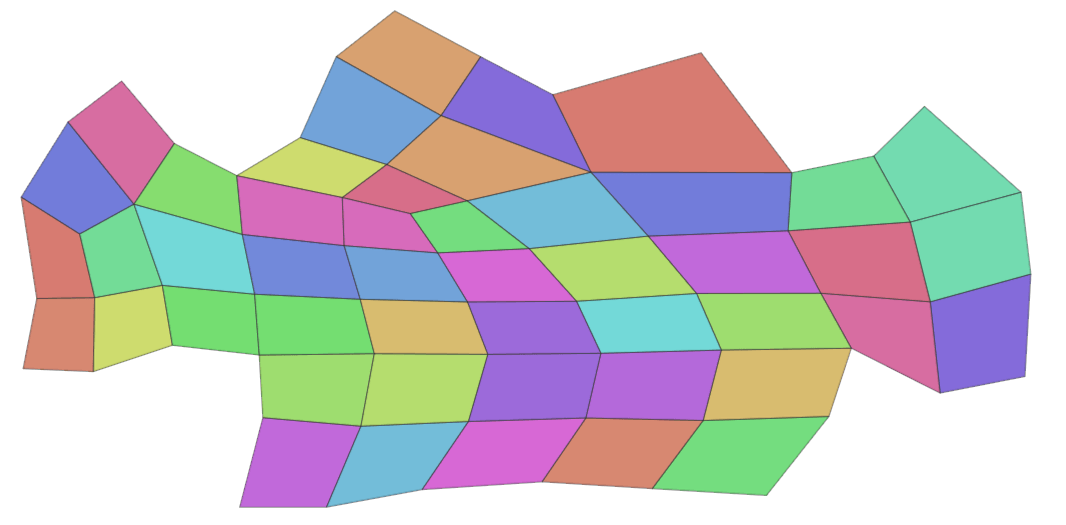}
\\ (b) quad meshing result I with $\epsilon=0.6$
\end{minipage} \\
\begin{minipage}[t]{2.in}
\centering
\includegraphics[width=2.in ]{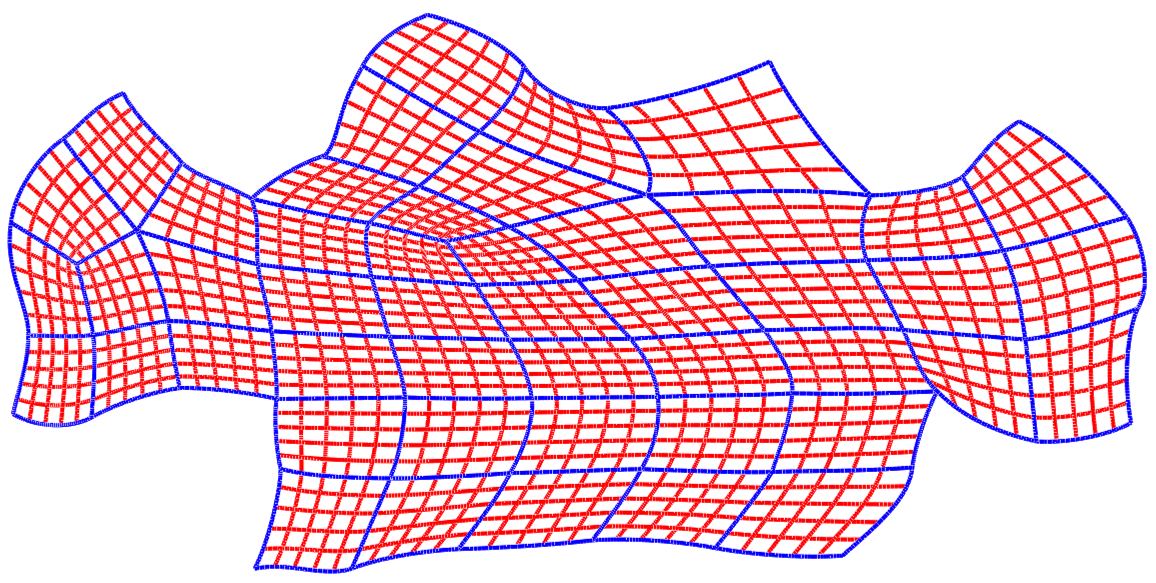}
\\ (c) parameterization result I
\end{minipage}  \quad
\begin{minipage}[t]{2.4in}
\centering
\includegraphics[width=2.4in]{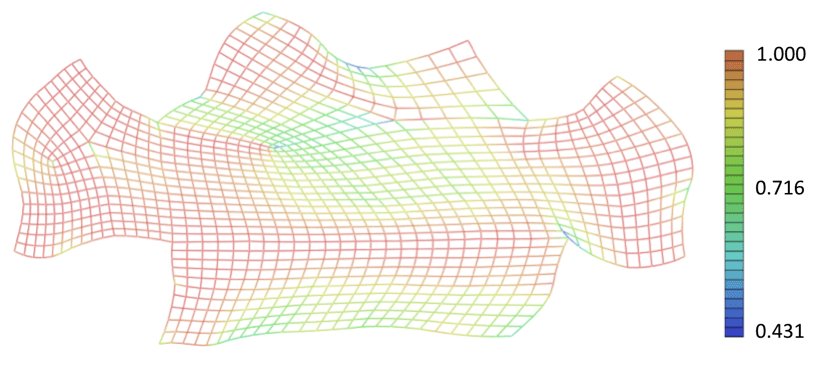}
\\ (d) Jacobian colormap I
\end{minipage}\\
\begin{minipage}[t]{2.in}
\centering
\includegraphics[width=2.in ]{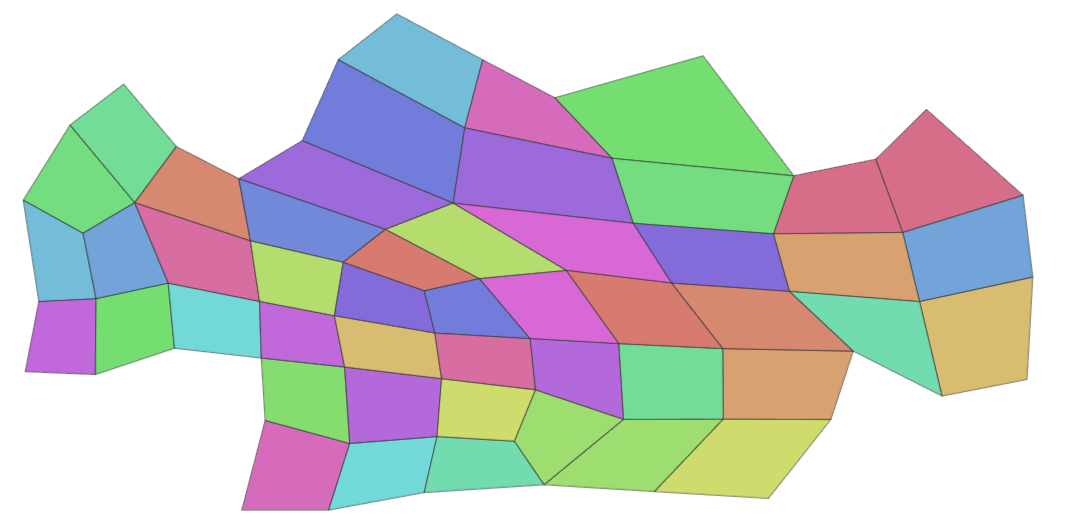}
\\ (e) quad meshing result II with $\epsilon=0.4$
\end{minipage} \qquad
\begin{minipage}[t]{2.in}
\centering
\includegraphics[width=2.in ]{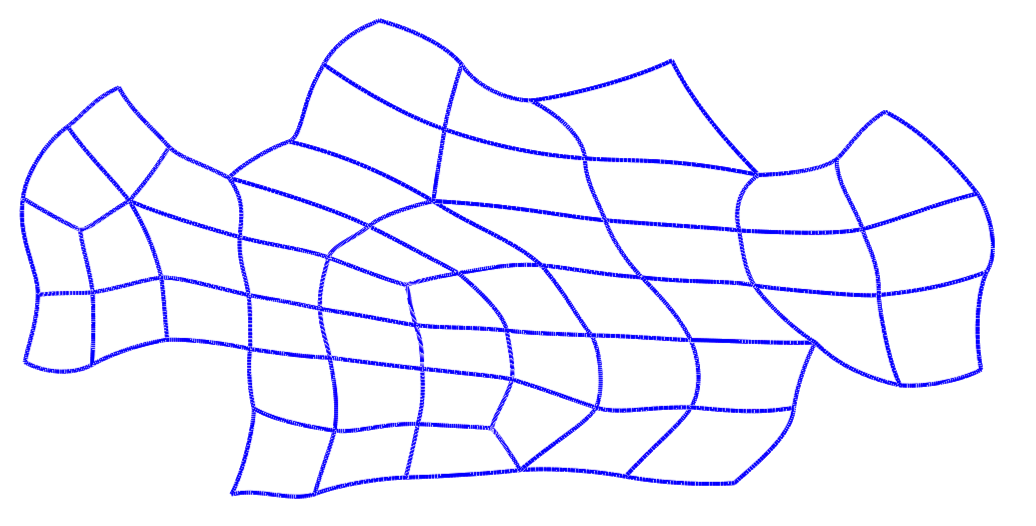}
\\ (f) segmentation curves II
\end{minipage} \\
\begin{minipage}[t]{2.in}
\centering
\includegraphics[width=2.in ]{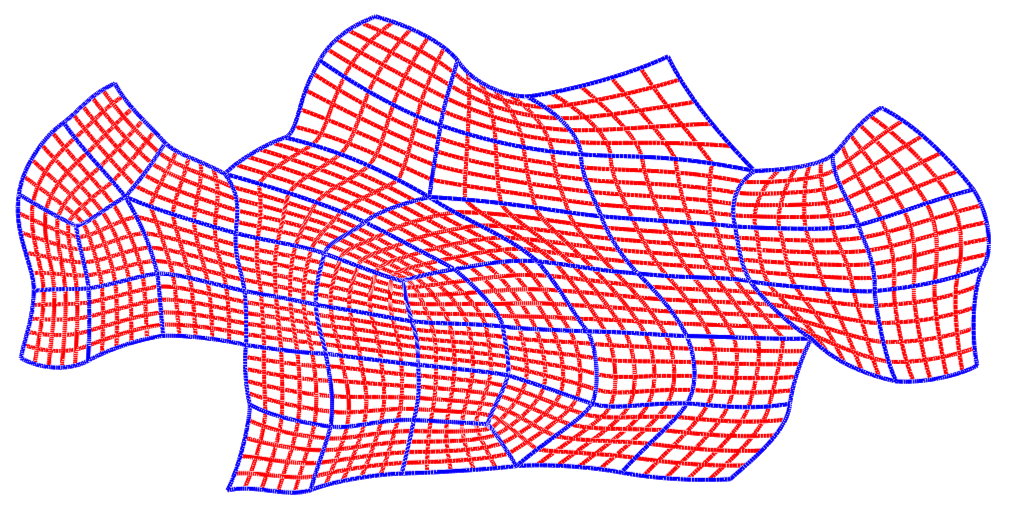}
\\ (g) parameterization result II
\end{minipage}  \quad
\begin{minipage}[t]{2.4in}
\centering
\includegraphics[width=2.4in]{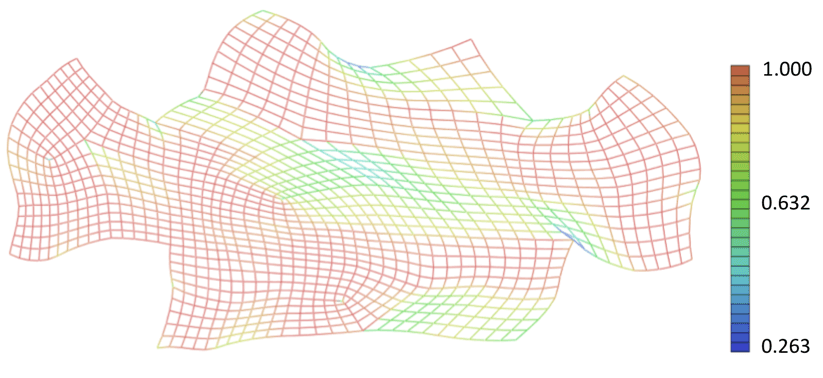}
\\ (h) Jacobian colormap
\end{minipage}\\
\begin{minipage}[t]{2.in}
\centering
\includegraphics[width=2.in ]{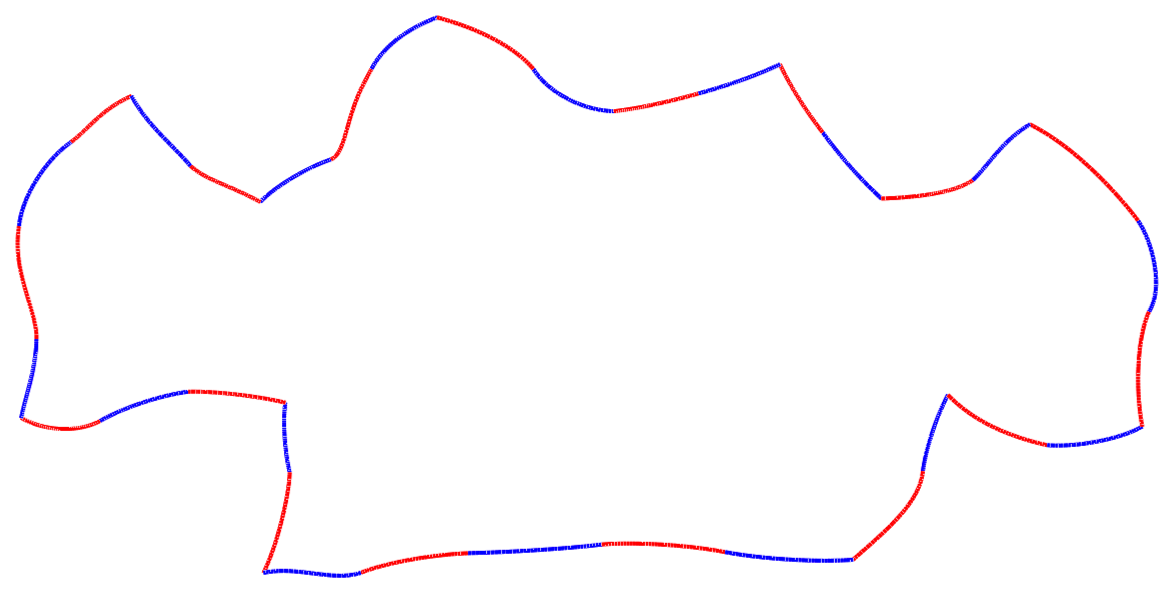}
\\ (i) boundary with more \Bezier curves
\end{minipage} \qquad
\begin{minipage}[t]{2.in}
\centering
\includegraphics[width=2.in ]{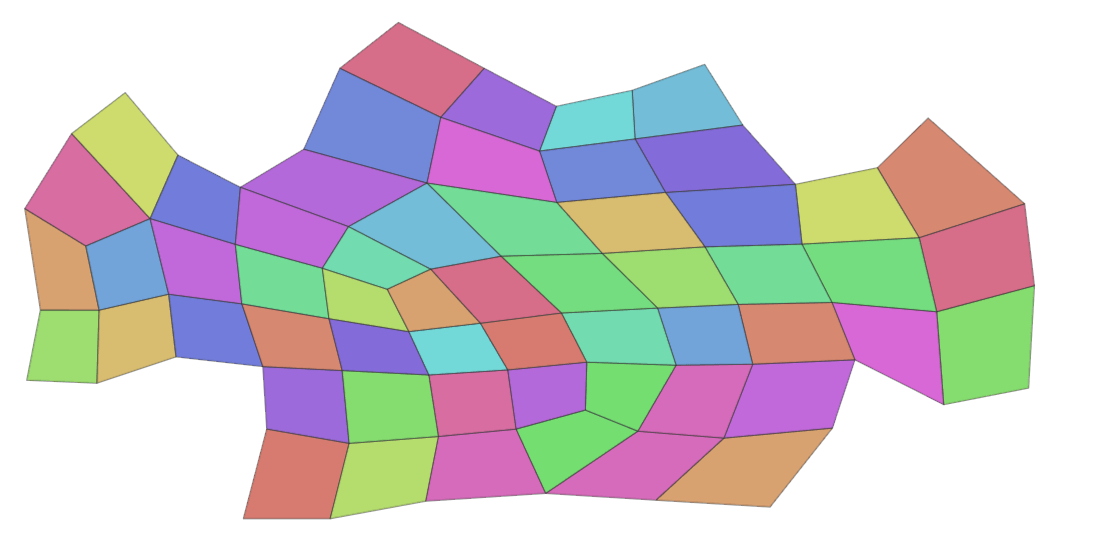}
\\ (j) quad meshing result III
\end{minipage} \\
\begin{minipage}[t]{2.in}
\centering
\includegraphics[width=2.in ]{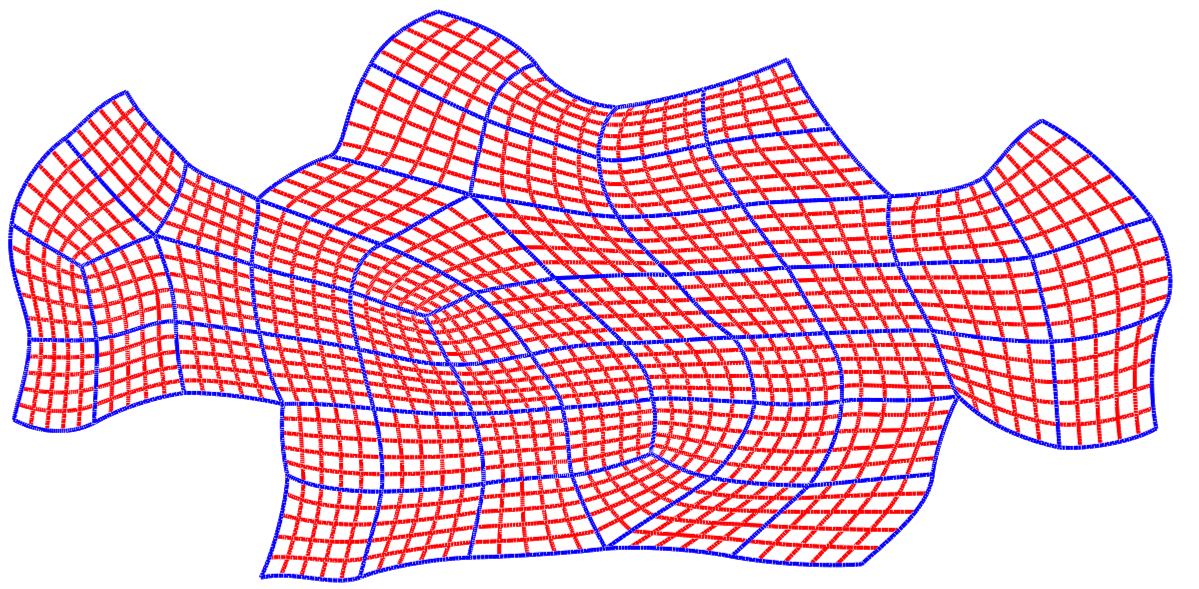}
\\ (k) parameterization result III
\end{minipage} \quad
\begin{minipage}[t]{2.4in}
\centering
\includegraphics[width=2.4in]{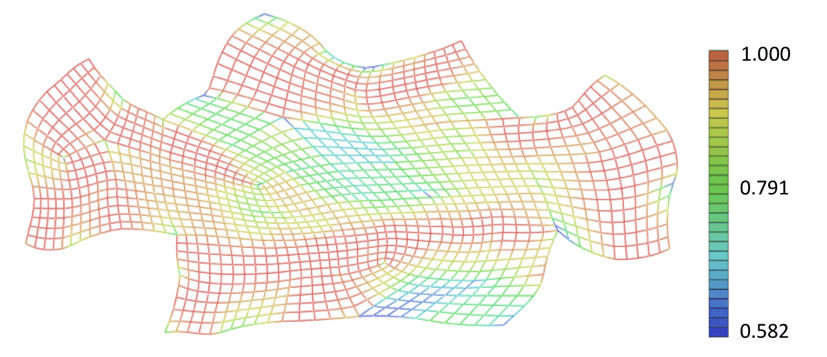}
\\ (l) Jacobian colormap III
\end{minipage}\\
\caption{Example II to show the effect of quad meshing.}
\label{fig:exam5}
\end{figure}

The number of \Bezier segments for the same input
boundary also influences the final parameterization results.
For the same input boundary with different number of \Bezier segments,
different quad-meshing result will be obtained leading to different
parameterizations. For instance, the same
boundary with different number of \Bezier segments is shown in
Fig. \ref{fig:exam5} (i). The corresponding quad-meshing result is
presented in Fig. \ref{fig:exam5} (j), and the final parameterization
and the corresponding scaled Jacobian colormap are
illustrated in Fig. \ref{fig:exam5} (k) and Fig. \ref{fig:exam5} (l)
respectively. Quantitative data of three parameterization results presented in
Fig. \ref{fig:exam5} (c), Fig. \ref{fig:exam5} (g) and
Fig. \ref{fig:exam5} (k) are listed in Table \ref{table:rrefine1}.

\begin{table}[t]
  \caption{Quantitative data for planar parameterization in
   Fig. \ref{fig:exam5} (c),
    Fig. \ref{fig:exam5} (g) and Fig. \ref{fig:exam5} (k).  $p$: degree
    of planar parameterization; \# Con.: number of control points; \#
    Patch.: number of patches. } % title of Table
  \centering % used for centering table
  \begin{tabular}{c | c c c | c c c c c c c} % centered columns (4 columns)
     \hline %inserts double horizontal lines

\centering
  \multirow{2}{*} {Example} & \multirow{2}{*} {$p$} &
  \multirow{2}{*} {\# Con.} & \multirow{2}{*} {\#Patch}
      & \multicolumn{3}{c}{Scaled Jacobian} &
      &\multicolumn{3}{c}{Condition number }\\
\cline{5-7} \cline{9-11} &&&&    Max & Average & Min &  & Max & Average   & Min   \\ [0.5ex]   % inserts table
    % heading
    \hline % inserts single horizontal line
\hline
     Fig. \ref{fig:exam5} (c) & $4$ &  817 & 47   & 1.000 & 0.9054&
     0.431  & &  4.72 & 2.41 & 2.00\\ % inserting  body of the table
     Fig. \ref{fig:exam5} (g)  & $4$ & 841   & 49  & 1.000 & 0.9024 & 0.263 &
     & 7.62 & 2.53 & 2.00\\ % inserting body of the table
     Fig. \ref{fig:exam5} (k) & $4$ & 923 & 57     & 1.000 & 0.9279
     &0.582 &  & 3.86 & 2.36 & 2.00   \\  % [1ex] adds vertical space
    \hline %inserts single line
  \end{tabular}
  \label{table:rrefine1} % is used to refer this table in the text
\end{table}

\begin{figure}
\centering
\begin{minipage}[t]{2.1in}
\centering
\includegraphics[width=2.1in]{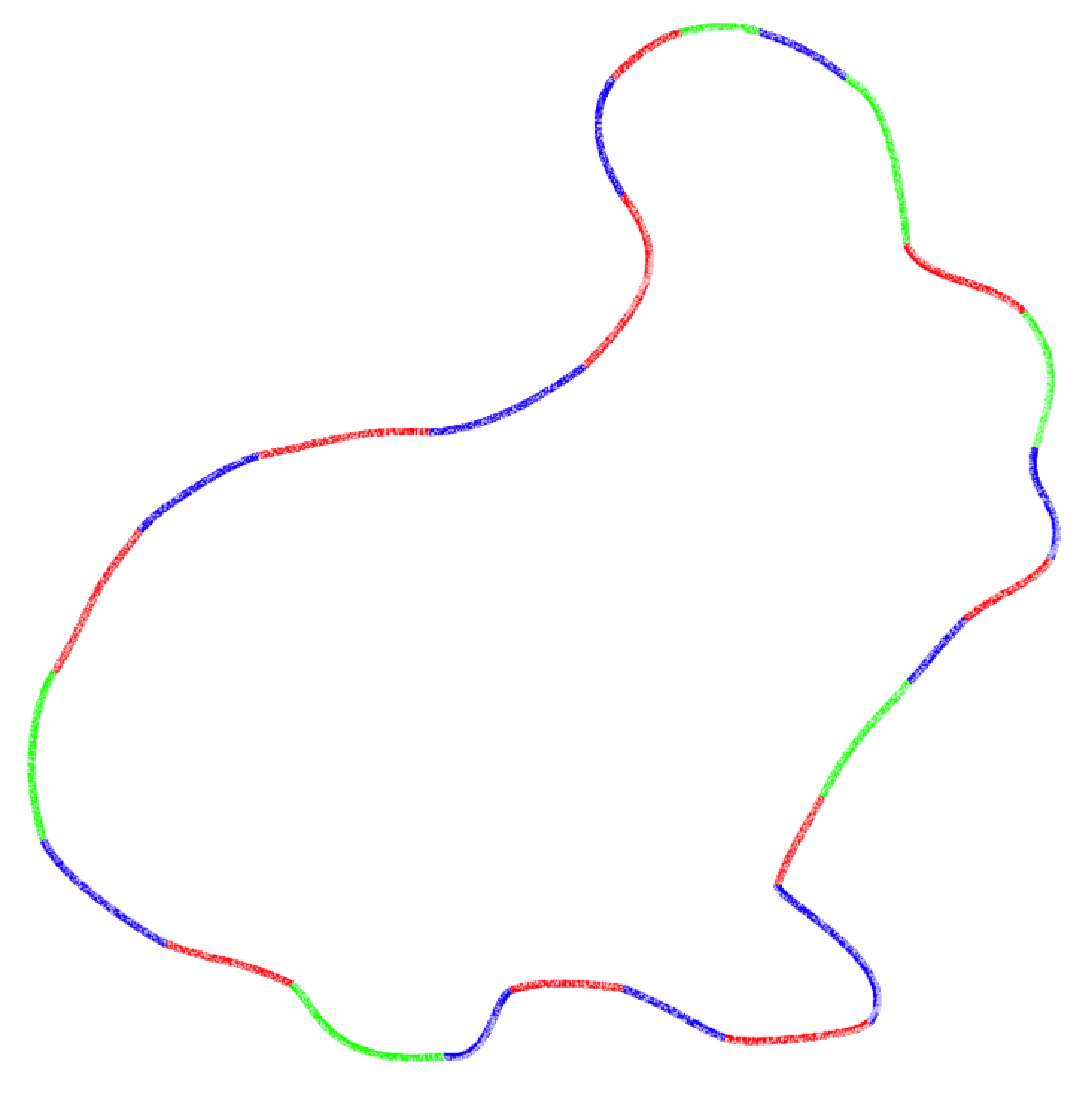}
\\ (a) boundary \Bezier curves
\end{minipage}
\begin{minipage}[t]{2.1in}
\centering
\includegraphics[width=2.3in]{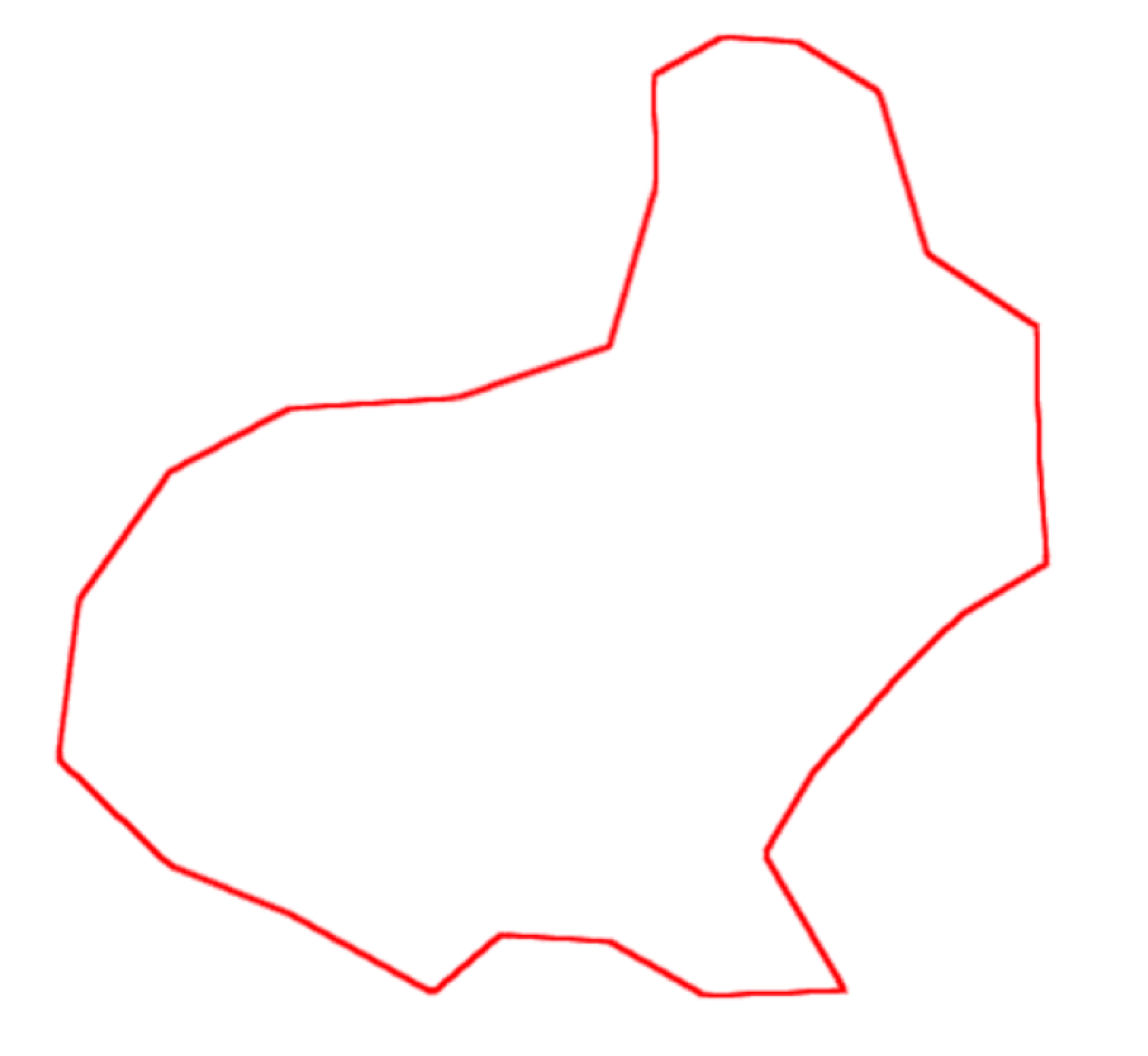}
\\ (b) discrete boundary
\end{minipage}
\begin{minipage}[t]{2.1in}
\centering
\includegraphics[width=2.2in]{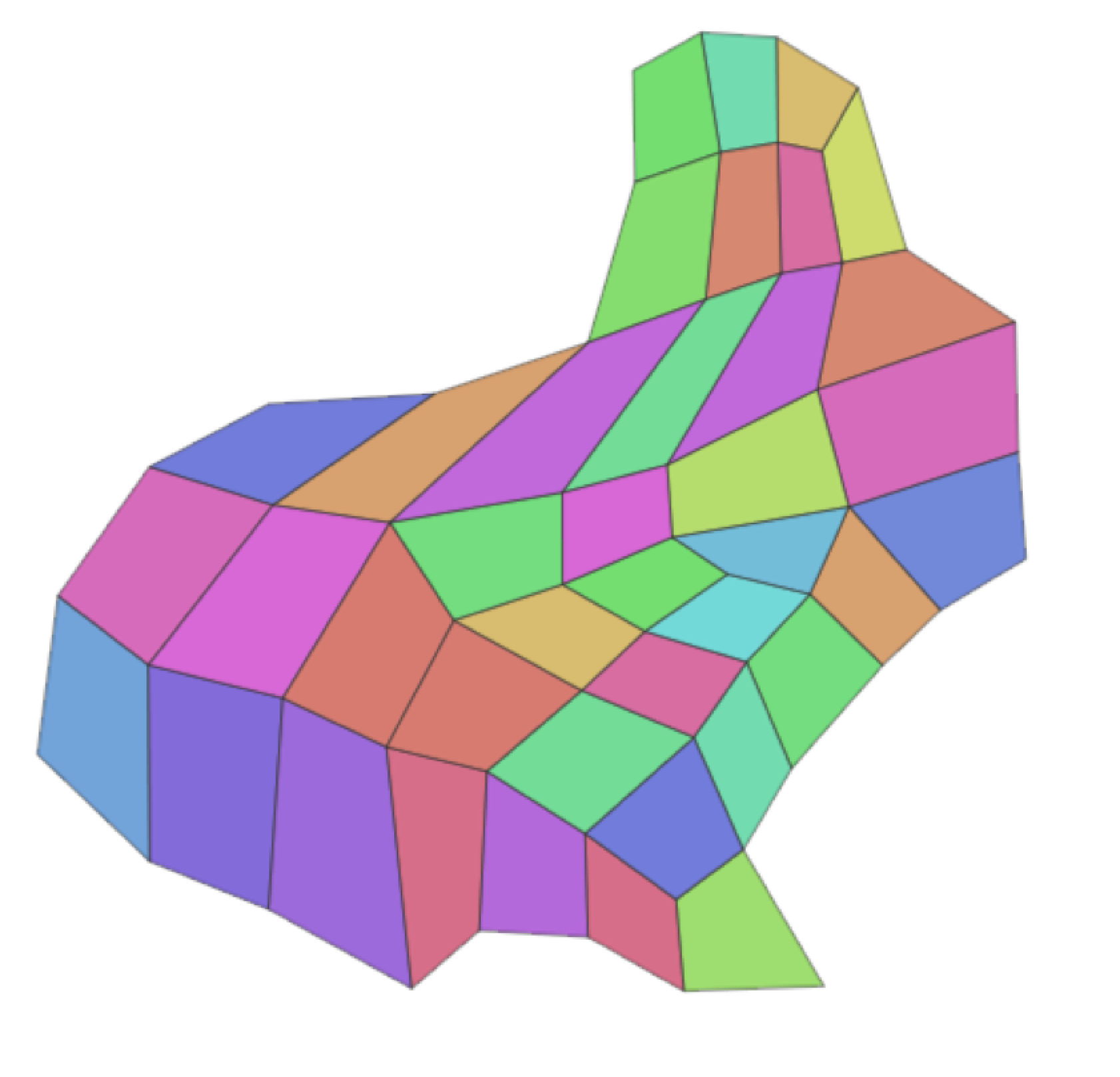}
\\ (c) quad meshing result
\end{minipage}\\
\begin{minipage}[t]{2.1in}
\centering
\includegraphics[width=2.1in]{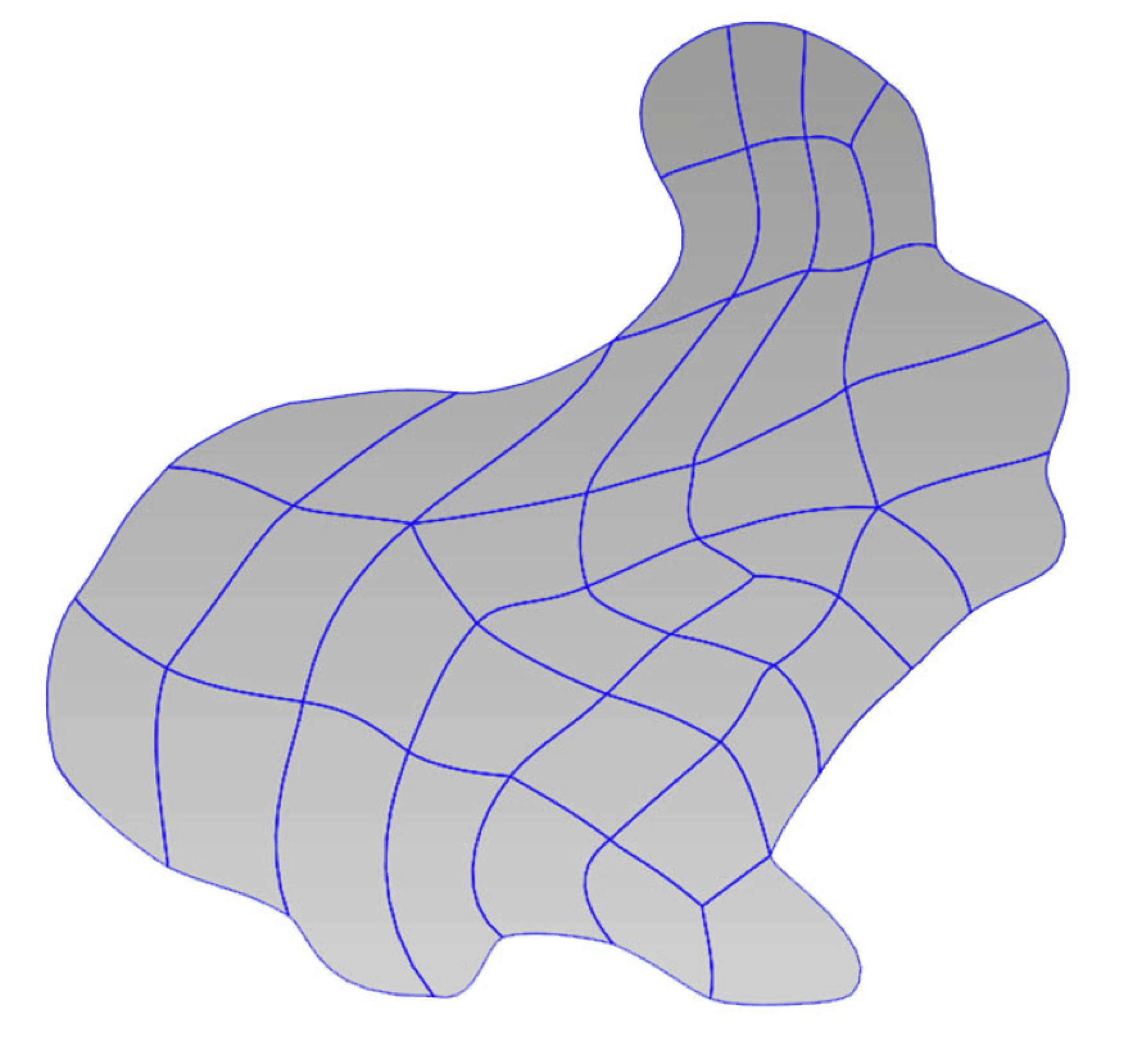}
\\ (d) segmentation curves
\end{minipage}
\begin{minipage}[t]{2.1in}
\centering
\includegraphics[width=2.1in]{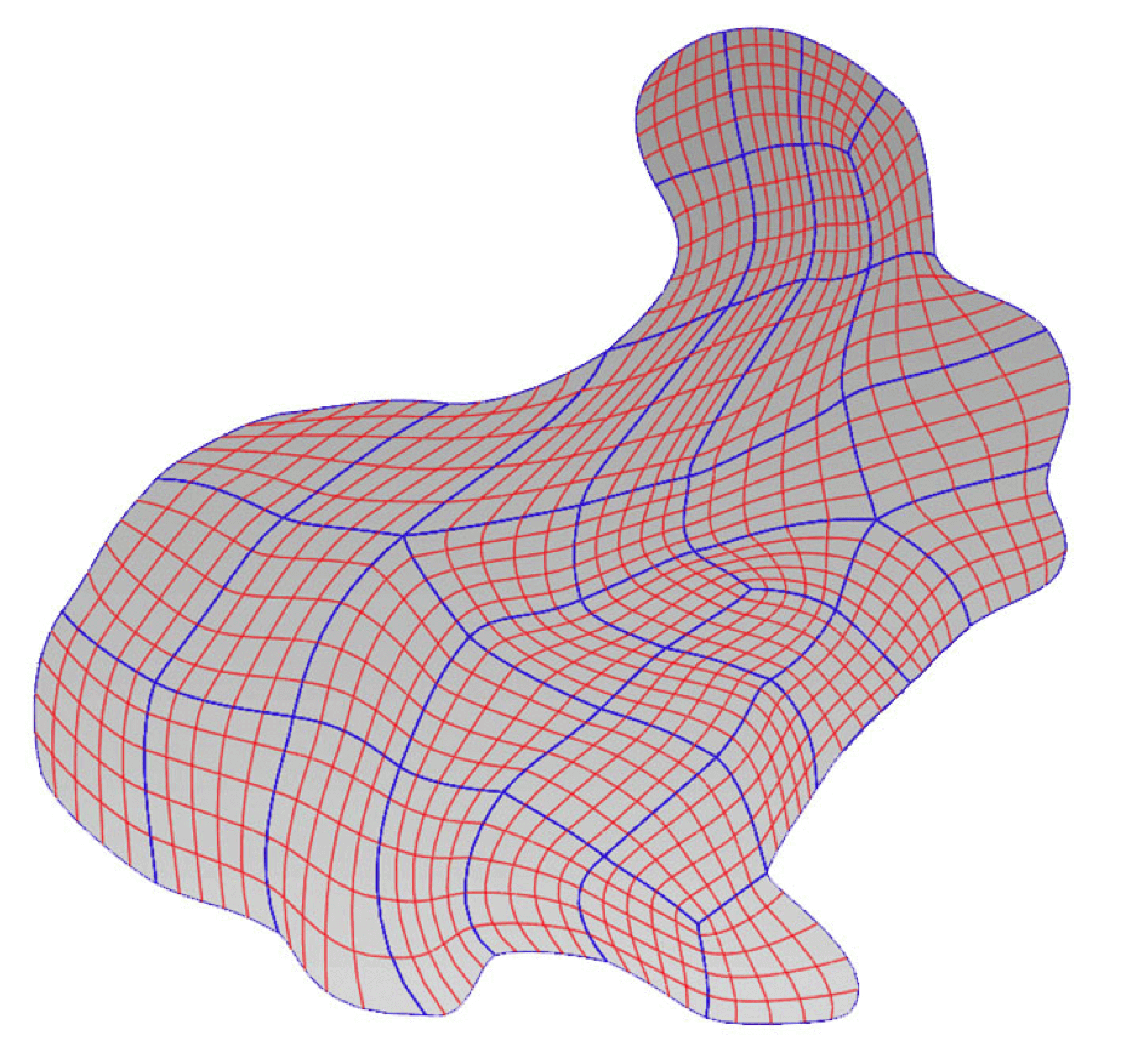}
\\ (e) parameterization result
\end{minipage}
\begin{minipage}[t]{2.1in}
\centering
\includegraphics[width=2.6in]{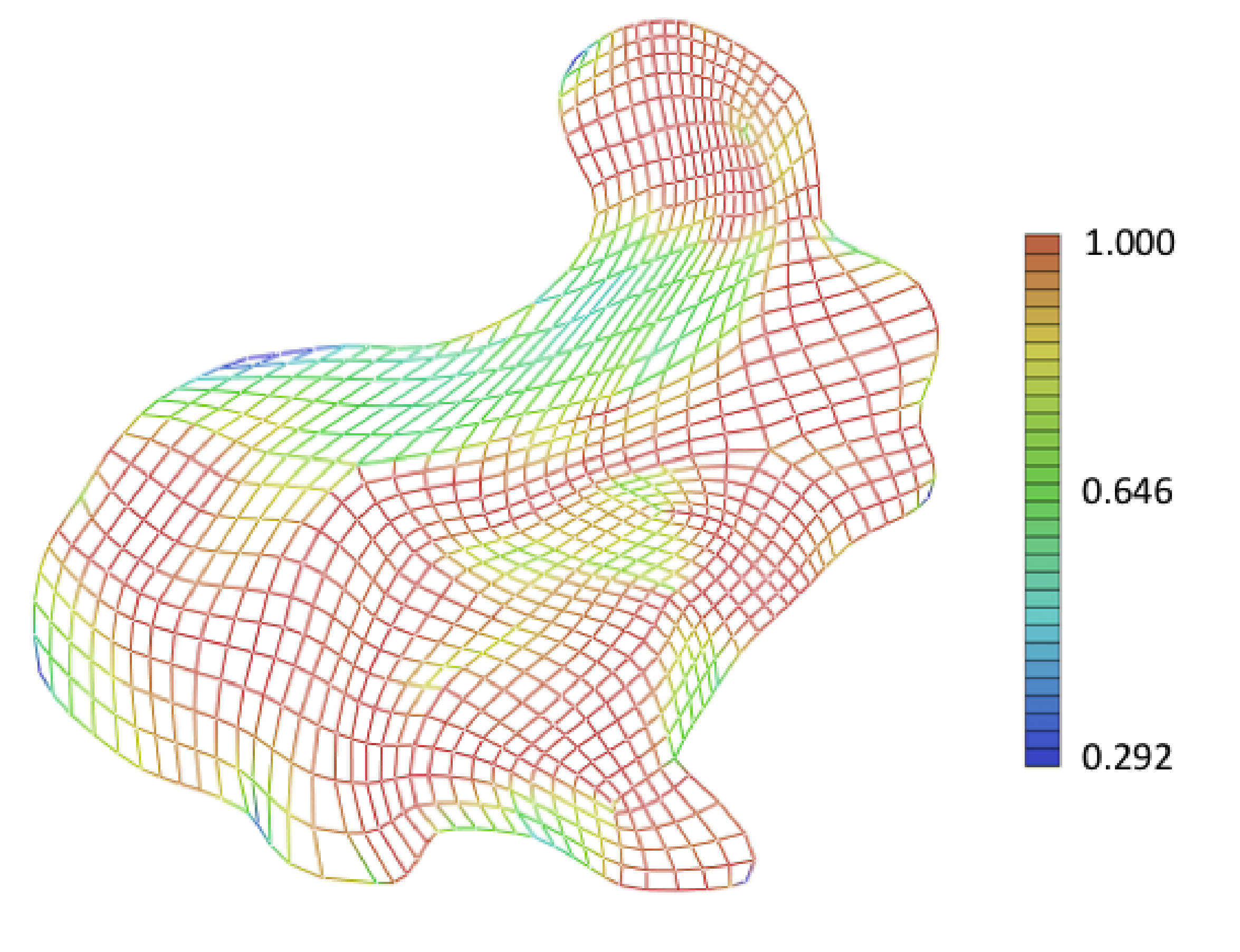}
\\ (f) Jacobian colormap
\end{minipage}\\
\begin{minipage}[t]{2.1in}
\centering
\includegraphics[width=2.05in,height=2.05in]{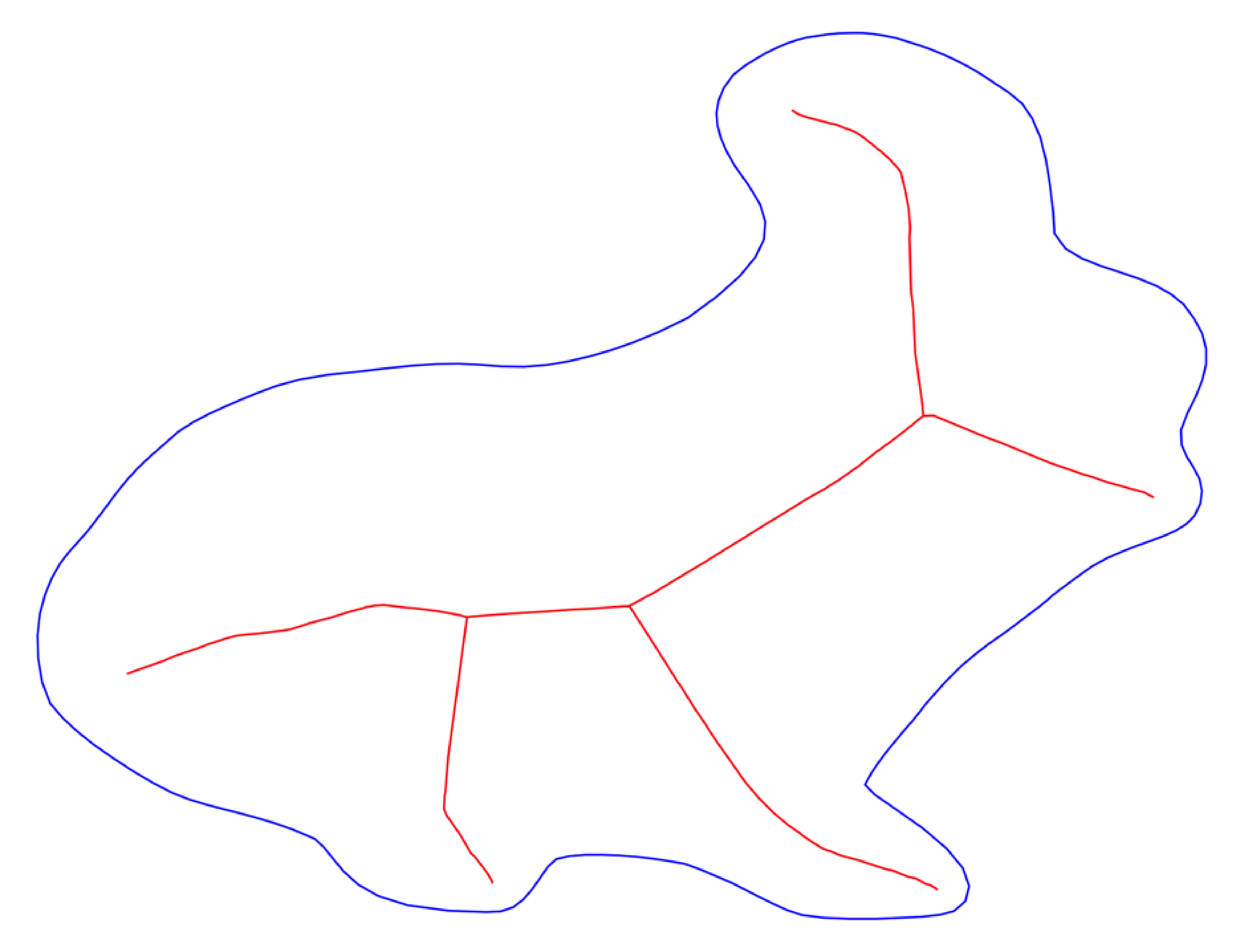}
\\ (g) extracted skeleton \cite{xu:cmame2015}
\end{minipage}
 \begin{minipage}[t]{2.15in}
\centering
\includegraphics[width=2.15in]{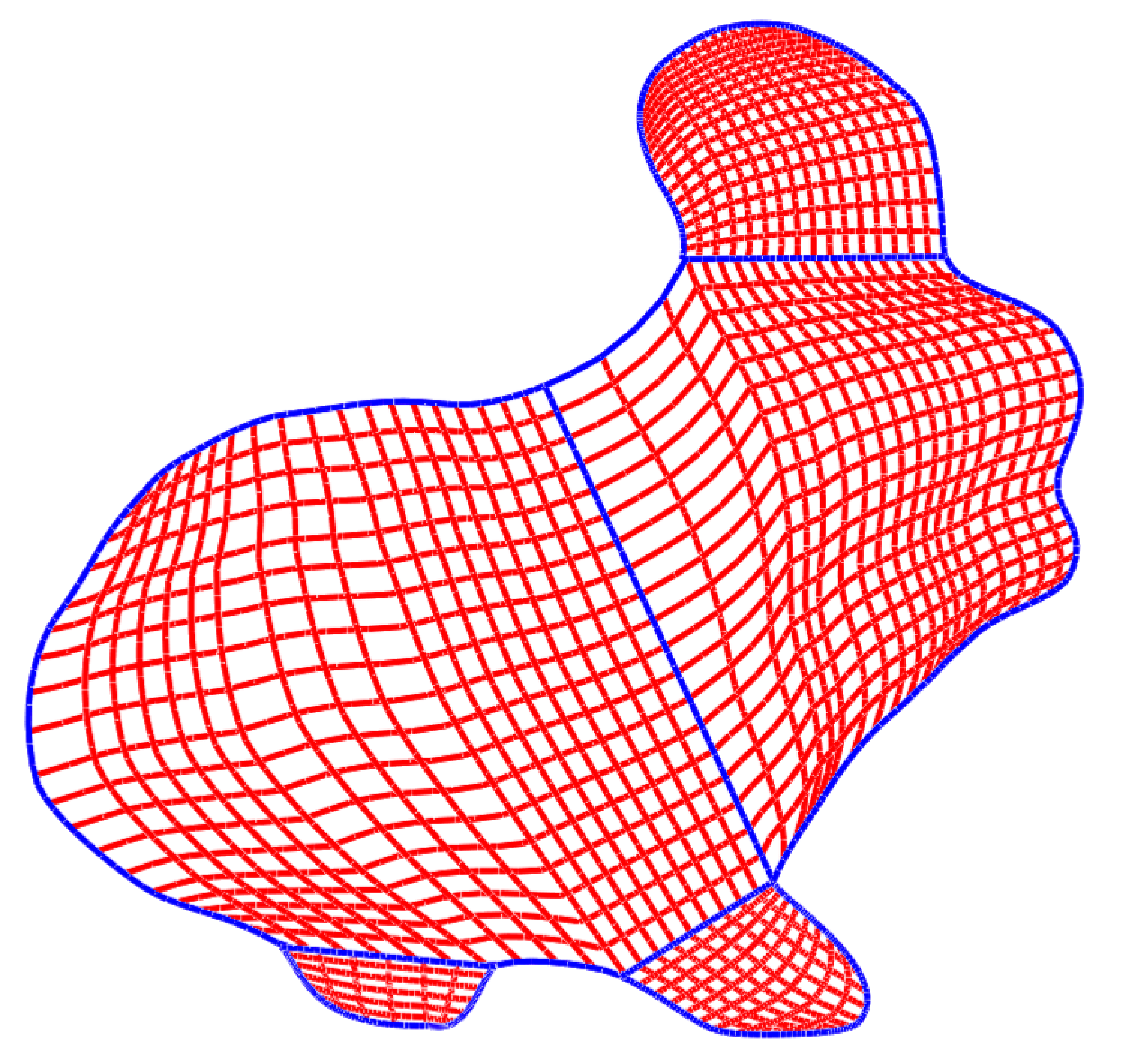}
\\ (h) skeleton-based parameterization \cite{xu:cmame2015}
\end{minipage}
\begin{minipage}[t]{2.1in}
\centering
\includegraphics[width=2.9in]{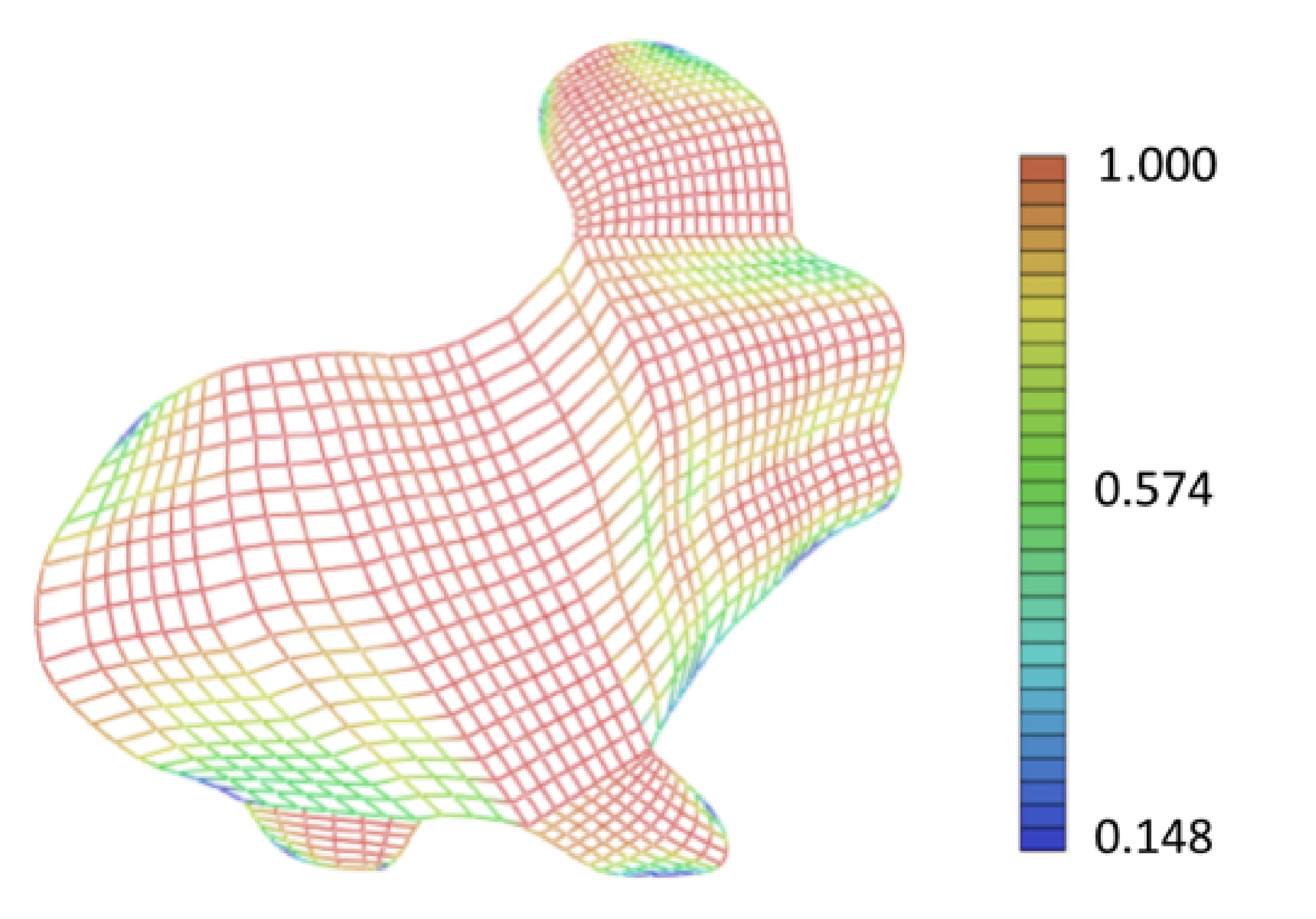}
\\ (i) Jacobian colormap of (h)
\end{minipage}
\caption{Example III.}
\label{fig:example2}
\end{figure}

\begin{figure}
\centering
\begin{minipage}[t]{2.1in}
\centering
\includegraphics[width=2.1in,height=2.3in]{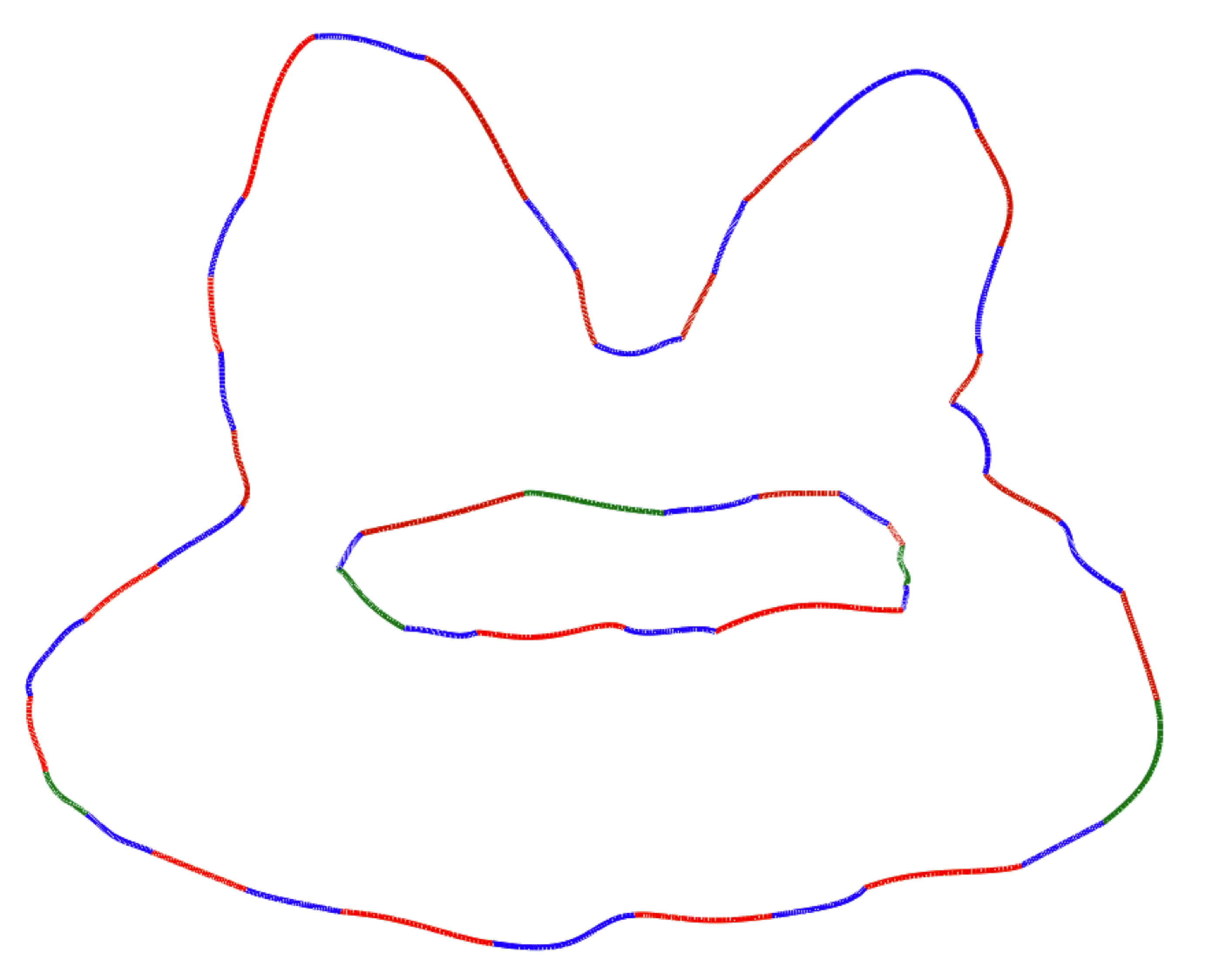}
\\ (a) boundary \Bezier curves
\end{minipage}
\begin{minipage}[t]{2.1in}
\centering
\includegraphics[width=2.1in]{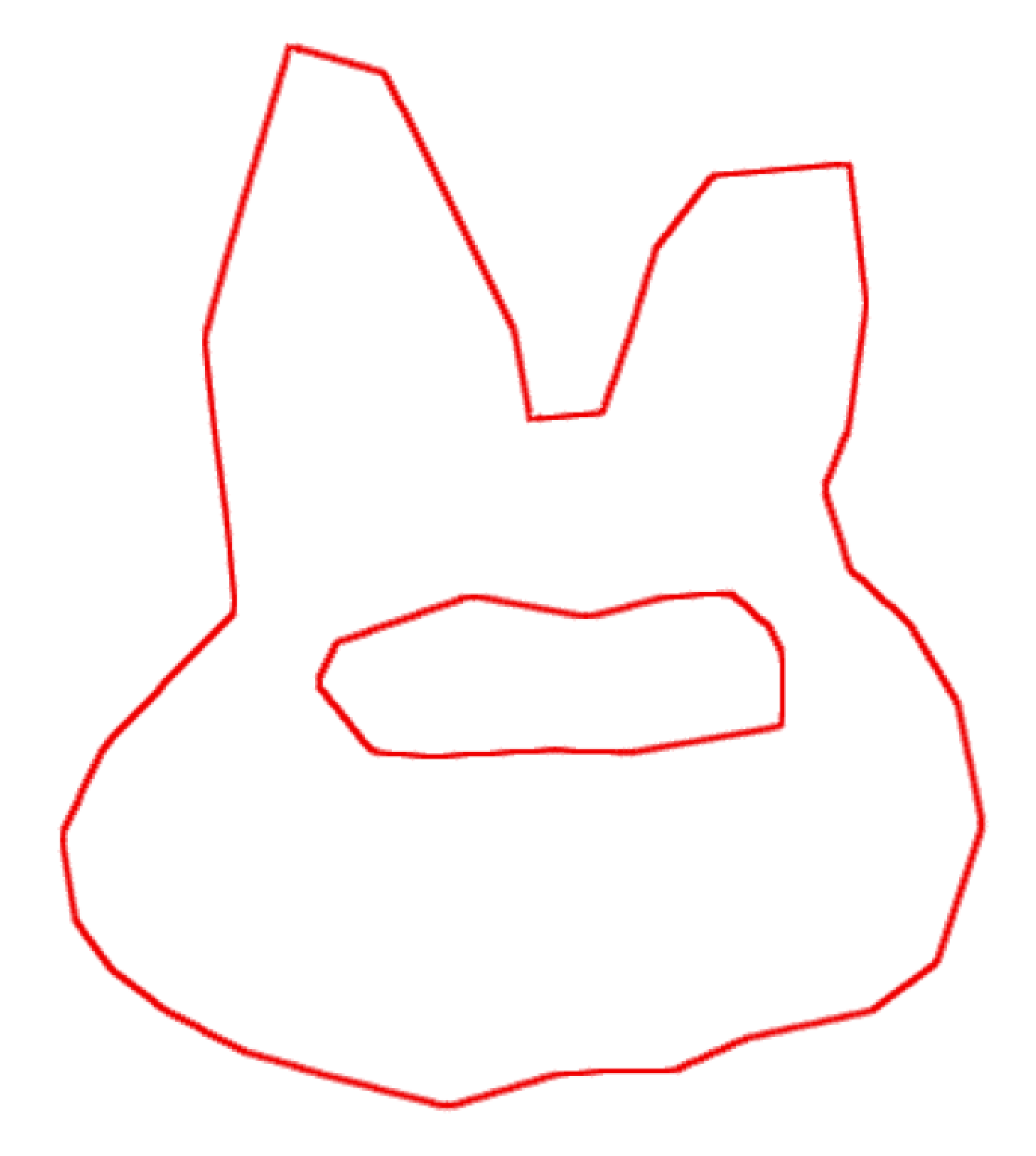}
\\ (b) discrete boundary
\end{minipage}
\begin{minipage}[t]{2.1in}
\centering
\includegraphics[width=2.05in]{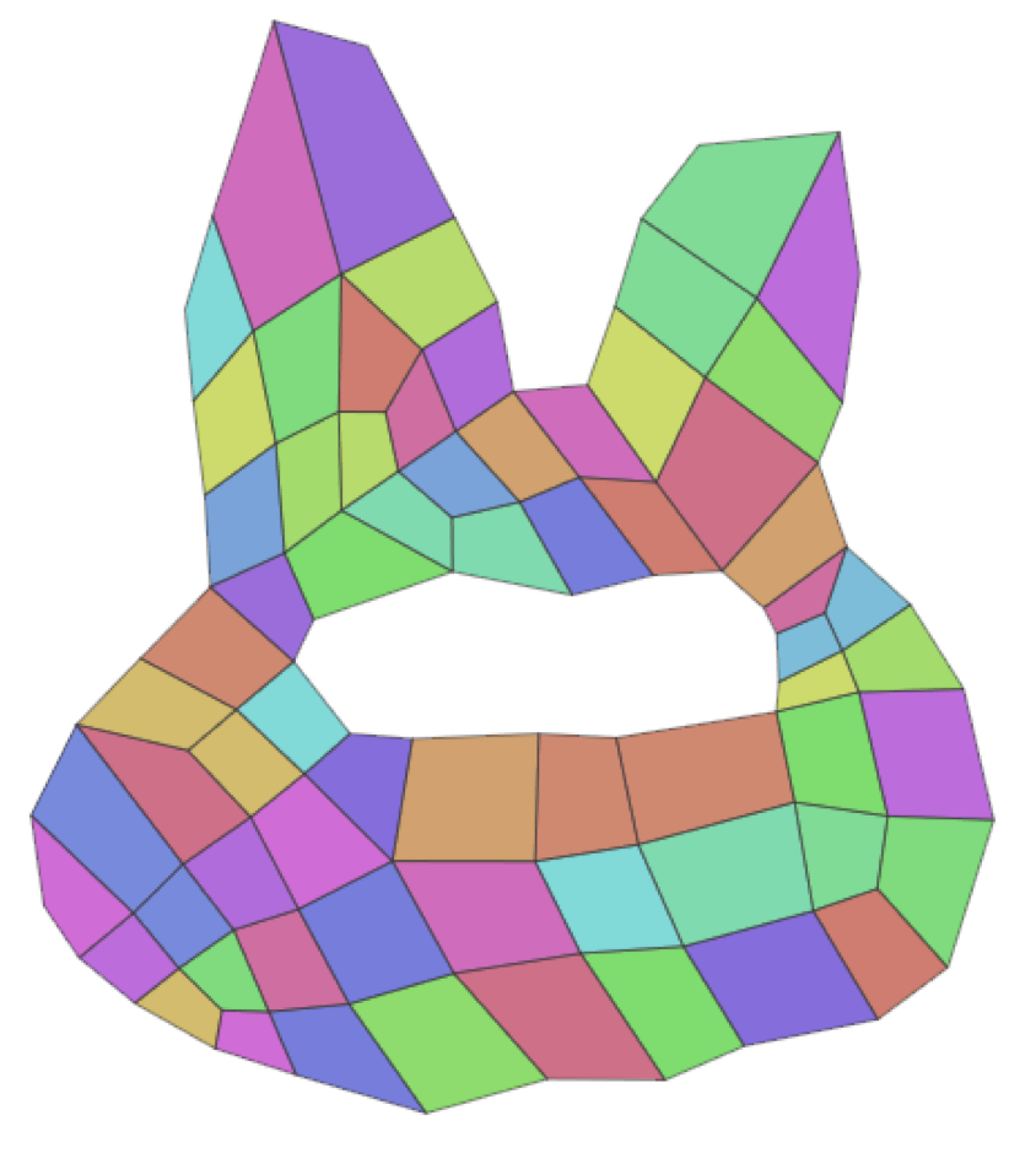}
\\ (c) quad meshing result
\end{minipage}\\
\begin{minipage}[t]{2.1in}
\centering
\includegraphics[width=2.1in]{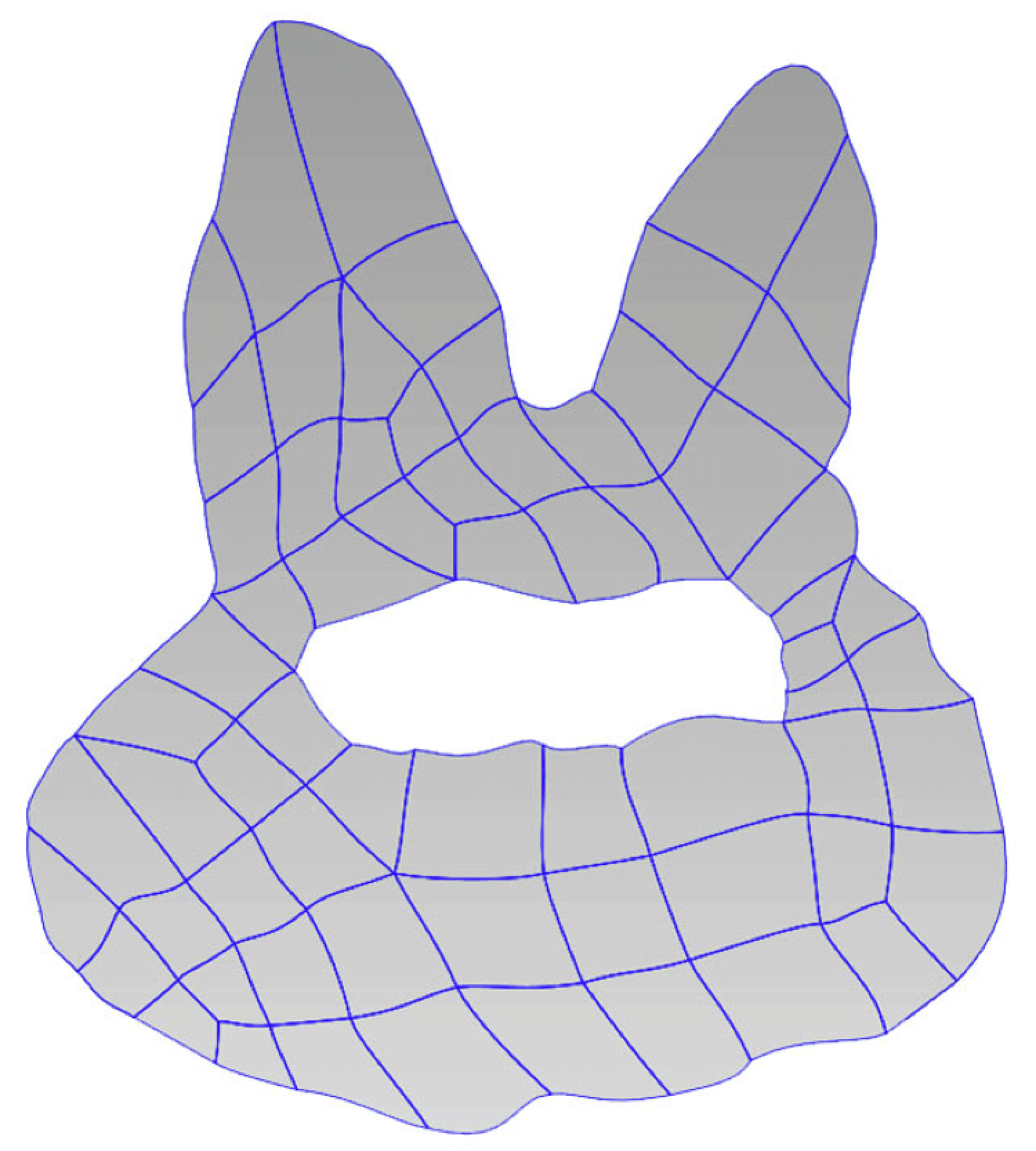}
\\ (d) segmentation curves
\end{minipage}
\begin{minipage}[t]{2.1in}
\centering
\includegraphics[width=2.1in]{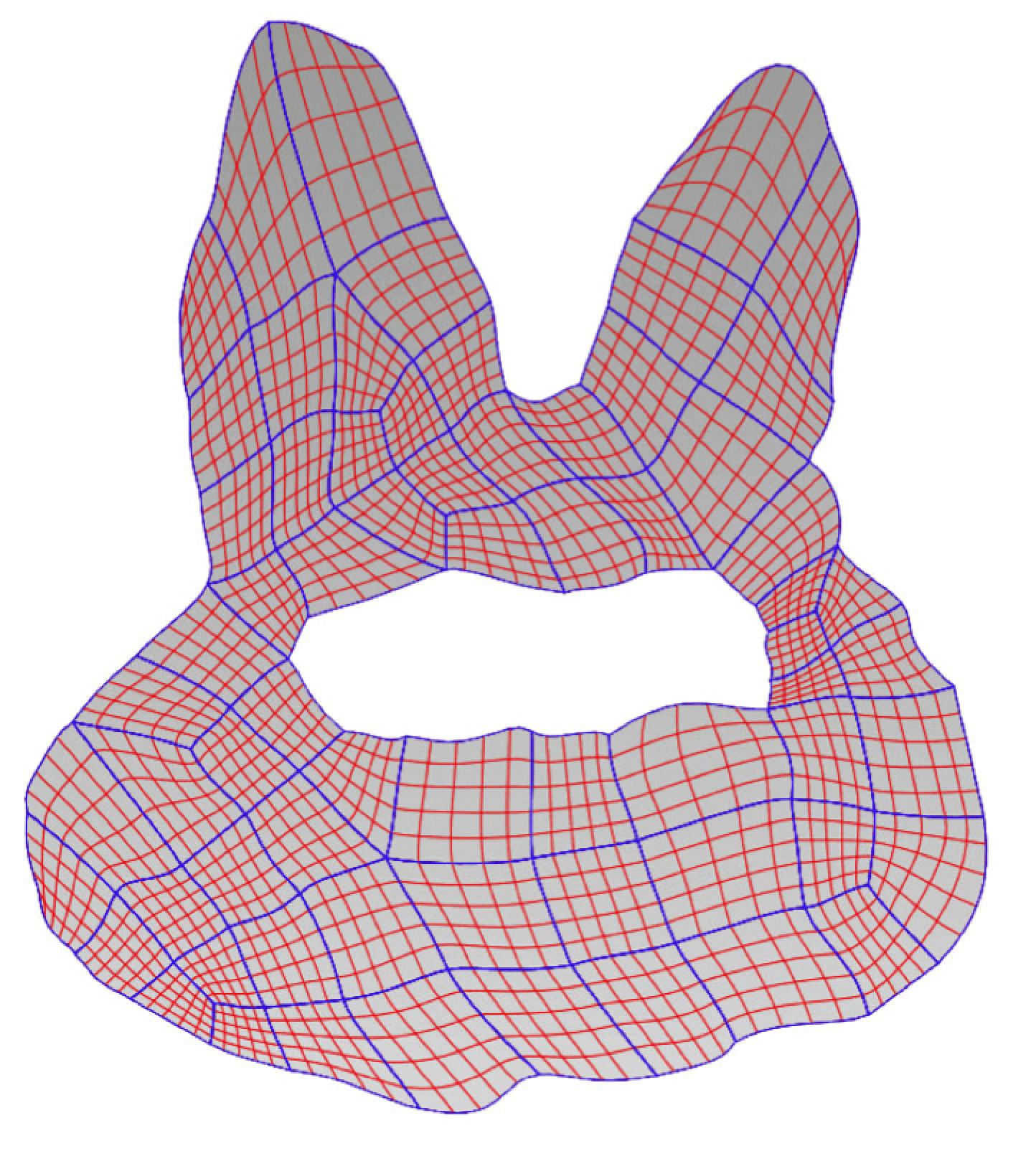}
\\ (e) parameterization result
\end{minipage}
\begin{minipage}[t]{2.1in}
\centering
\includegraphics[width=2.7in]{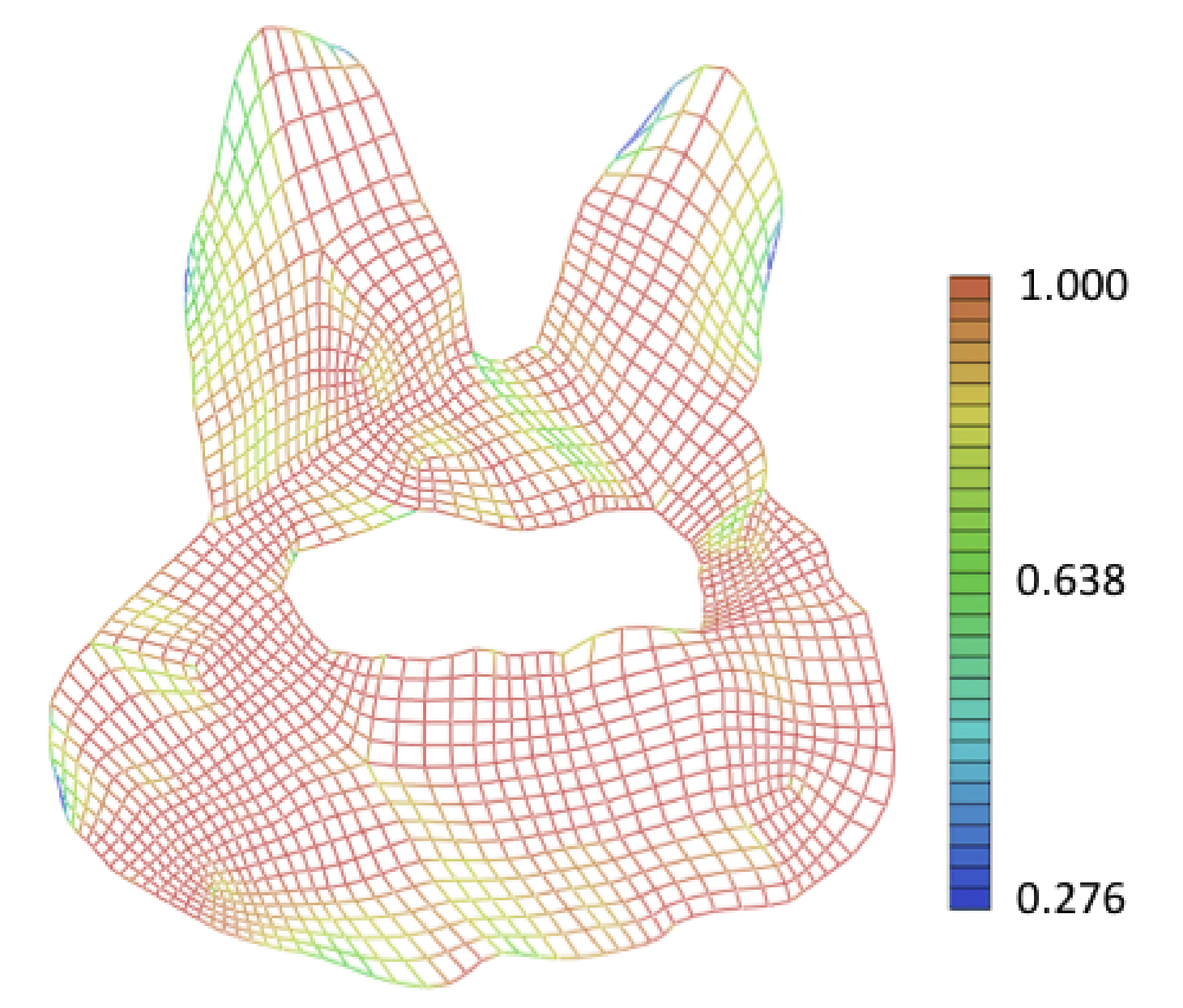}
\\ (f) Jacobian colormap
\end{minipage}\\
\begin{minipage}[t]{2.1in}
\centering
\includegraphics[width=2.05in,height=2.35in]{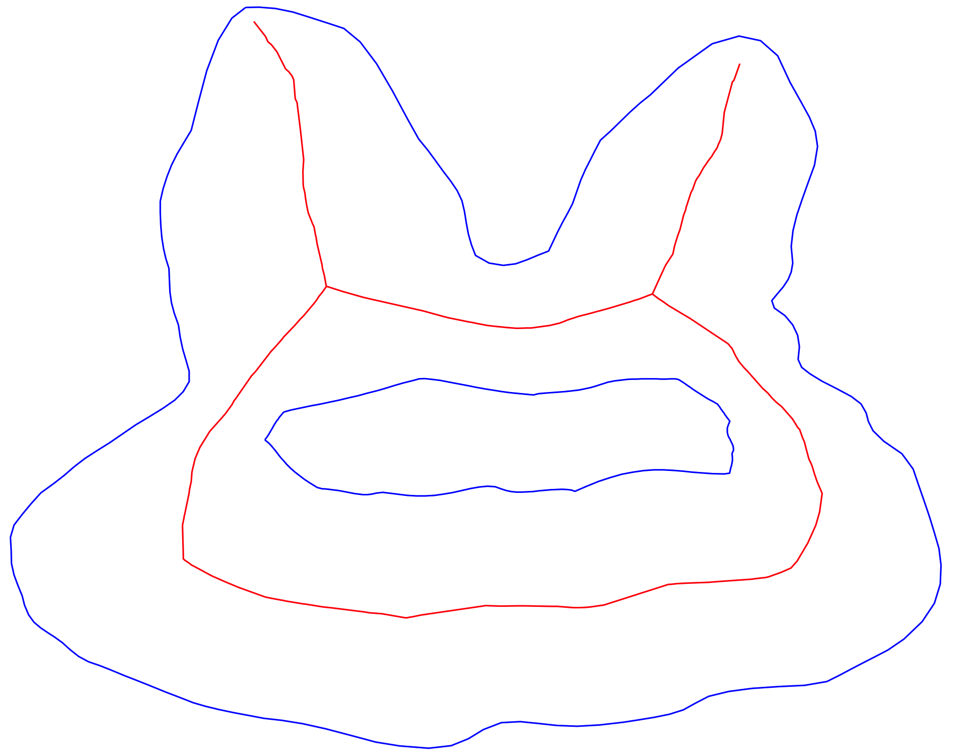}
\\ (g) extracted skeleton \cite{xu:cmame2015}
\end{minipage}
 \begin{minipage}[t]{2.15in}
\centering
\includegraphics[width=2.20in]{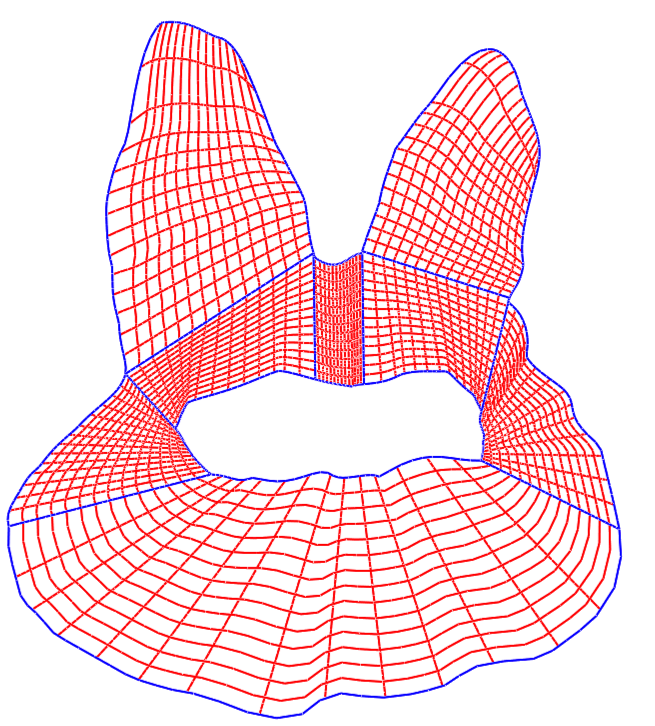}
\\ (h) skeleton-based parameterization \cite{xu:cmame2015}
\end{minipage}
\begin{minipage}[t]{2.1in}
\centering
\includegraphics[width=2.75in]{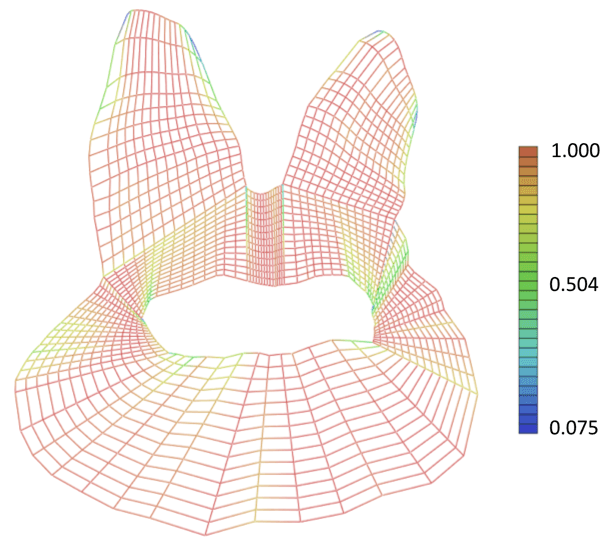}
\\ (i) Jacobian colormap of (h)
\end{minipage}
\caption{Example IV. }
\label{fig:example3}
\end{figure}

\begin{figure}
\centering
\begin{minipage}[t]{2.3in}
\centering
\includegraphics[width=2.3in ]{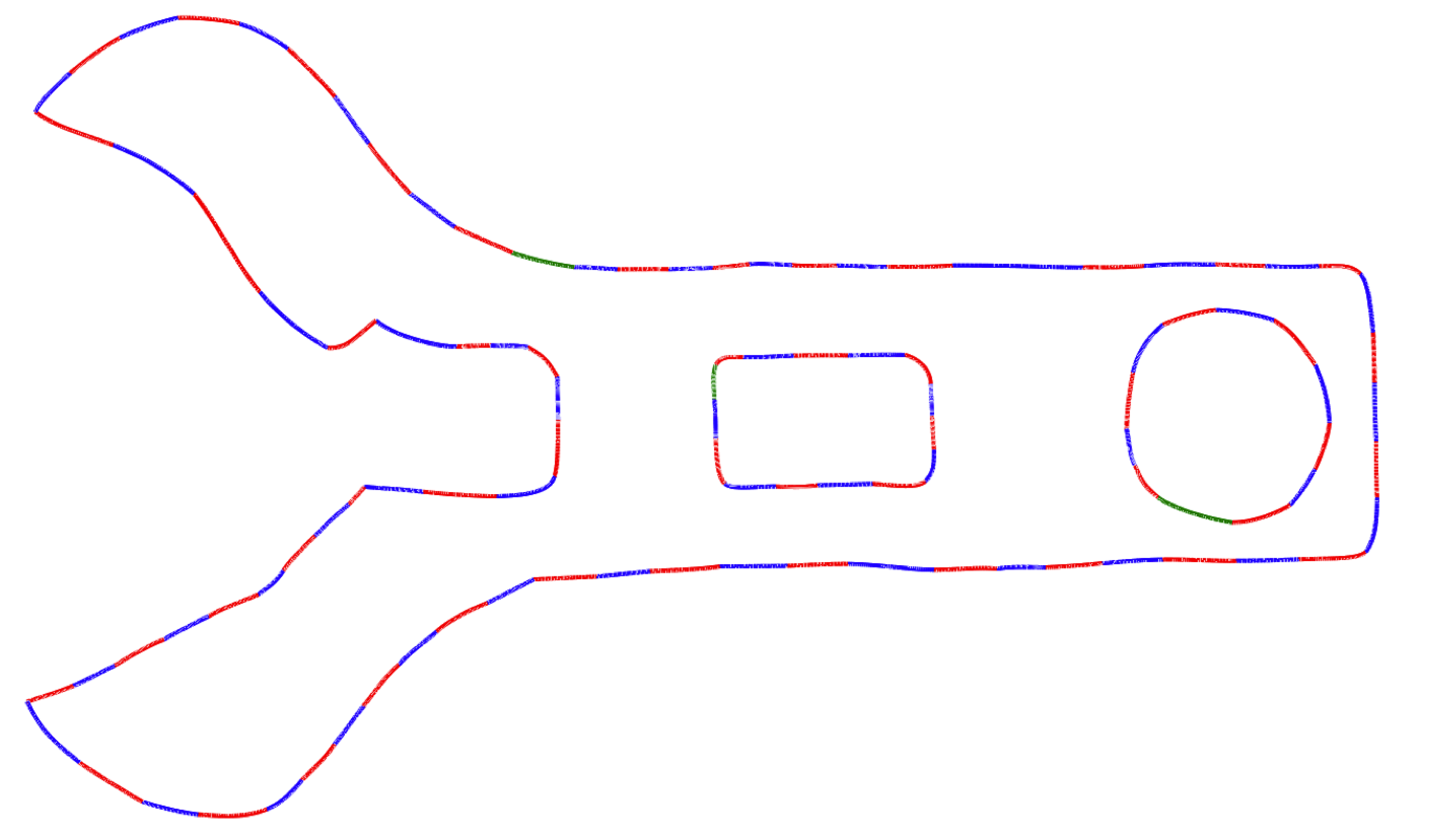}
\\ (a) boundary \Bezier curves
\end{minipage}    \quad
\begin{minipage}[t]{2.25in}
\centering
\includegraphics[width=2.3in ]{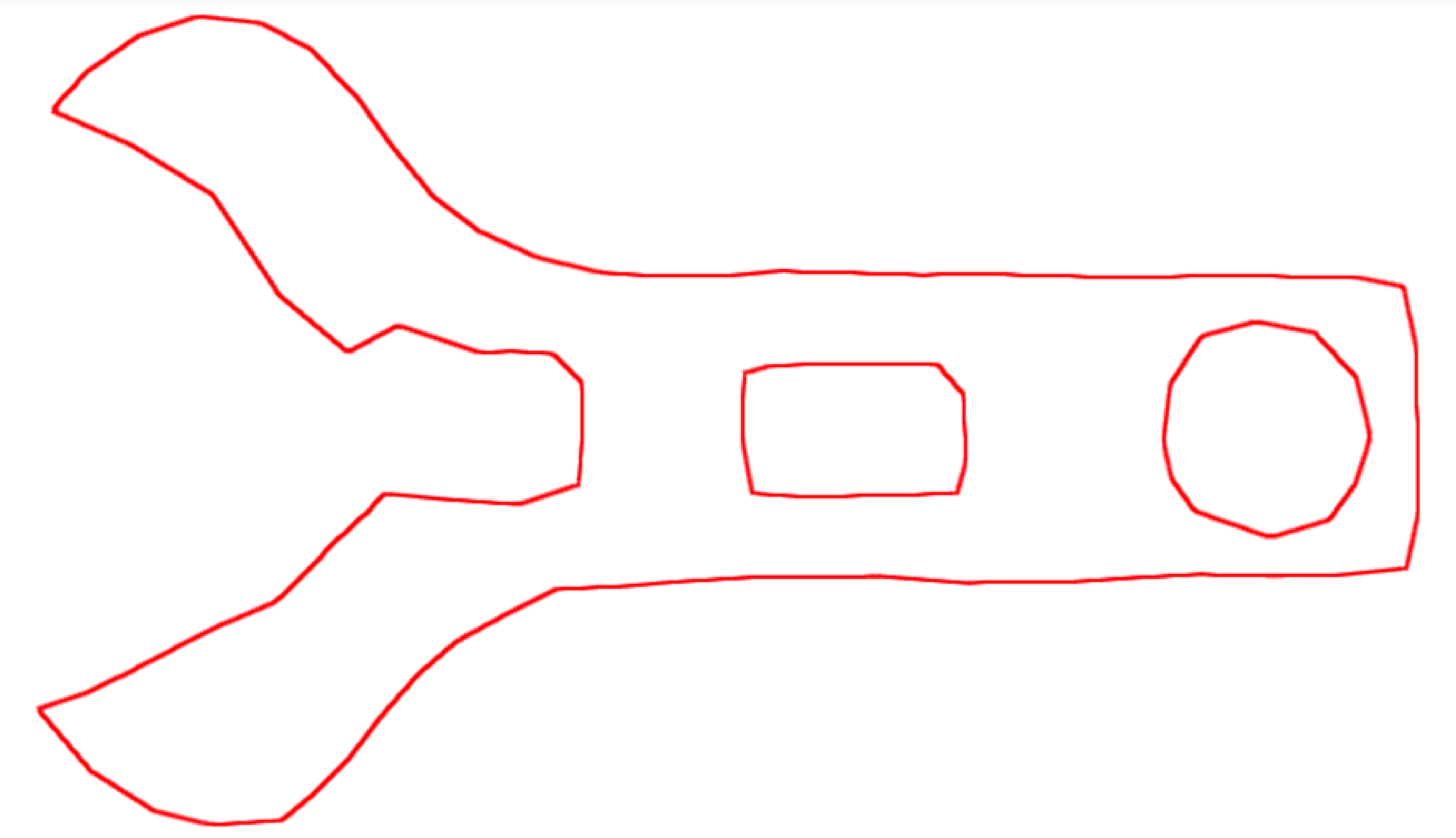}
\\ (b) discrete boundary
\end{minipage} \\
\begin{minipage}[t]{2.3in}
\centering
\includegraphics[width=2.3in ]{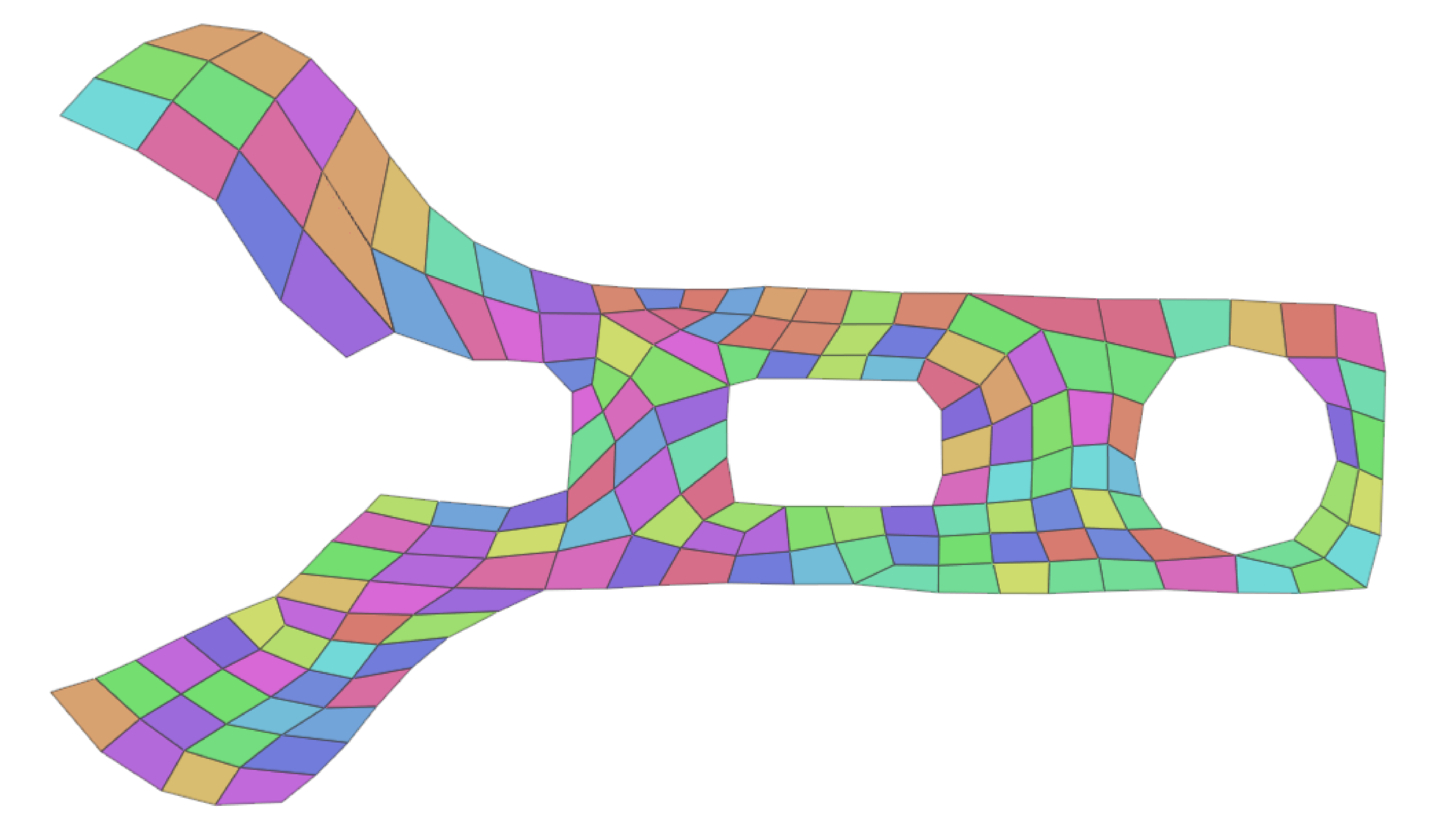}
\\ (c) quad meshing result
\end{minipage} \quad \quad
\begin{minipage}[t]{2.25in}
\centering
\includegraphics[width=2.25in]{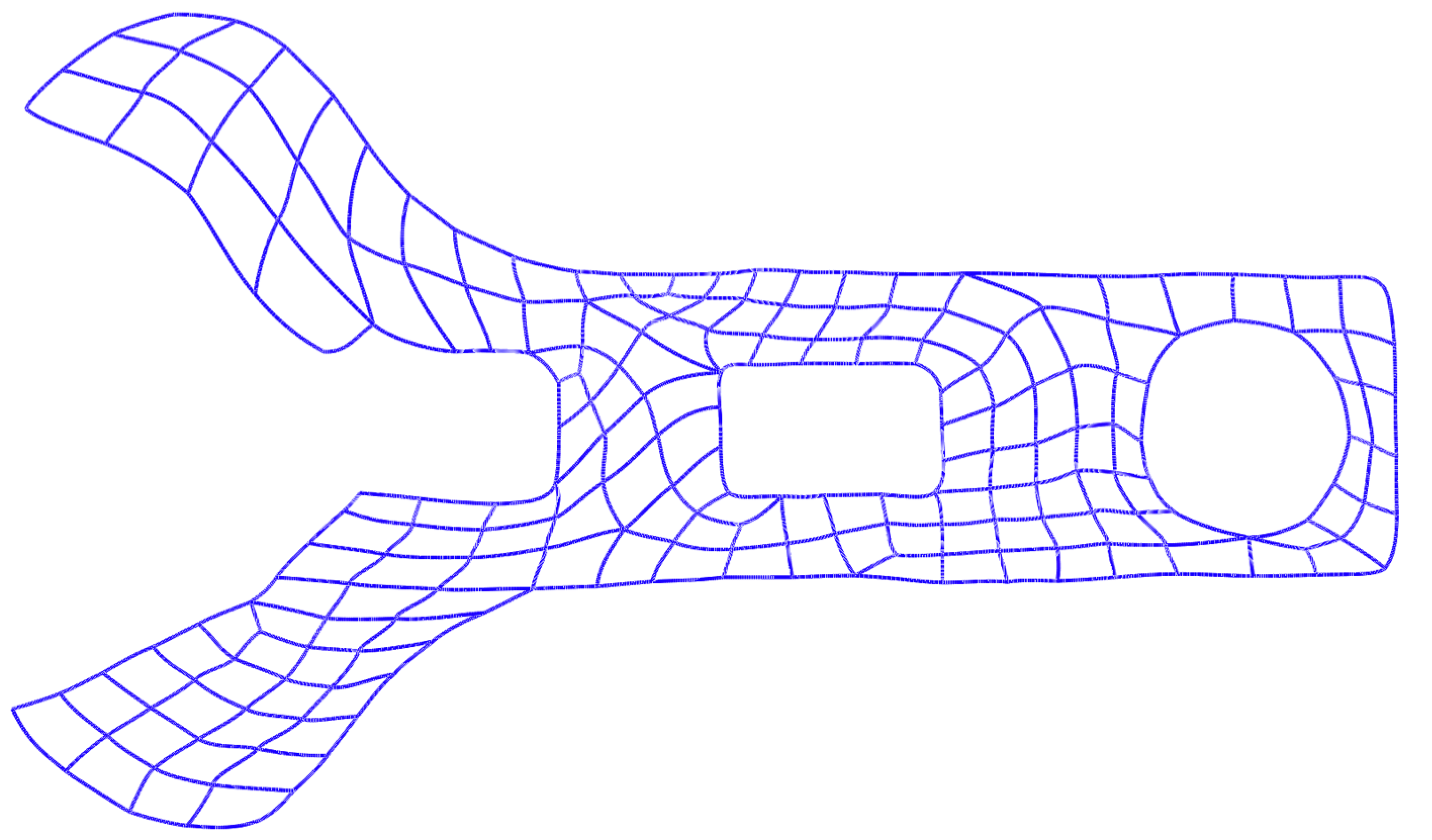}
\\ (d) segmentation curves
\end{minipage}\\
\begin{minipage}[t]{2.2in}
\centering
\includegraphics[width=2.25in]{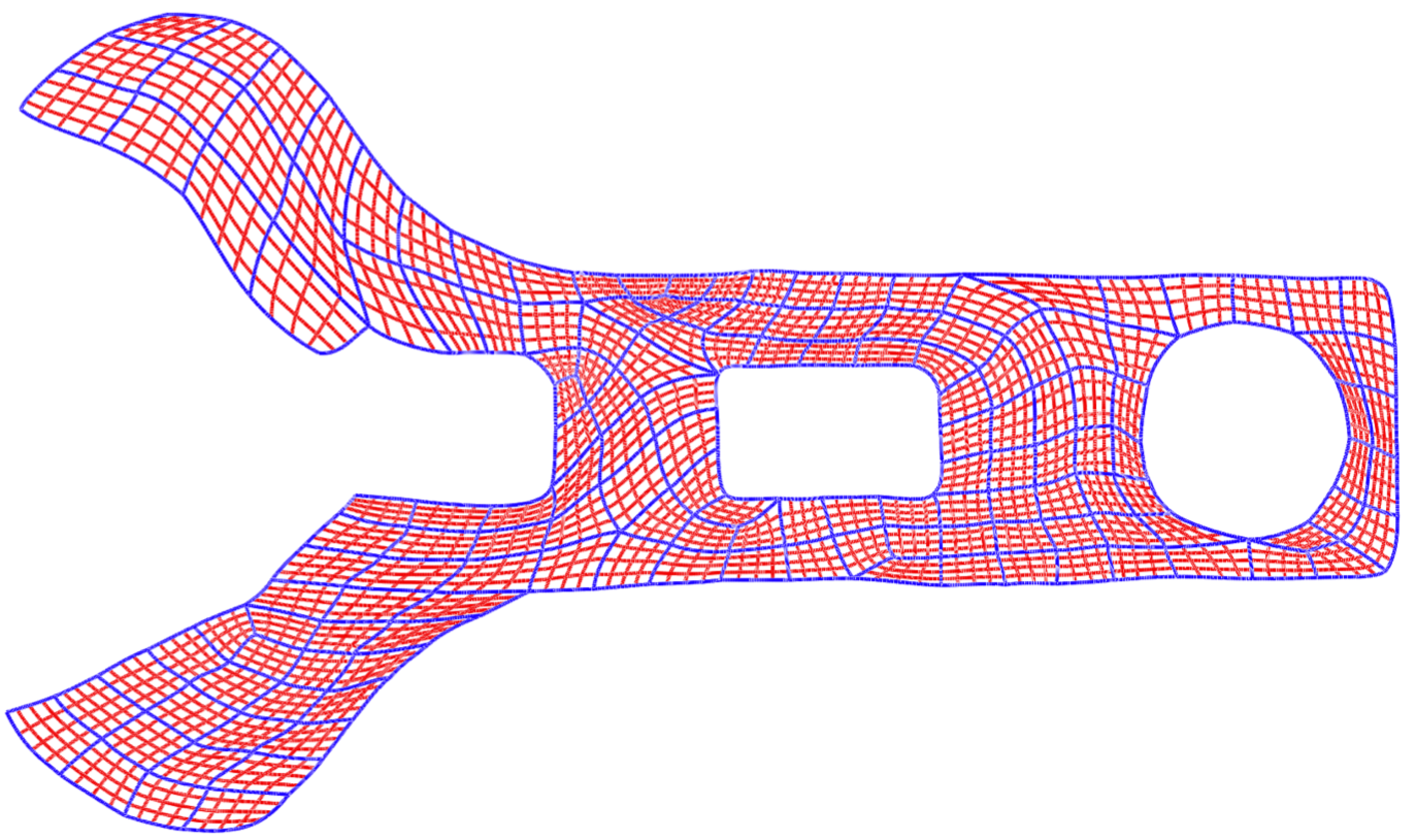}
\\ (e) parameterization result
\end{minipage} \qquad \quad
\begin{minipage}[t]{2.4in}
\centering
\includegraphics[width=2.86in]{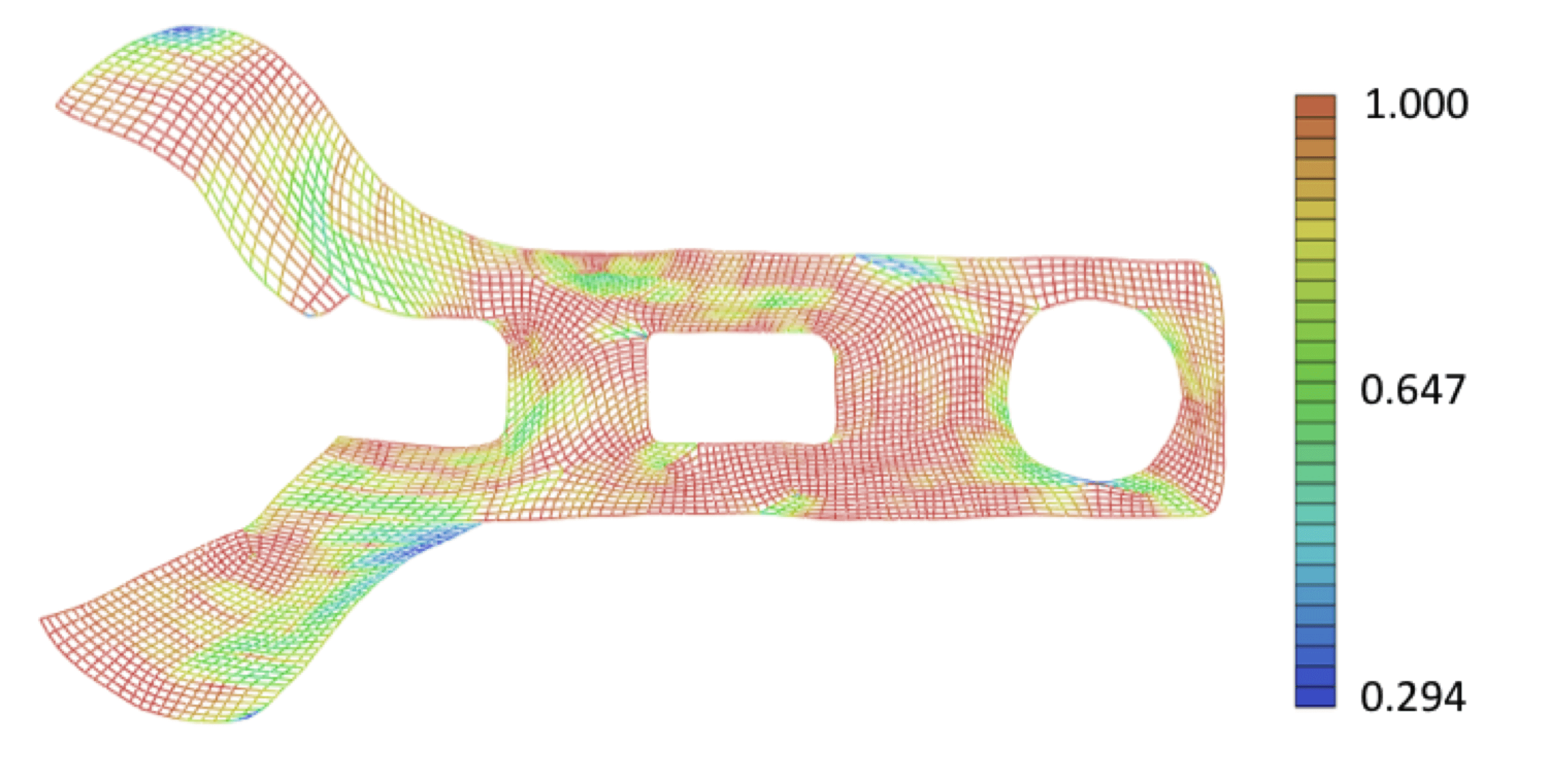}
\\ (f) Jacobian colormap
\end{minipage}\\
\begin{minipage}[t]{2.5in}
\centering
\includegraphics[width=2.48in,height=1.35in]{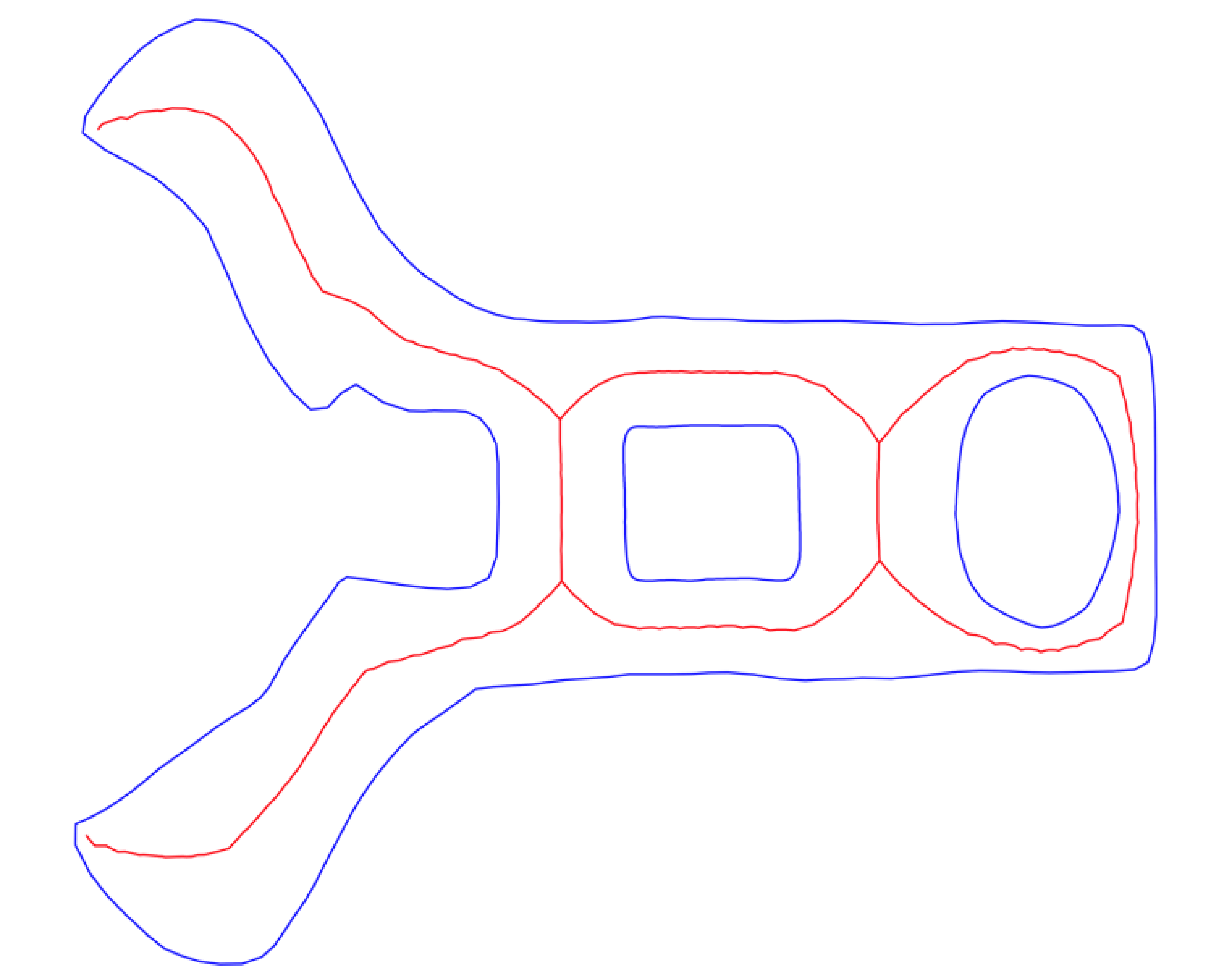}
\\ (g) extracted skeleton \cite{xu:cmame2015}
\end{minipage}
\begin{minipage}[t]{2.7in}
\centering
\includegraphics[width=2.35in,height=1.35in]{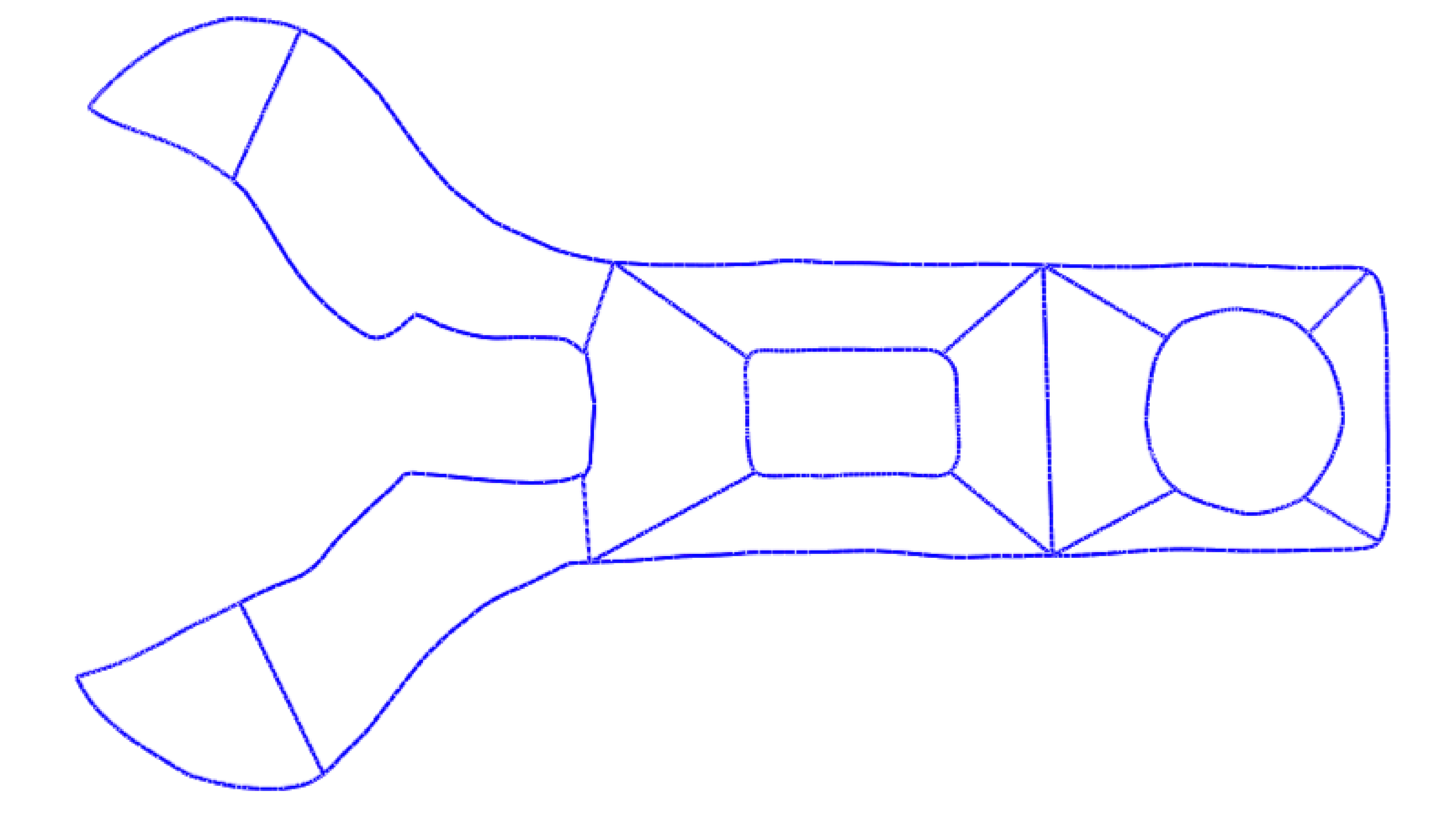}
\\ (h) skeleton-based domain partition\cite{xu:cmame2015}
\end{minipage}\\
 \begin{minipage}[t]{2.7in}
\centering
\includegraphics[width=2.3in]{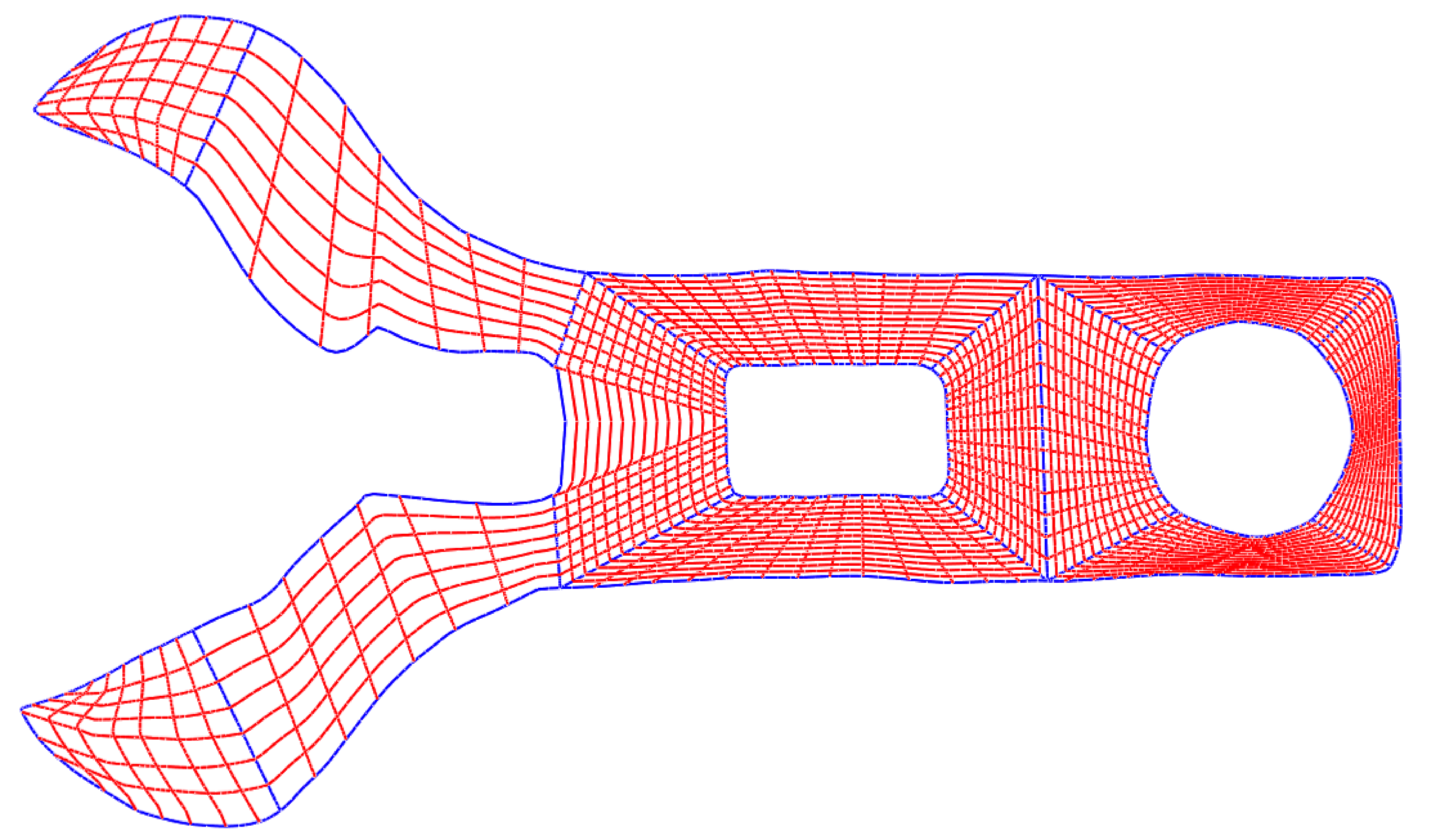}
\\ (i) skeleton-based parameterization \cite{xu:cmame2015}
\end{minipage}
\begin{minipage}[t]{2.4in}
\centering
\includegraphics[width=2.7in]{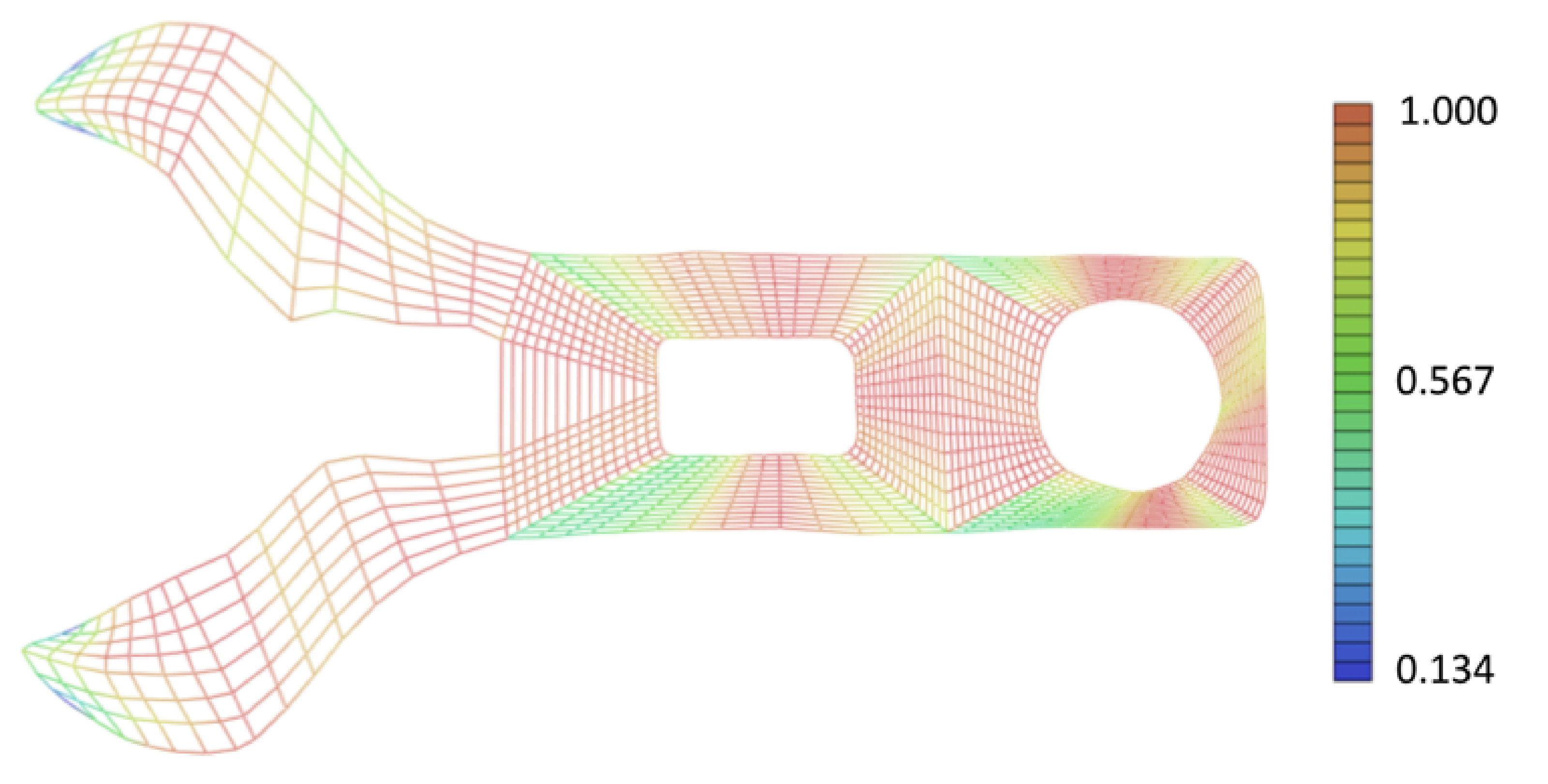}
\\ (j) Jacobian colormap of (i)
\end{minipage}
\caption{Example V.}
\label{fig:spanner}
\end{figure}

\subsection{Comparison with skeleton-based parameterization method}

In order to show the effectiveness of the proposed approach,
three examples are presented to compare the proposed method with the
skeleton-based domain decomposition method \cite{xu:cmame2015}.

Fig. \ref{fig:example2} (a) depicts the boundary \Bezier curves after
the pre-processing of input boundary B-spline curves;
Fig. \ref{fig:example2} (b) presents the discrete boundary obtained by
connecting the ending control points of each \Bezier curve, and the
corresponding quad meshing result can be found in
Fig. \ref{fig:example2} (c). Fig. \ref{fig:example2} (d) shows the
domain partition results with the construction of segmentation curves
by the global optimization method. After the local optimization for
each sub-patch, the final planar parameterization is illustrated in
Fig. \ref{fig:example2} (e). The iso-parametric curves show the quality of the planar parameterization. The corresponding
scaled Jacobian colormap in Fig. \ref{fig:example2}
(f) proves that the parametrization is analysis suitable. Fig. \ref{fig:example2} (g) presents the extracted skeleton from
the input boundary, Fig. \ref{fig:example2} (h) the
skeleton-based parameterization results with blue segmentation curves,
and  Fig. \ref{fig:example2} (i) the corresponding scaled
Jacobian colormap. More comparison examples with complex geometry can be found in
Fig. \ref{fig:example3} and Fig. \ref{fig:spanner}. Since  global/local optimizations with $C^1/G^1$ constraints are applied for the construction of
segmentation curves and parameterization of \Bezier subdomains, our method  achieves smooth parameterizations compared
to the skeleton-based approach \cite{xu:cmame2015}.

\begin{table*}[t]
  \caption{Quantitative data for planar parameterization in
    Fig. \ref{fig:example2}, Fig. \ref{fig:example3} and
    Fig. \ref{fig:spanner}. $p$: degree
    of planar parameterization; \#
    SD: number of subdomains by domain decomposition;\#
    Patch: number of \Bezier patches; \# Con.: number of control
    points.  } % title of Table
  \centering % used for centering table
  \begin{tabular}{c| c | c c c c | c c c c c  } % centered columns (4 columns)
     \hline %inserts double horizontal lines
\centering
  \multirow{2}{*} {Example} & \multirow{2}{*} {Method}& \multirow{2}{*} {$p$} &
  \multirow{2}{*} {\# SD} & \multirow{2}{*} {\#Patch} &
  \multirow{2}{*} {\# Con.}
      & \multicolumn{2}{c}{Scaled Jacobian} &
      &\multicolumn{2}{c}{Condition number }\\
\cline{7-8} \cline{10-11} &&&& & &  Average & Min &  & Average & Max     \\ [0.5ex]   % inserts table
    % heading
    \hline\hline % inserts single horizontal line
    \multirow{2}{3em}{Fig. \ref{fig:example2} } & Our method &$6$ & 39   & 39 & 1467 & 0.8843 & 0.292 &
     & 2.76 &  8.06 \\ % inserting body of the table
     & Xu et al.\cite{xu:cmame2015} &$6$ & 5   & 35 & 1309  & 0.5172 & 0.148 &
     & 5.36 & 16.31 \\
     \hline
     \multirow{2}{4em}{Fig. \ref{fig:example3} }  & Our method & $5$ &  66 & 66 &  1768  & 0.9194&
     0.276  & & 2.42 &  10.18 \\ % inserting  body of the table
    &  Xu et al.\cite{xu:cmame2015}  & $5$ &  8 & 56  &  1507  & 0.7801&
     0.075  & & 4.35 &  18.23 \\
     \hline
     \multirow{2}{4em}{Fig. \ref{fig:spanner} }   & Our method  & $5$ & 155 & 155   & 3720  & 0.9017
     &0.294 &  & 2.57 & 7.86 \\ % [1ex] adds vertical                            % space
&Xu et al.\cite{xu:cmame2015} & $5$ & 12 & 132   & 3282   & 0.7894
     &0.134 &  & 4.23 & 15.64 \\
    \hline %inserts single line
  \end{tabular}
  \label{table:rrefine} % is used to refer this table in the text
\end{table*}

\begin{table}[t]
  \caption{Parameters of the objective functions and the computing times (in seconds) for the
    examples  in
   Fig. \ref{fig:example1},
    Fig. \ref{fig:exam5},  Fig. \ref{fig:example2}, Fig.
    \ref{fig:example3} and  Fig. \ref{fig:spanner}.
    \# $T_1$: computing time for the global optimization ;   \# $T_2$:
    computing time for the local optimization;
      \# $T$: total computing time. } % title of Table
  \centering % used for centering table
  \begin{tabular}{c |c c c c c c c c c c |c c c} % centered columns (4 columns)
     \hline %inserts double horizontal lines
\centering
  \multirow{2}{*} {Example}  & \multicolumn{2}{c}{$ F_{\text{shape}}$ in (\ref{eq:shape})} &
      &\multicolumn{3}{c}{$F$ in (\ref{eq:curveobj})} & &\multicolumn{2}{c}{
        $E(\emph{\textbf{r}}) $ in (\ref{eq:quasifunc}) } & &\multirow{2}{*} {\# $T_1$} &\multirow{2}{*} {\# $T_2$}&\multirow{2}{*} {\# $T$}\\
\cline{2-3} \cline{5-7} \cline{9-10} &$\sigma_1$ &$\sigma_2$&&
$\omega_1$  & $\omega_2$ & $\omega_3$ &  & $\tau_1$ & $\tau_2$  & &  & &  \\ [0.5ex]   % inserts table
    % heading
\hline
    \hline % inserts single horizontal line
   Fig. \ref{fig:example1} & 2.0 &  1.0 &     & 2.0 & 1.0&
     50.0  & &  2.0 & 1.5 & & 73.22 & 177.86 & 251.08 \\ % inserting  body of the table
     Fig. \ref{fig:exam5}(c)   & 1.0 &  1.0 &     & 1.0 & 1.0&
     50.0  & &  1.0 & 1.5 & & 22.68 & 26.64 & 49.32\\ % inserting body of the table
     Fig. \ref{fig:exam5}(g)& 1.0 &  1.0 &     & 1.0 & 1.0&
     50.0  & &  1.0 & 1.5 & & 22.96 & 27.18 & 50.14 \\
     Fig. \ref{fig:exam5}(k)& 1.0 &  1.0 &     & 1.0 & 1.0&
     50.0  & &  1.0 & 1.5 & & 27.68 & 36.32 &  63.90  \\
 Fig. \ref{fig:example2} & 1.0 &  2.0 &     & 1.0 & 2.0&
     50.0  & &  1.0 & 2.0 & & 32.74 & 53.84 &  86.58 \\
      Fig. \ref{fig:example3} & 2.0 &  1.0 &     & 1.0 & 2.0&
     50.0  & &  2.0 & 1.0 & & 41.02 & 61.36 &  102.38 \\
    Fig. \ref{fig:spanner}& 2.0 &  1.0 &     & 2.0 & 2.0&
     50.0  & &  2.0 & 1.0 & & 75.70 & 209.68 & 285.38\\
 % [1ex] adds vertical space
    \hline %inserts single line
  \end{tabular}
  \label{table:time} % is used to refer this table in the text
\end{table}

Quantitative data of the comparison examples including the parameterization quality metrics
with respect to scaled Jacobians and condition numbers presented in
Fig. \ref{fig:example2}, Fig. \ref{fig:example3} and Fig. \ref{fig:spanner} are summarized in
Table \ref{table:rrefine}. We can see from Table
\ref{table:rrefine} that the proposed approach
achieves analysis-suitable parameterizations with bigger average scaled
Jacobians and smaller average condition numbers compared to the
skeleton-based method \cite{xu:cmame2015}.  In the comparison examples, the input boundary curves are usually B-spline curves, hence after the domain partition by skeleton-based method \cite{xu:cmame2015}, knot-insertion operations should be performed on each sub-domain to obtain extracted \Bezier patches.  Although the number of subdomains in the skeleton-based method
 \cite{xu:cmame2015} is much smaller than that
in our proposed method, see Table \ref{table:rrefine}, the number of \Bezier patches and the
number of control points in the skeleton-based approach \cite{xu:cmame2015}
are similar to those in the proposed method. These extracted \Bezier
patches are used as the computational elements in IGA \cite{Borden}.

There are several parameters involved in the objective functions
of the optimization framework. Different parameterization
results can be obtained from different choices of these parameters.
The parameters of the objective functions $ F_{\text{shape}}$ in
(\ref{eq:shape}),
$F$ in (\ref{eq:curveobj}) and $E(\emph{\textbf{r}}) $ in
(\ref{eq:quasifunc})   are summarized in Table
\ref{table:time} for the presented examples in
   Fig. \ref{fig:example1},
    Fig. \ref{fig:exam5},  Fig. \ref{fig:example2}, Fig.
    \ref{fig:example3} and  Fig. \ref{fig:spanner}.
Furthermore, the associated CPU times for the global
optimization and local optimization process are listed in Table
\ref{table:time}. The computational cost
of the proposed method depends on the number of patches and the number
of control points in the planar parameterization.

In conclusion, the planar parameterization obtained by the proposed
global/local optimization method has high-quality, and
is suitable for isogeometric applications.

\section{Conclusion and future work}

\label{sec:conclude}

The parameterization of the computational domain with
complex CAD boundary is a key step in IGA.
In this paper, a general framework  for constructing
IGA-suitable planar B-spline parameterizations of the computational
domain with high genus and more complex boundary curves is proposed.  High-quality
patch-partition results with few singularities are achieved by a
global optimization method, and  the \Bezier patch with respect to
each quad in the quadrangulation is obtained by a local
optimization method yielding uniform and orthogonal iso-parametric
structures while keeping the $C^1/G^1$ continuity conditions between
patches. The proposed framework can be considered as a generalized
\Bezier extraction of a planar domain, and the resulting \Bezier
patches can be used as the computational elements for IGA  \cite{Borden}.
The efficiency and robustness of the
proposed approach are demonstrated by several examples.

In the future, we will focus on the following extensions:
\begin{enumerate}
\item Quad mesh generation significantly influences the
  parameterization quality. State-of-the-art quad meshing techniques will
  be investigated for IGA-suitable parameterization problem.
\item Several parameters are involved in the proposed optimization
  framework. Automatic selection of these parameters from the input
data will be addressed.
\item The proposed local optimization scheme can be improved by using
  Teichm\"{u}ller mapping  \cite{Nian:cmame2016} with theoretical guarantees of injectivity,
  and it can be accelerated  by parallelized implementation exploiting GPU/OpenMP.
\item  The proposed method can be extended to NURBS planar
  parameterization directly. Extension to three-dimensional volumetric
  parameterization problem with complex boundary representation will be a part of future work.
\end{enumerate}

\section*{Acknowledgment}

This research was supported by Zhejiang Provincial Natural Science Foundation of China under Grant
Nos. LR16F020003, LQ16F020005, the National Nature Science
Foundation of China under Grant Nos. 61472111, 61602138, 61502130,
and the Open Project Program of the State
Key Lab of CAD\&CG (A1703), Zhejiang University.

St\'ephane Bordas also thanks partial funding for his time provided by
the European Research Council Starting Independent Research Grant (ERC
Stg grant agreement No. 279578) `` RealTCut Towards real time multi-scale
simulation of cutting in non-linear materials with applications to
surgical simulation and computer guided surgery ''. St\'ephane Bordas
is also grateful for the support of the Fonds National de la Recherche
Luxembourg FWO-FNR grant INTER/FWO/15/10318764.

\section*{Appendix I. Conversion of  multiply-connected region}

In this part, an approach for converting multiply-connected region
into simply-connected region will be described.

If the region bounded by the given boundary curves is not a
simply-connected region, that is, it is  a
multiply-connected region, we have to convert
it into a simply-connected region. Suppose that the given
multiply-connected region has one outer boundary $\emph{\textbf{M}}_1=(\emph{\textbf{V}}_1,\emph{\textbf{E}}_1)$ and  one
interior boundary$\emph{\textbf{M}}_2=(\emph{\textbf{V}}_2,\emph{\textbf{E}}_2)$, then we find the
vertex  $\emph{\textbf{v}}_i \in \emph{\textbf{V}}_1=\{\emph{\textbf{v}}_0,\emph{\textbf{v}}_1,\cdots,\emph{\textbf{v}}_{n_1-1} \}$, $\emph{\textbf{u}}_j \in \emph{\textbf{U}}_2=
\{\emph{\textbf{u}}_0,\emph{\textbf{u}}_1,\cdots,\emph{\textbf{u}}_{n_2-1} \}$ such that the distance $\ell_0$ between
$\emph{\textbf{u}}_j$ and $\emph{\textbf{v}}_i$ is minimal among all the distance between two points
from the interior boundary to outer boundary. If we that is ,
\begin{equation}
\ell_0= \min \{\ell(\emph{\textbf{v}}_p, \emph{\textbf{u}}_q) \}, \emph{\textbf{v}}_p \in \emph{\textbf{V}}_1, \emph{\textbf{u}}_q \in \emph{\textbf{V}}_2.
\end{equation}
in which $\ell(\emph{\textbf{v}}_p, \emph{\textbf{u}}_q)$ is the distance between the vertex $\emph{\textbf{v}}_p$ and
$\emph{\textbf{u}}_q$. In order to generate a uniform quadrilateral decomposition,
some extra points should be added between the vertex $\emph{\textbf{u}}_j$ and
$\emph{\textbf{v}}_i$. The number $\nu$ of added vertex is computed as follows:
\begin{equation}
\nu=\lceil \frac{(n_1+n_2)\times
  \ell_0}{\sum\limits_{p=0}^{n_1-2}\ell(\emph{\textbf{v}}_p,\emph{\textbf{v}}_{p+1})+ \sum\limits_{q=0}^{n_2-2}
  \ell(\emph{\textbf{u}}_q,\emph{\textbf{u}}_{q+1})} \rceil
\end{equation}
then the added $\nu$ vertices $\emph{\textbf{z}}_k$ between $\emph{\textbf{v}}_i$ and $\emph{\textbf{u}}_j$ are obtained
by linear interpolation. The final boundary vertex sequence $\emph{\textbf{V}}$  of the simply-connected
region is
\begin{eqnarray}
\emph{\textbf{V}}&=&\{\emph{\textbf{v}}_0,\emph{\textbf{v}}_1,\cdots,\emph{\textbf{v}}_i, \emph{\textbf{z}}_0,\cdots, \emph{\textbf{z}}_{\nu-1}, \emph{\textbf{u}}_i,\cdots, \emph{\textbf{u}}_j,
\emph{\textbf{z}}_{\nu-1}, \cdots, \emph{\textbf{z}}_0, \emph{\textbf{v}}_i, \cdots, \emph{\textbf{v}}_{n_1-1} \}.
\end{eqnarray}

For the domain with multiple components, we can apply above method
iteratively, that is, in every iteration, we insert edges between the
vertices that have the shortest distance between the outer and
interior boundaries.

\section*{Appendix II.   Proof of Proposition \ref{th:sncondition1}}

\begin{proof}
Set $\textit{\textbf{P}}_{ij}=(x_{ij}^1,
x_{ij}^2)$. The gradient of the energy functional
(\ref{eq:quasifunc}) with respect to the coordinates $x_{ij}^\omega$
of an unknown control points, $\omega\in\{1,2\}$, can be computed
as follows,
\begin{eqnarray*}
\frac{\partial E(\emph{\textbf{r}})}{\partial x_{ij}^\omega}& = &
\frac{1}{2}\int_U \tau_1 ( <\textit{\textbf{r}}_{u},
\frac{\partial\textit{\textbf{r}}_{u}}{\partial x_{ij}^\omega}> + <\textit{\textbf{r}}_{v},
\frac{\partial\textit{\textbf{r}}_{v}}{\partial x_{ij}^\omega}>)+\tau_2 <\frac{\partial\textit{\textbf{r}}_{uu}}{\partial
x_{ij}^\omega},\textit{\textbf{r}}_{uu}> \\ & & + 2
\tau_2 <\frac{\partial\textit{\textbf{r}}_{uv}}{\partial
  x_{ij}^\omega},\textit{\textbf{r}}_{uv}>+\tau_2 <\frac{\partial\textit{\textbf{r}}_{vv}}{\partial
x_{ij}^\omega},\textit{\textbf{r}}_{vv}> dudv
\end{eqnarray*}
From the derivatives of the \Bezier surface over rectangular domain,
we have
\begin{eqnarray}
&&\emph{\textbf{r}}_{u}(u,v)=n\sum\limits_{k=0}^{n-1}\sum\limits_{l=0}^{n}{B_{k}^{n-1}(u)B_{l}^{n}(v)\Delta^{1,0}\textit{\textbf{P}}_{kl}},\label{eq:ru}\\
&&
\emph{\textbf{r}}_{v}(u,v)=n\sum\limits_{k=0}^{n}\sum\limits_{l=0}^{n-1}{B_{k}^{n}(u)B_{l}^{n-1}(v)\Delta^{0,1}\textit{\textbf{P}}_{kl}} \label{eq:rv}  \\
&&\emph{\textbf{r}}_{uu}(u,v)=n(n-1)\sum\limits_{k=0}^{n-2}\sum\limits_{l=0}^{n}{B_{k}^{n-2}(u)B_{l}^{n}(v)\Delta^{2,0}\textit{\textbf{P}}_{kl}},\label{eq:ruu}\\
&&\emph{\textbf{r}}_{uv}(u,v)=n^2\sum\limits_{k=0}^{n-1}\sum\limits_{l=0}^{n-1}{B_{k}^{n-1}(u)B_{l}^{n-1}(v)\Delta^{1,1}\textit{\textbf{P}}_{kl}},\label{eq:ruv}\\
&&\emph{\textbf{r}}_{vv}(u,v)=n(n-1)\sum\limits_{k=0}^{n}\sum\limits_{l=0}^{n-2}{B_{k}^{n}(u)B_{l}^{n-2}(v)\Delta^{0,2}\textit{\textbf{P}}_{kl}},\label{eq:rvv}
\end{eqnarray}
Hence,
\begin{eqnarray}
\frac{\partial \emph{\textbf{r}}_{u}}{\partial x_{ij}^\omega}
&=&n\left(B_{i-1}^{n-1}(u)-B_{i}^{n-1}(u)\right)
\textbf{\emph{e}}^\omega, \quad \label{eq:rux}
\frac{\partial \emph{\textbf{r}}_{v}}{\partial x_{ij}^\omega}
 =n\left(B_{j-1}^{n-1}(v)-B_{j}^{n-1}(v)\right) \textbf{\emph{e}}^\omega,\label{eq:rvx} \\
\frac{\partial \emph{\textbf{r}}_{uu}}{\partial x_{ij}^\omega}
&=&n(n-1)\left(B_{i-2}^{n-2}(u)-2B_{i-1}^{n-2}(u)+B_{i}^{n-2}(u)\right)
B_j^{m}(v)\textbf{\emph{e}}^\omega,\label{eq:ruux}\\
\frac{\partial \emph{\textbf{r}}_{uv}}{\partial x_{ij}^\omega} &=&n^2\left(B_{i-1}^{n-1}(u)-B_{i}^{n-1}(u)\right)\left(B_{j-1}^{m-1}(v)-B_{j}^{m-1}(v)\right)\textbf{\emph{e}}^\omega,\label{eq:ruvx}\\
\frac{\partial \emph{\textbf{r}}_{vv}}{\partial
x_{ij}^\omega}&=&n(n-1)B_i^{n}(u)\left(B_{j-2}^{n-2}(v)-2B_{j-1}^{n-2}(v)+B_{j}^{n-2}(v)\right
)\textbf{\emph{e}}^\omega,\label{eq:rvvx}
\end{eqnarray}
where $\textbf{\emph{e}}^1=(1,0),
\textbf{\emph{e}}^2=(0,1)$.

From (\ref{eq:ru}), (\ref{eq:rv}),(\ref{eq:ruu}), (\ref{eq:ruv}), (\ref{eq:rvv}),(\ref{eq:rux}),(\ref{eq:ruux}), (\ref{eq:ruvx}) and
(\ref{eq:rvvx}), it follows that
\begin{eqnarray*}
\frac{\partial E(\emph{\textbf{r}})}{\partial x_{ij}^\omega}&& =
\frac{1}{2} \tau_1 \int_U    n\left(B_{i-1}^{n-1}(u)-B_{i}^{n-1}(u)\right) < \textbf{\emph{e}}^\omega,\textit{\textbf{r}}_{u}> dudv \\
& & +\frac{1}{2} \tau_1 \int_U
n\left(B_{j-1}^{n-1}(v)-B_{j}^{n-1}(v)\right)<\textit{\textbf{r}}_{v},\textbf{\emph{e}}^\omega> dudv\\
& & +\frac{1}{2} \tau_2 \int_U
n(n-1)\left(B_{i-2}^{n-2}(u)-2B_{i-1}^{n-2}(u)+B_{i}^{n-2}(u)\right) B_j^{m}(v) < \textbf{\emph{e}}^\omega,\textit{\textbf{r}}_{uu}> dudv\\
& & +\tau_2 \int_U
n^2\left(B_{i-1}^{n-1}(u)-B_{i}^{n-1}(u)\right)\left(B_{j-1}^{m-1}(v)-B_{j}^{m-1}(v)\right)<\textit{\textbf{r}}_{uv},\textbf{\emph{e}}^\omega> dudv\\
& & +\frac{1}{2}\tau_2 \int_U
n(n-1)B_i^{n}(u)\left(B_{j-2}^{n-2}(v)-2B_{j-1}^{n-2}(v)+B_{j}^{n-2}(v)\right
)<\textit{\textbf{r}}_{vv}, \textbf{\emph{e}}^\omega> dudv
\end{eqnarray*}

The energy functional (\ref{eq:quasifunc}) has an extreme if and only if
$\frac{\D\partial E(\emph{\textbf{r}})}{\D \partial
x_{ij}^\omega}=0$, $\omega\in\{1,2,3\}$.  After some simple
computation by using the product formula of two
Bernstein polynomials in Proposition. \ref{prop:prop} and the integration formula
of Bernstein polynomials in Eq. (\ref{eq:integration}), Eq. (\ref{eqn:result}) can be obtained. Thus the proof is completed.
\end{proof}

\bibliographystyle{abbv}

\end{document}